\newcommand{\appsec}{
\renewcommand{\thesubsection}{\Alph{subsection}}
}
\def\R{\mathbb{R}}
\def\supp{\mathop{\rm supp}}
\def\argmin{\mathop{\rm arg\, min}}
\def\A{{\mathcal A}}
\def\B{{\mathcal B}}
\def\P{{\mathcal P}}
\def\Q{{\mathcal Q}}
\def\M{{\mathcal M}}
\def\sq{{\mathsf q}}
\def\sX{{\mathsf X}}
\def\sY{{\mathsf Y}}
\def\sZ{{\mathsf Z}}
\def\sM{{\mathsf M}}
\def\sE{{\mathsf E}}
\newtheorem{theorem}{Theorem}
\newtheorem{corollary}{Corollary}
\newtheorem{proposition}{Proposition}
\newtheorem{lemma}{Lemma}
\theoremstyle{remark}
\newtheorem{remark}{Remark}
\begin{document}
\sloppy
\title{Randomized Quantization and Source Coding  with Constrained Output Distribution
\thanks{The authors are with the Department of Mathematics and Statistics,
     Queen's University, Kingston, ON, Canada, 
     Email: \{nsaldi,linder,yuksel\}@mast.queensu.ca}
\thanks{This research was supported in part by
  the Natural Sciences and
  Engineering Research Council (NSERC) of Canada.}
\thanks{The material in this paper was presented in part at the 2013 IEEE
  International Symposium on Information Theory, Istanbul, Turkey, July 2013.}
\thanks{\copyright 2014 IEEE. Personal use
of this material is permitted.  However, permission to use this material for
any other purposes must be obtained from the IEEE by sending a request to
pubs-permissions@ieee.org.}
}
\author{Naci Saldi, Tam\'{a}s Linder, Serdar Y\"uksel}  
\maketitle
\begin{abstract}
  This paper studies fixed-rate randomized vector quantization under the
  constraint that the quantizer's output has a given fixed probability
  distribution. A general representation of randomized quantizers that includes
  the common models in the literature is introduced via appropriate mixtures of
  joint probability measures on the product of the source and reproduction
  alphabets.  Using this representation and results from optimal transport
  theory, the existence of an optimal (minimum distortion) randomized quantizer
  having a given output distribution is shown under various
  conditions. For sources with densities and the mean square distortion measure,
  it is shown that this optimum can be attained by randomizing quantizers having
  convex codecells. For stationary and memoryless source and output
  distributions a rate-distortion theorem is proved, providing a single-letter
  expression for the optimum distortion in the limit of large block-lengths.

\end{abstract}

\begin{IEEEkeywords}
Source coding, quantization,  randomization, random coding, output-constrained
distortion-rate function. 
\end{IEEEkeywords}

\section{Introduction}

A quantizer maps a source (input) alphabet into a finite collection of points
(output levels) from a reproduction alphabet. A quantizer's performance is
usually characterized by its rate, defined as the logarithm of the number of
output levels, and its expected distortion when the input is a random
variable. One usually talks about randomized quantization when the quantizer
used to encode the input signal is randomly selected from a given collection of
quantizers. Although introducing randomization in the quantization procedure
does not improve the optimal rate-distortion tradeoff, randomized quantizers may
have other advantages over their deterministic counterparts. 

In what appears to be the first work explicitly dealing with randomized quantization,
Roberts \cite{Rob62} found that adding random noise to an image before
quantization and subtracting the noise before reconstruction may
result in a perceptually more pleasing image.  Schuchman \cite{Sch64}
and Gray and Stockham \cite{GrSt93} analyzed versions of such so
called \emph{dithered} scalar quantizers where random noise (dither)
is added to the input signal prior to uniform quantization. If the
dither is subtracted after the quantization operation, the procedure
is called subtractive dithering; otherwise it is called
non-subtractive dithering. Under certain conditions, dithering results
in uniformly distributed quantization noise that is independent of the
input \cite{Sch64,GrSt93}, which allows a simple modeling of the
quantization process by an additive noise channel.  In the information
theoretic literature the properties of entropy coded dithered lattice
quantizers have been extensively studied. For example,  such quantizers have
been used  to provide achievable bounds on the
performance of universal lossy compression systems by Ziv \cite{Ziv85}
and Zamir and Feder \cite{ZaFe92,ZaFe96b}.  Recently Akyol and Rose
\cite{AkRo12-2}, \cite{AkRo13},  introduced a class of randomized \emph{nonuniform}
scalar quantizers obtained via applying companding to a dithered
uniform quantizer and investigated optimality conditions for the
design of such quantizers. One should also note that the random codes used to prove
the achievability part of Shannon's rate-distortion theorem \cite{Sha59} can
also be viewed as randomized quantizers.

Dithered uniform/lattice and companding quantizers, as well as random
rate-distortion codes, pick a random quantizer from a ``small'' structured
subset of all possible quantizers. Such special randomized quantizers may be
suboptimal for certain tasks and one would like to be able to work with more
general (or completely general) classes of randomized quantizers. For example,
Li \emph{et al$.$} \cite{LiKlKl10} and Klejsa \emph{at al$.$} \cite{KlZhLiKl13}
considered \emph{distribution-preserving} 
dithered scalar quantization, where the quantizer output is restricted to have
the same distribution as the source, to improve the perceptual quality of mean
square optimal quantizers in audio and video coding. Dithered quantizers or
other structured randomized quantizers classes likely do not provide optimal
performance in this problem. In an unpublished work \cite{LiKlKl11} the same
authors considered more general distribution-preserving randomized vector
quantizers and lower bounded the minimum distortion achievable by such schemes
when the source is stationary and memoryless. 

In this paper we propose a general model which formalizes the notion of randomly
picking a quantizer from the set of \emph{all} quantizers with a given number of
output levels. Note that this set is much more complex and less structured then
say the \emph{parametric} family of all quantizers having a given number of
convex codecells.  Inspired by work in stochastic control (e.g., \cite{Bor93})
our model represents the set of all quantizers acting on a given source as a
subset of all joint probability measures on the product of the source and
reproduction alphabets. Then a randomized quantizer corresponds to a mixture of
probability measures in this subset.  The usefulness of the model is
demonstrated by rigorously setting up a generalization of the
distribution-preserving quantization problem where then the goal is to find a
randomized quantizer minimizing the distortion under the constraint that the
output has a given distribution (not necessarily that of the source).  We show
that under quite general conditions an optimal solution (i.e., an optimal
randomized quantizer) exists for this generalized problem.  We also consider a
relaxed version of the output distribution constraint where the output
distribution is only required to belong to some neighborhood (in the weak
topology) of a target distribution. For this problem we show the optimality of
randomizing among finitely many quantizers.  While for fixed quantizer dimension
we can only provide existence results, for stationary and memoryless source and
output distributions we also develop a rate-distortion theorem which identifies
the minimum distortion in the limit of large block lengths in terms of the
so-called output-constrained distortion-rate function. This last result solves a
general version of a problem that was left open in \cite{LiKlKl11}.

The rest of the paper is organized as follows. In Section~\ref{sec2} we
introduce our general model for randomized quantization and show its equivalence
to other models more common in the information theoretic literature.  In
Section~\ref{sec3} the randomized quantization problem with an output
distribution constraint is formulated and the existence of an optimal solution
is shown using optimal transport theory.  For the special but important case of
sources with densities and the mean square distortion measure, we show that this
optimum can be attained by randomizing quantizers having convex codecells.  In
Section~\ref{sec4} a relaxed version of output distribution constraint is
considered where finitely randomized quantizers are optimal.  In
Section~\ref{sec5} we present and prove a rate-distortion theorem for fixed-rate
lossy source coding with an output distribution constraint.  Many of the proofs
are quite technical and they are relegated to the Appendix.

\section{Models of Randomized Quantization}
\label{sec2}

\subsection{Notation}

In this paper $\sX$ denotes the input alphabet and $\sY$ is the reconstruction
(output) alphabet. Throughout the paper we set $\sX =\sY=\R^n$, the
$n$-dimensional Euclidean space for some $n\ge 1$, although most of the results
hold in more general settings; for example if the input and output alphabets are
Polish (complete and separable metric) spaces.  If $\sE$ is a metric space,
$\B(\sE)$ and $\P(\sE)$ will denote the Borel $\sigma$-algebra on $\sE$ and the
set of probability measures on $(\sE,\B(\sE))$, respectively.  It will be tacitly
assumed that any metric space is equipped with its Borel $\sigma$-algebra and
all probability measures on such spaces will be Borel measures. The product of
metric spaces will be equipped with the product Borel $\sigma$-algebra. Unless
otherwise specified, the term ``measurable'' will refer to Borel measurability.
We always equip $\P(\sE)$ with the Borel $\sigma$-algebra $\B(\P(\sE))$
generated by the topology of weak convergence \cite{Bil99}.

\subsection{Three models of randomized quantization}

An $M$-level quantizer ($M$ is a positive integer) from the input alphabet
$\sX$ to the reconstruction alphabet $\sY$ is a Borel measurable function $q:\sX
\to \sY$ whose range $q(\sX)=\{q(x) : x\in \sX\} $ contains \emph{at
  most} $M$ points of $\sY$. If $\Q_M$ denotes the set of all $M$-level
quantizers, then our definition implies $\Q_M\subset \Q_{M+1}$ for all
$M\geq 1$. 

In what follows we define three models of randomized quantization; two that are
commonly used in the source coding literature,  and  a third abstract model
that will nevertheless prove  very useful. 

\vspace{-0.3cm}

\subsection*{Model 1}
One general model of $M$-level randomized quantization that is often used in the
information theoretic literature is depicted in Fig.~\ref{fig1}.
\begin{figure}[h]
\centering
\tikzstyle{int}=[draw, fill=white!20, minimum size=2.5em]
\tikzstyle{init} = [pin edge={to-,thin,black}]
\begin{tikzpicture}[node distance=5cm,auto,>=latex']
    \node [int, pin={[init]above:$Z$}] (a) {Encoder};
    \node (b) [left of=a,node distance=1.6cm, coordinate] {a};
    \node [int, pin={[init]above:$Z$}] (c) [right of=a] {Decoder};
    \node [coordinate] (end) [right of=c, node distance=1.6cm]{};
    \path[->] (b) edge node {$X$} (a);
    \path[-] (a) edge node {$I\in \{1,\ldots,M\}$} (c);
    \draw[->] (c) edge node {$Y$} (end) ;
\end{tikzpicture}
\caption{Randomized source code (quantizer)}
\label{fig1}
\end{figure}
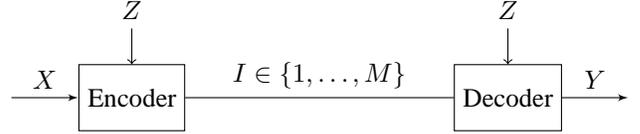

Here $X$ and $Y$ are the source and output random variables taking values in
$\sX$ and $\sY$, respectively. The index $I$ takes values in $\{1,\ldots,M\}$,
and $Z$ is a $\sZ=\R^m$-valued random variable which is independent of $X$ and
which is assumed to be available at both the encoder and the decoder. The
encoder is a measurable function $e:\sX\times\sZ\rightarrow\{1,\ldots,M\}$ which
maps $(X,Z)$ to $I$, and the decoder is a measurable function
$d:\{1,\ldots,M\}\times \sZ\rightarrow\sY$ which maps $(I,Z)$ to $Y$. For a
given source distribution, in a probabilistic sense a Model~1 quantizer is
determined by the triple $(e,d,\nu)$, where $\nu$ denotes the distribution of
$Z$.

Note that codes used in the random coding proof of the forward part of Shannon's
rate distortion theorem can be realized as Model~1 quantizers. In this case $Z$
may be taken to be the random codebook consisting of $M=2^{nR}$ code vectors of
dimension $n$, each drawn independently from a given distribution. This $Z$ can
be represented as an $m= nM$-dimensional random vector that is independent of
$X$. The encoder outputs the index $I$ of the code vector $Y$ in the codebook
that best matches $X$ (in distortion or in a joint-typicality sense) and the
decoder can reconstruct this $Y$ since it is a function of $I$ and
$Z$.

\vspace{-0.3cm}

\subsection*{Model 2}
 Model~1 can be collapsed into a more tractable equivalent model. In
 this model, a randomized quantizer is a pair $(\sq,\nu)$, where
 $\sq:\sX\times \sZ \to \sY$ is a measurable mapping such that
 $\sq(\,\cdot\,,z)$ is an $M$-level quantizer for all $z\in \sZ$ and $\nu$
 is a probability measure on $\sZ$,  the distribution of the
 randomizing random variable $Z$. Here $\sq$ is the composition of
 the encoder and the decoder in Model~1: $\sq(x,z)=d(e(x,z),z)$.

Model~2 quantizers include, as special cases, subtractive and non-subtractive
dithering of $M$-level uniform  quantizers, as well as the dithering of
non-uniform quantizers. For
example, if $n=m=1$ and $q_u$ denotes a uniform quantizer, then 
\[
\sq(x,z)=q_u(x+z)-z
\]
is a dithered uniform quantizer using subtractive dithering, 
\[
\sq(x,z)=q_u(x+z)
\]
is a dithered uniform quantizer with non-subtractive dithering, and with appropriate mappings $g$ and $h$, 
\[
\sq(x,z)=h\bigl(q_u(g(x)+z)-z\bigr).
\]
is a dithered non-uniform quantizer (e.g., \cite{LiKlKl10} and
\cite{AkRo13}). We note that dithered lattice quantizers
\cite{Ziv85,ZaFe92,Zam13} can also be considered as Model~2 type randomized
quantizers when the source has a bounded support (so that with probability one
only finitely many lattice points can occur as output points).

Let $\rho: \sX\times \sY\to \R$ be a nonnegative
measurable function, called the distortion measure. From now on we
assume that the source $X$ has distribution $\mu$ (denoted as $X\sim
\mu$). The distortion associated
with Model~2 quantizer $(\sq,\nu)$ or with Model~1 quantizer $(e,d,\nu)$, with
 $\sq(x,z)=d(e(x,z),z)$,   is the expectation
\begin{align}
\label{neweq1}
L(\sq,\nu)&=\int_{\sZ} \int_{\sX} \rho(x,\sq(x,z))\, \mu(dx) \nu(dz) \\
        &=E\bigl[\rho(X,\sq(X,Z))\bigr]  \nonumber
\end{align}
where $Z\sim \nu$ is independent of $X$.

\vspace{-0.3cm}

\subsection*{Model~3}

In this model, instead of considering quantizers as functions that map $\sX$
into a finite subset of $\sY$, first we represent them as special probability
measures on $\sX\times \sY$ (see, e.g,
\cite{BoMiSaTa05},\cite{YuLi12},\cite{Kre11},\cite{GrLu00}).  This leads to an
alternative representation where a randomized quantizer is identified with a
mixture of probability measures. In certain situations
the space of these ``mixing probabilities'' representing \emph{all} randomized
quantizers will turn out to be more tractable than considering the quite
unstructured space of all Model~1 triples $(e,d,\nu)$ or Model~2 pairs
$(\sq,\nu)$.

A \emph{stochastic kernel} \cite{HeLa96} (or \emph{regular conditional
  probability} \cite{Dud89}) on $\sY$ given $\sX$ is a function $Q(dy|x)$ such
that for each $x\in\sX$, $Q(\,\cdot\,|x)$ is a probability measure on $\sY$, and for
each Borel set $B \subset \sX$, $Q(B|\,\cdot\,)$ is a measurable function from $\sX$
to $[0,1]$. A quantizer $q$ from $\sX$ into $\sY$ can be represented as a
stochastic kernel $Q$ on $\sY$ given $\sX$ by letting \cite{YuLi12}, \cite{BoMiSaTa05},
\[
Q(dy|x)=\delta_{q(x)}(dy),
\]
where $\delta_u$
denotes the point mass at $u$: $\delta_u(A)=1$ if $u\in A$ and $\delta_u(A)=0$
if $u\notin A$ for any Borel set $A\subset \sY$.

If we fix the distribution $\mu$ of the source $X$, we can also represent $q$ by
the probability measure $v(dx\,
dy)=\mu(dx)\delta_{q(x)}(dy)$  on $\sX\times \sY$. Thus we can identify the set
$\Q_M$ of all $M$-level quantizers from $\sX$ to $\sY$ with the following subset
of $\mathcal{P}(\sX\times \sY)$:
\begin{eqnarray}
\label{eqquant}
\lefteqn{ \Gamma_{\mu}(M)} \\
&=& \bigl\{v \in \mathcal{P}(\sX\times \sY):
v(dx\,dy)\!=\!\mu(dx)\delta_{q(x)}(dy), \; q\in \Q_M \bigr\}. \nonumber
\end{eqnarray}
Note that $q \mapsto \mu(dx)\delta_{q(x)}(dy)$ maps $\Q_M$
onto $\Gamma_{\mu}(M)$, but this mapping is one-to-one  only if we  consider
equivalence classes of quantizers in $\Q_M$ that are equal $\mu$ almost
everywhere ($\mu$-a.e).

We equip $\mathcal{P}(\sX\times \sY)$ with the topology of weak convergence
(weak topology) which is metrizable with the Prokhorov metric, making
$\mathcal{P}(\sX\times \sY)$
into a Polish space \cite{Bil99}. The following lemma is proved in the Appendix~\ref{app1}.

\begin{lemma}
$\Gamma_{\mu}(M)$ is a Borel subset of  $\mathcal{P}(\sX\times \sY)$.
\label{theo1}
\end{lemma}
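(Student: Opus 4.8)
The plan is to exhibit $\Gamma_\mu(M)$ as the injective continuous image of a Borel subset of a Polish space and then invoke the Lusin--Souslin theorem: an injective Borel map between Polish spaces carries Borel sets to Borel sets (see, e.g., \cite{Dud89}). Since $\P(\sX\times\sY)$ is Polish, as recalled above, it suffices to find a suitable Polish parameter space and map.

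I would take as parameter space $L^0(\mu;\sY)$, the set of $\mu$-equivalence classes of Borel measurable maps $q\colon\sX\to\sY$ equipped with the topology of convergence in $\mu$-measure, metrized by $d(q_1,q_2)=\int_{\sX}\min\{1,\|q_1(x)-q_2(x)\|\}\,\mu(dx)$; since $\sY=\R^n$ is Polish and $\mu$ is a Borel probability measure on the Polish space $\sX$, this space is itself Polish. Define $\Psi\colon L^0(\mu;\sY)\to\P(\sX\times\sY)$ by $\Psi(q)=\mu(dx)\,\delta_{q(x)}(dy)$; this is well defined on equivalence classes because the measure depends on $q$ only up to $\mu$-null sets. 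Two facts about $\Psi$ are needed. First, $\Psi$ is continuous: if $q_k\to q$ in measure and $F\in C_b(\sX\times\sY)$, then from every subsequence one can extract a further subsequence along which $q_k\to q$ $\mu$-a.e., hence $F(\,\cdot\,,q_k(\,\cdot\,))\to F(\,\cdot\,,q(\,\cdot\,))$ $\mu$-a.e.\ and, by bounded convergence, $\int F(x,q_k(x))\,\mu(dx)\to\int F(x,q(x))\,\mu(dx)$; as this limit is the same along all subsequences, the whole sequence converges, i.e.\ $\Psi(q_k)\to\Psi(q)$ weakly. Second, $\Psi$ is injective: if $\Psi(q_1)=\Psi(q_2)$, then evaluating this common measure on the Borel graph $\{(x,y):y=q_1(x)\}$ gives $\mu(\{x:q_1(x)=q_2(x)\})=1$, so $q_1=q_2$ in $L^0(\mu;\sY)$.

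I would then identify $\Gamma_\mu(M)=\Psi(\mathcal{K}_M)$, where $\mathcal{K}_M\subset L^0(\mu;\sY)$ is the set of classes $[q]$ whose pushforward $\mu\circ q^{-1}$ is supported on at most $M$ points: the inclusion ``$\subseteq$'' holds because any such $q$ may be redefined on a $\mu$-null Borel set to take at most $M$ values \emph{everywhere}, yielding a genuine element of $\Q_M$ with the same image under $\Psi$, and ``$\supseteq$'' is immediate. It remains to show $\mathcal{K}_M$ is Borel. The map $q\mapsto\mu\circ q^{-1}$ is continuous from $L^0(\mu;\sY)$ to $\P(\sY)$ (it is $\Psi$ composed with the continuous marginal projection $\P(\sX\times\sY)\to\P(\sY)$), so it is enough that $S_M:=\{\nu\in\P(\sY):\nu\text{ is supported on at most }M\text{ points}\}$ be Borel in $\P(\sY)$. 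This follows from the identity $S_M=\{\nu:\nu^{\otimes(M+1)}(D)=0\}$, where $D=\{(y_1,\dots,y_{M+1}):y_i\ne y_j\text{ for }i\ne j\}$ is open in $\sY^{M+1}$: a measure with at most $M$ atoms a.s.\ cannot yield $M+1$ pairwise distinct i.i.d.\ samples, while any $\nu\notin S_M$ carries $M+1$ pairwise disjoint Borel sets of positive measure. Since $\nu\mapsto\nu^{\otimes(M+1)}$ is weakly continuous and $\lambda\mapsto\lambda(D)$ is lower semicontinuous for open $D$, the map $\nu\mapsto\nu^{\otimes(M+1)}(D)$ is lower semicontinuous, so its zero set $S_M$ is closed, hence Borel; therefore $\mathcal{K}_M$ is Borel. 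Applying the Lusin--Souslin theorem to the continuous injection $\Psi$ and the Borel set $\mathcal{K}_M$ shows that $\Gamma_\mu(M)=\Psi(\mathcal{K}_M)$ is Borel.

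The main obstacle is setting up the parametrization cleanly --- verifying that $L^0(\mu;\sY)$ is Polish, that $\Psi$ is continuous and injective, and that the ``at most $M$ levels'' restriction cuts out a Borel subset. What looks most delicate a priori, namely that the measures in $\Gamma_\mu(M)$ are ``deterministic'' (their $\sX$-disintegrations are a.s.\ point masses), is never confronted directly: parametrizing through functions rather than through stochastic kernels builds this property in. An alternative would be to describe $\Gamma_\mu(M)$ as the intersection of $\{v:v(\,\cdot\,\times\sY)=\mu\}$, $\{v:\text{the }\sY\text{-marginal of }v\text{ lies in }S_M\}$, and the set of deterministic couplings, but proving the last set Borel reduces to essentially the same argument.
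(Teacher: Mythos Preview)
Your proof is correct and takes a genuinely different route from the paper's. The paper invokes a result of Borkar stating that the set $\Delta_{\mu,e}$ of extreme points of $\Delta_\mu=\{v:v(\,\cdot\,\times\sY)=\mu\}$ is Borel and coincides with the deterministic couplings (those whose disintegration is $\mu$-a.e.\ a Dirac mass); it then writes $\Gamma_\mu(M)$ as the intersection of $\Delta_{\mu,e}$ with the set cut out by a countable ball-covering scheme, $\bigcap_k\bigcup_{n_1,\dots,n_M}\{v\in\Delta_{\mu,e}:v(\sX\times\bigcup_i B(y_{n_i},1/k))=1\}$, and verifies this intersection equals $\Gamma_\mu(M)$ by hand. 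The key difference is how determinism is handled: the paper outsources it to Borkar's extreme-point lemma, whereas you sidestep it entirely by parametrizing through $L^0(\mu;\sY)$ and invoking Lusin--Souslin on the injective continuous map $\Psi$. Your characterization $S_M=\{\nu:\nu^{\otimes(M+1)}(D)=0\}$ is also cleaner than the paper's explicit covering construction and yields closedness of $S_M$ rather than mere Borel measurability. Each approach trades one nontrivial external input for another (Borkar's lemma versus Lusin--Souslin), so neither is strictly more elementary. One cosmetic point: in your identification $\Gamma_\mu(M)=\Psi(\mathcal{K}_M)$, the redefinition argument you give actually establishes $\Psi(\mathcal{K}_M)\subseteq\Gamma_\mu(M)$, not the inclusion you label ``$\subseteq$''; the labels are swapped, though both directions are correctly justified.
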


Now we are ready to introduce Model~3 for randomized
quantization. Let $P$ be a probability measure on
$\mathcal{P}(\sX\times \sY)$ which is supported on $\Gamma_{\mu}(M)$,
i.e., $P(\Gamma_{\mu}(M))=1$. Then $P$ induces a ``randomized quantizer'' $v_P\in
\mathcal{P}(\sX\times \sY)$ via
\[
v_P(A\times B)=\int_{\Gamma_{\mu}(M)} v(A\times B)\, P(dv)
\]
for Borel sets $A\subset \sX$ and $B\subset \sY$,
which we abbreviate to
\begin{equation}
v_P=\int_{\Gamma_{\mu}(M)} v\, P(dv).
\label{neweq2}
\end{equation}
Since each $v$ in $\Gamma_{\mu}(M)$ corresponds to a quantizer with
input distribution $\mu$, $P$ can be thought as a probability measure
on the set of all $M$-level quantizers $\Q_M$.

Let $\P_{0}(\Gamma_{\mu}(M))$ denote the set of probability measures on
$\P(\sX\times \sY)$ supported on $\Gamma_{\mu}(M)$. 
%% \footnote{Note that $\P_{0}(\Gamma_{\mu}(M))$ can be identified with
%% $\P(\Gamma_{\mu}(M))$, but strictly speaking the latter is not a subset of
%% $\P(\sX\times\sY)$.} 
We define
the set of $M$-level Model~3 randomized quantizers as
\begin{eqnarray}
\label{neweq4} 
\lefteqn{\Gamma_{\mu}^{\rm R}(M)} \\
\mbox{} \!\!\!\!&=&\!\!\!\Bigl\{v_P\in \mathcal{P}(\sX\times \sY):
v_P=\int_{\Gamma_{\mu}(M)}\!\!\!\!\!\!\!\!v P(dv), \;P\in
\P_{0}(\Gamma_{\mu}(M))\Bigr\}.  \nonumber
\end{eqnarray}

Note that if $v_P\in \Gamma_{\mu}^{\rm R}(M)$ is a Model~3 quantizer, then the
$\sX$-marginal of $v_P$ is equal to $\mu$, and if $X$ and $Y$ are  random vectors
(defined on the same probability space)  with joint distribution $v_P$, then they 
provide a stochastic representation of the random quantizer's input and output,
respectively. Furthermore, the distortion associated with $v_P$ is 
\begin{align*}
 L(v_P) &\coloneqq    \int_{\sX\times \sY} \rho(x,y)v_P(dx\,dy) \\*
  &= \int_{\Gamma_{\mu}(M)}\int_{\sX\times \sY} \rho(x,y)
v(dx\,dy)P(dv) \\
&=  E\bigl[\rho(X,Y)\bigr] .
\end{align*}

\subsection{Equivalence of models}

Here we show that the three models of randomized quantization are essentially
equivalent. As before, we assume that the source distribution $\mu$ is fixed.
The following two results are proved in Appendix~\ref{app2} and
Appendix~\ref{app2a}, respectively.

\begin{theorem}
\label{newtheo1}
For each Model~2 randomized quantizer $(\sq,\nu)$
there exists a Model~3 randomized quantizer $v_P\in \Gamma_{\mu}^{\rm R}(M$)
such that $(X,Y)=(X,\sq(X,Z))$ has distribution $v_P$. Conversely, for
any  $v_P\in \Gamma_{\mu}^{\rm R}(M)$ there exists a Model~2
randomized quantizer $(\sq,\nu)$ such that $(X,\sq(X,Z))\sim v_P$.
\end{theorem}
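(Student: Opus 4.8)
The plan is to prove the two implications separately. The direction from Model~2 to Model~3 is a push-forward construction that amounts to bookkeeping; the direction from Model~3 to Model~2 requires recovering the (only $\mu$-a.e.\ defined) quantizer $q_v$ from its joint law $v$ in a way that depends measurably on $v$, and this is where the real work lies.

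\emph{Model~2 $\Rightarrow$ Model~3.} Given $(\sq,\nu)$, for each $z\in\sZ$ set $v_z(dx\,dy):=\mu(dx)\delta_{\sq(x,z)}(dy)$; since $\sq(\,\cdot\,,z)\in\Q_M$ we have $v_z\in\Gamma_\mu(M)$. I would first check that $z\mapsto v_z$ is Borel: for bounded continuous $f$ the map $z\mapsto\int f\,dv_z=\int_\sX f(x,\sq(x,z))\,\mu(dx)$ is measurable by joint measurability of $(x,z)\mapsto f(x,\sq(x,z))$ together with Tonelli's theorem, and such $f$ generate $\B(\P(\sX\times\sY))$; combined with Lemma~\ref{theo1} this makes $z\mapsto v_z$ a Borel map into $\Gamma_\mu(M)$. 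Then $P:=\nu\circ(z\mapsto v_z)^{-1}\in\P_0(\Gamma_\mu(M))$, and the change-of-variables formula gives $v_P=\int_{\Gamma_\mu(M)}v\,P(dv)=\int_\sZ v_z\,\nu(dz)$. Evaluating on a rectangle $A\times B$ yields $v_P(A\times B)=\int_\sZ\int_A\mathbf 1_B(\sq(x,z))\,\mu(dx)\,\nu(dz)=\Pr\{X\in A,\ \sq(X,Z)\in B\}$ when $X\sim\mu$ is independent of $Z\sim\nu$, so $v_P$ is exactly the law of $(X,\sq(X,Z))$.

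\emph{Model~3 $\Rightarrow$ Model~2.} Given $v_P=\int_{\Gamma_\mu(M)}v\,P(dv)$ with $P\in\P_0(\Gamma_\mu(M))$, I would take the randomizing space to be $\Gamma_\mu(M)$ itself, which by Lemma~\ref{theo1} is a Borel subset of the Polish space $\P(\sX\times\sY)$, hence a standard Borel space; since every standard Borel space is Borel-isomorphic to a Borel subset of $\R$, one may take $\sZ=\R^m$ with $m=1$ as the definition of Model~2 requires, and set $\nu:=P$. It then suffices to build a jointly Borel map $\sq:\sX\times\Gamma_\mu(M)\to\sY$ such that $\sq(\,\cdot\,,v)\in\Q_M$ and $\mu(dx)\delta_{\sq(x,v)}(dy)=v(dx\,dy)$ for every $v$, since then $(X,\sq(X,Z))$ with $X\sim\mu$ independent of $Z\sim P$ has law $\int_{\Gamma_\mu(M)}\mu(dx)\delta_{\sq(x,v)}(dy)\,P(dv)=\int_{\Gamma_\mu(M)}v\,P(dv)=v_P$. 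To construct $\sq$, write $v(dx\,dy)=\mu(dx)\delta_{q_v(x)}(dy)$ with $q_v\in\Q_M$, and note that the $\sY$-marginal $\mu_Y^v=\mu\circ q_v^{-1}$ is purely atomic with at most $M$ atoms and that $q_v(x)$ is an atom of $\mu_Y^v$ for $\mu$-a.e.\ $x$. Using standard facts on the Borel structure of $\P(\sY)$ together with the Kuratowski--Ryll-Nardzewski selection theorem, one obtains Borel maps $y_1,\dots,y_M:\P(\sY)\to\sY$ such that $\{y_1(v),\dots,y_M(v)\}$ contains all atoms of $\mu_Y^v$ for each $v\in\Gamma_\mu(M)$. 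Next, fixing a refining sequence of finite Borel partitions $\D_k$ of $\sX$ generating $\B(\sX)$ and writing $D_k(x)$ for the cell of $\D_k$ containing $x$, set
\[
\psi_j(x,v):=\lim_{k\to\infty}\frac{v\bigl(D_k(x)\times\{y_j(v)\}\bigr)}{\mu(D_k(x))}
\]
when $\mu(D_k(x))>0$ for all $k$, and $\psi_j(x,v):=0$ otherwise. Each pre-limit term equals $\sum_{C\in\D_k}\mathbf 1_C(x)\,v(C\times\{y_j(v)\})/\mu(C)$ and is jointly Borel in $(x,v)$, so $\psi_j$ is jointly Borel; and for each fixed $v$, since the pre-limit term is the $\mu$-conditional expectation of $\mathbf 1\{q_v(x)=y_j(v)\}$ given $\sigma(\D_k)$, the martingale convergence theorem gives $\psi_j(x,v)=\mathbf 1\{q_v(x)=y_j(v)\}$ for $\mu$-a.e.\ $x$. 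Finally I would put $\sq(x,v):=y_{j^*(x,v)}(v)$ with $j^*(x,v):=\min\argmax_{1\le j\le M}\psi_j(x,v)$. By construction $\sq(\,\cdot\,,v)$ takes values in the at-most-$M$-point set $\{y_1(v),\dots,y_M(v)\}$, so $\sq(\,\cdot\,,v)\in\Q_M$ for \emph{every} $v$; and since for $\mu$-a.e.\ $x$ the maximum of $\psi_j(x,v)$ over $j$ is attained exactly at those $j$ with $y_j(v)=q_v(x)$, we get $\sq(x,v)=q_v(x)$ $\mu$-a.e., hence $\mu(dx)\delta_{\sq(x,v)}(dy)=v(dx\,dy)$.

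The main obstacle is this measurable recovery of $q_v$ from $v$: one needs the joint measurability of the differentiation limits $\psi_j(x,v)$ (the martingale/Lebesgue-differentiation step, together with joint measurability of $(v,y)\mapsto v(C\times\{y\})$ on $\P(\sX\times\sY)\times\sY$) and the measurable enumeration $y_1,\dots,y_M$ of the output atoms; the latter is precisely what lets one guarantee the hard constraint that $\sq(\,\cdot\,,v)$ be a genuine $M$-level quantizer for \emph{all} $v$, not merely $\mu$-a.e. Everything else --- the measurability of $z\mapsto v_z$, the change-of-variables identity, and the identification of the mixture with the law of $(X,\sq(X,Z))$ --- is routine given the Borel structure on $\P(\sX\times\sY)$ and Lemma~\ref{theo1}.
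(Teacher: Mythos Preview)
Your Model~2 $\Rightarrow$ Model~3 direction is essentially identical to the paper's: push forward $\nu$ along $z\mapsto v_z$ after checking measurability, then verify that the resulting mixture is the law of $(X,\sq(X,Z))$.

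For Model~3 $\Rightarrow$ Model~2, both you and the paper take $\sZ$ to be (a Borel-isomorphic copy of) $\Gamma_\mu(M)$ with $\nu=P$, and both face the same core difficulty: recovering $q_v(x)$ from $v$ in a way that is jointly Borel in $(x,v)$ while keeping $\sq(\,\cdot\,,v)\in\Q_M$ for \emph{every} $v$. The paper's solution is more abstract: it invokes a disintegration theorem (\cite[Proposition~7.27]{BeSh78}) to obtain a jointly measurable stochastic kernel $\eta(dy\,|\,x,v)$ with $v(dx\,dy)=\mu(dx)\eta(dy\,|\,x,v)$, notes that $\eta(\,\cdot\,|\,x,v)$ is a Dirac for $\mu$-a.e.\ $x$, and then sets $\tilde q_v(x)$ equal to the mass point of $\eta(\,\cdot\,|\,x,v)$ on the Borel set where $\eta$ is Dirac, and to a fixed default value elsewhere. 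Your route is more constructive: you first enumerate the (at most $M$) atoms of the $\sY$-marginal via a measurable-selection argument, and then recover the codecell indicators by martingale differentiation of $v$ along a fixed refining partition of $\sX$. This avoids the disintegration theorem but obliges you to verify the joint Borel measurability of $(v,y)\mapsto v(C\times\{y\})$ and to carry out the Kuratowski--Ryll-Nardzewski enumeration carefully (handling the case $|\supp(\mu_Y^v)|<M$ by repetition); it also needs the cosmetic fix of replacing your $\lim_k$ by $\limsup_k$ so that $\psi_j$ is everywhere defined. The paper's argument is shorter because disintegration does the heavy lifting; yours has the advantage that the output set $\{y_1(v),\dots,y_M(v)\}$ is explicit, making the verification that $\sq(\,\cdot\,,v)\in\Q_M$ for all $v$ immediate.
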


\begin{theorem} \label{newtheo2} 
Models~1 and 2 of randomized quantization are
  equivalent in the sense of Theorem~\ref{newtheo1}.
\end{theorem}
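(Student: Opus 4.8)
The plan is to prove the two implications separately; the direction from Model~1 to Model~2 is immediate, while the converse requires a measurable factorization of a Model~2 map into an encoder and a decoder.

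For Model~1 $\Rightarrow$ Model~2, given a Model~1 quantizer $(e,d,\nu)$ set $\sq(x,z):=d(e(x,z),z)$. This $\sq$ is Borel as a composition of Borel maps, and for each fixed $z$ the range $\sq(\sX,z)$ is contained in $\{d(i,z):i=1,\dots,M\}$, so $\sq(\,\cdot\,,z)\in\Q_M$ and $(\sq,\nu)$ is a Model~2 quantizer. Since $\sq(X,Z)=d(e(X,Z),Z)$ as random variables, the pair $(X,\sq(X,Z))$ has exactly the input--output law of the original Model~1 quantizer, which is even stronger than what Theorem~\ref{newtheo1} requires.

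For Model~2 $\Rightarrow$ Model~1, given $(\sq,\nu)$ I would keep the same randomizing variable $Z\sim\nu$ and factor $\sq$ through the index set $\{1,\dots,M\}$. Observe first that a Model~2 quantizer has at most $M$ codecells, so the union of those codecells of $\mu$-measure zero is $\mu$-null and hence irrelevant to the joint law of $(X,\sq(X,Z))$; it therefore suffices to account for the reproduction points whose codecells carry positive $\mu$-mass. These are exactly the atoms of the discrete output kernel $\lambda_z:=\mu\circ\sq(\,\cdot\,,z)^{-1}\in\P(\sY)$, which has at most $M$ atoms and depends measurably on $z$. Letting $d_1(z),\dots,d_M(z)$ denote the locations of the atoms of $\lambda_z$ listed in order of decreasing mass with lexicographic tie-breaking (an arbitrary fixed default being used past the number of atoms), these are Borel functions of $z$ and $\sq(x,z)\in\{d_1(z),\dots,d_M(z)\}$ for $\mu$-a.e.\ $x$, for every $z$. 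Defining the decoder by $d(i,z):=d_i(z)$ and the encoder by $e(x,z):=\min\{i:d_i(z)=\sq(x,z)\}$ when this set is nonempty and $e(x,z):=1$ otherwise, one checks that $e$ and $d$ are Borel and that $d(e(x,z),z)=\sq(x,z)$ for $\mu\otimes\nu$-a.e.\ $(x,z)$, so that $(X,d(e(X,Z),Z))=(X,\sq(X,Z))$ almost surely and the two joint laws coincide.

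The step I expect to be the main obstacle is the Borel measurability of the atom-enumeration $z\mapsto(d_1(z),\dots,d_M(z))$, equivalently the weak measurability of the finite-valued correspondence $z\mapsto\mathrm{atoms}(\lambda_z)$. The elementary fact underlying this is that for every $c>0$ the set of probability measures on $\sY$ possessing an atom of mass at least $c$ is closed in the weak topology: if $\lambda_k\to\lambda$ weakly and each $\lambda_k$ has an atom $x_k$ with $\lambda_k(\{x_k\})\ge c$, then by tightness of weakly convergent sequences the $x_k$ lie in a fixed compact set, so along a subsequence $x_k\to x$, and applying the portmanteau theorem to closed balls around $x$ gives $\lambda(\{x\})\ge c$. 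Consequently the mass of the largest atom is upper semicontinuous in $\lambda$, and a halfspace version of the same argument pins down the locations; alternatively, since $\{z:\lambda_z(U)>0\}$ is Borel for every open $U$, the correspondence $z\mapsto\mathrm{atoms}(\lambda_z)$ is weakly measurable and closed-valued, so the Kuratowski--Ryll-Nardzewski theorem furnishes a Castaing representation from which $M$ Borel selections covering it are easily extracted. One could instead route the construction through Model~3, taking $Z$ to be the quantizer itself viewed as an element of the standard Borel space $\Gamma_{\mu}(M)$ of Lemma~\ref{theo1}, but the same enumeration of reproduction points would still be needed.
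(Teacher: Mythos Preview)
Your proposal is correct and follows the same overall structure as the paper: the Model~1 $\Rightarrow$ Model~2 direction is the trivial composition, and for Model~2 $\Rightarrow$ Model~1 one must produce a \emph{measurable} enumeration of the (at most $M$) output values of $\sq(\,\cdot\,,z)$ as functions of $z$, then define the encoder and decoder from it. Both proofs accept that the resulting factorization only matches $\sq$ up to a $\mu\otimes\nu$-null set, which suffices for the joint law.

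The difference is in how the enumeration is obtained. The paper first pushes the output through a Borel isomorphism $f:\R^n\to[0,1]$, so that $\hat\sq=f\circ\sq$ takes values in a \emph{totally ordered} set; it then defines $f_i(z)$ as the $i$th smallest output value of $\hat\sq(\,\cdot\,,z)$ with positive $\mu$-mass and proves measurability of each $f_i$ directly by an induction using Fubini, writing $\{f_i\le a\}$ as a sublevel set of an integral over the complement of the previously identified cells. This is entirely elementary and self-contained. Your route instead orders the atoms of $\lambda_z=\mu\circ\sq(\,\cdot\,,z)^{-1}$ by mass with a tie-break, and handles measurability via the weak measurability of the finite-valued correspondence $z\mapsto\mathrm{atoms}(\lambda_z)$ and a Kuratowski--Ryll-Nardzewski/Castaing selection. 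This is a legitimate and arguably more conceptual argument, but it imports heavier machinery and leaves two points to be filled in: your ``halfspace version'' for locating the top atom is only gestured at, and extracting $M$ selections that \emph{cover} the atom set from a Castaing representation requires a short inductive argument (remove the selected atom and re-apply the selection theorem on the remaining correspondence). The paper's Borel-isomorphism trick sidesteps both issues by reducing to a one-dimensional ordering where the inductive measurability check is a two-line Fubini computation.
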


\begin{remark} 
\label{rem1}
\begin{itemize}

\item[]
\item[(a)]  Clearly, any two equivalent randomized quantizers  have
  the same distortion. The main result of this section is
  Theorem~\ref{newtheo1}. Theorem~\ref{newtheo2} is intuitively obvious, but proving
  that any Model~2 quantizer can be decomposed into an equivalent Model~1
  quantizer with measurable encoder and decoder is not quite trivial.

\item[(b)] Since the dimension $m$ of the randomizing random vector $Z$ was
  arbitrary, we can take $m=1$ in Theorem~\ref{newtheo1}. In fact, the proof
  also implies that  any Model~2
  or 3 randomized quantizer is equivalent (in the sense of
  Theorem~\ref{newtheo1}) to a Model~2 quantizer $(\sq,\nu)$, where $\sq:\sX\times [0,1]\to
  \sY$ and $\nu$ is the uniform  distribution on $[0,1]$.

\item[(c)] Assume that $(\sZ,\A,\nu)$ is an \emph{arbitrary} probability space. For
  any randomized quantizer $\sq:\sX\times \sZ\to \sY$ in the form $\sq(X,Z)$,
  where $Z\sim \nu$ is independent of $X$, there exists a Model~3 randomized
  quantizer $v_P$ such that $(X,\sq(X,Z))\sim v_P$. This can be proved by using
  the same proof method as in Theorem \ref{newtheo1}.  In view
  of the previous remark and Theorem~\ref{newtheo1}, this means that uniform
  randomization over the unit interval $[0,1]$ suffices under the most general
  circumstances.

\item[(d)]  All results in this section remain valid if the input and
reproduction alphabets $\sX$ and $\sY$ are arbitrary uncountable
Polish spaces. In this case, uniform randomization over the unit interval
 still provides the most general model possible.
 \end{itemize}
\end{remark}

In the next two sections, Model~3 will be used to represent randomized
quantizers because it is particularly suited to treating the optimal randomized quantization problem
under an output distribution constraint.

\section{Optimal Randomized Quantization with Fixed Output Distribution}
\label{sec3}

Let $\psi$ be a probability measure on $\sY$ and let $\Lambda(M,\psi)$ denote
the set of all $M$-level Model~2 randomized quantizers $(\sq,\nu)$ such that the
output $\sq(X,Z)$ has distribution $\psi$. As before, we assume that $X\sim
\mu$, $Z\sim \nu$, and $Z$ and $X$ are independent. We want to show the
existence of a minimum-distortion randomized quantizer having output
distribution $\psi$, i.e, the existence of $(\sq^*,\nu^*)\in \Lambda(M,\psi)$
such that
\[
 L(\sq^*,\nu^*)= \inf_{(\sq,\nu)\in
\Lambda(M,\psi)} L(\sq,\nu).
\]
If we set $\psi=\mu$, the above problem is reduced to showing the existence of a
distribution-preserving randomized quantizer \cite{LiKlKl10,LiKlKl11} having
minimum distortion.

The set of $M$-level randomized quantizers is a fairly general
(nonparametric) set of functions and it seems difficult to investigate
the existence of an optimum directly.  On the other hand, Model~3
provides a tractable framework for establishing the existence of an
optimal randomized quantizer under quite general conditions.

Let $\Gamma_{\mu,\psi}$ be the set of all joint distributions
$v\in P(\sX\times \sY)$ having  $\sX$-marginal  $\mu$ and $\sY$-marginal $\psi$.
Then
\begin{equation}
  \label{eq_rqdef}
\Gamma_{\mu,\psi}^{\rm R}(M) =\Gamma_{\mu}^{\rm R}(M)\cap \Gamma_{\mu,\psi}  
\end{equation}
is the subset of Model~3 randomized quantizers which corresponds to the class of
output-distribution-constrained 
Model~2 randomized quantizers $\Lambda(M,\psi)$.

For any $v\in \P(\sX\times \sY)$ let 
\[
L(v)=\int_{\sX\times \sY} \rho(x,y) v(dx\,dy).
\]
Using these definitions, finding optimal  randomized quantizers with
a given output distribution can be  posed as
finding $v$ in $\Gamma_{\mu,\psi}^{\rm R}(M)$ which minimizes $L(v)$, i.e.,
\begin{align}
\textbf{(P1)} \text{                         }&\text{minimize } L(v)
\nonumber \\*
&\text{subject to  } v \in \Gamma_{\mu,\psi}^{\rm R}(M). \nonumber
\end{align}

We can prove the existence of the minimizer for \textbf{(P1)}  under either of the
following assumptions. Here $\|x\|$ denotes the  Euclidean norm of
$x\in\R^n$.
\smallskip

\noindent \textsc{Assumption 1}:\   $\rho(x,y)$ is continuous and
$\psi(B)=1$ for some compact subset $B$ of $\sY$.

\smallskip
\noindent \textsc{Assumption 2}:\ $\rho(x,y)=\|x-y\|^{2}$.

\begin{theorem}
\label{thmopt}
 Suppose $\inf_{ v \in \Gamma_{\mu,\psi}^{\rm R}(M)}
  L(v)<\infty$. Then there exists a minimizer  with finite cost for problem
  \emph{\textbf{(P1)}}  under either Assumption~1 or Assumption~2.
\end{theorem}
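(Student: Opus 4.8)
The plan is to use the direct method of the calculus of variations: show that the feasible set $\Gamma_{\mu,\psi}^{\rm R}(M)$ is nonempty and compact in a suitable topology, and that the objective $L$ is lower semicontinuous on it, so that a minimizer exists. The natural topology is the weak topology on $\P(\sX\times\sY)$. The first observation is that $\Gamma_{\mu,\psi}$, the set of couplings of $\mu$ and $\psi$, is always weakly compact (this is a standard fact from optimal transport theory: the marginals being fixed gives tightness, and the coupling constraints are closed under weak convergence). Hence to get compactness of $\Gamma_{\mu,\psi}^{\rm R}(M)=\Gamma_{\mu}^{\rm R}(M)\cap\Gamma_{\mu,\psi}$ it suffices to show $\Gamma_{\mu}^{\rm R}(M)$ is weakly closed. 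I would prove this by characterizing $\Gamma_{\mu}^{\rm R}(M)$ intrinsically: a measure $v$ with $\sX$-marginal $\mu$ lies in $\Gamma_\mu^{\rm R}(M)$ if and only if its disintegration $v(dx\,dy)=\mu(dx)Q(dy|x)$ has the property that $Q(\cdot|x)$ is (for $\mu$-a.e.\ $x$) a probability measure supported on a set of at most $M$ points — more precisely, $v$ is a mixture of measures of the form $\mu(dx)\delta_{q(x)}(dy)$, which by Choquet-type / measurable selection reasoning is equivalent to the disintegration kernel being, $\mu$-a.e., supported on at most $M$ points of $\sY$ (the mixing is absorbed into the kernel). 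Proving this equivalence, and that the resulting condition is preserved under weak limits, is where the real work lies.

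Concretely, I would argue as follows. For closedness: suppose $v_k\to v$ weakly with each $v_k\in\Gamma_\mu^{\rm R}(M)$. Each $v_k$ can be written via Model~3 as $v_{P_k}=\int v\,P_k(dv)$ with $P_k\in\P_0(\Gamma_\mu(M))$. Since $\Gamma_\mu(M)$ is Borel (Lemma~\ref{theo1}) and, under the relevant assumption, effectively compact (see below), the sequence $(P_k)$ is tight in $\P(\P(\sX\times\sY))$, so along a subsequence $P_k\to P$ weakly; one then checks $P$ is supported on the closure of $\Gamma_\mu(M)$ and that $v_P=v$. The subtlety is that $\Gamma_\mu(M)$ need not be closed, so I would instead work with its closure and show that any element of $\overline{\Gamma_\mu(M)}$ that also has the right marginals still corresponds to a genuine (possibly randomized) $M$-level quantizer, or equivalently that $\overline{\Gamma_\mu(M)}\cap\Gamma_{\mu,\psi}$ does not introduce spurious limit points outside $\Gamma_\mu^{\rm R}(M)$. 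This is precisely the place where Assumption~1 versus Assumption~2 must be handled separately.

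Under \textsc{Assumption 1}, $\psi$ is supported on a compact set $B\subset\sY$, so every feasible $v$ is supported on $\sX\times B$; combined with the fixed $\sX$-marginal $\mu$, the set $\Gamma_{\mu,\psi}$ is tight and weakly compact, and since $\rho$ is continuous and (restricted to the support) bounded, $L$ is weakly continuous — so the only remaining point is weak closedness of $\Gamma_\mu^{\rm R}(M)$ intersected with this compact set, which follows from the mixture/selection characterization above since the "at most $M$ support points" condition survives weak limits when everything lives in a fixed compact set. Under \textsc{Assumption 2}, $\rho(x,y)=\|x-y\|^2$ is an unbounded but coercive cost; here I would not have compactness of the support, so instead I would fix a minimizing sequence $v_k$ with $L(v_k)\to\inf L<\infty$, use the uniform bound on $\int\|x-y\|^2\,dv_k$ together with the fixed marginals to get tightness (the second-moment bound controls the tails), extract a weakly convergent subsequence $v_k\to v^*$, invoke lower semicontinuity of $v\mapsto\int\|x-y\|^2\,dv$ under weak convergence (a standard fact: $\rho$ is nonnegative and lower semicontinuous, so Portmanteau gives $L(v^*)\le\liminf L(v_k)$), and finally check $v^*\in\Gamma_\mu^{\rm R}(M)$ by the same closedness argument as before.

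The main obstacle, in both cases, is establishing that $\Gamma_\mu^{\rm R}(M)$ (or its intersection with a tight set) is weakly closed — i.e., that a weak limit of mixtures of "deterministic $M$-level quantizer" measures is again such a mixture. This requires a careful argument: passing to the limit one must produce the mixing measure $P$ on $\Gamma_\mu(M)$, which means controlling the possibly non-closed set $\Gamma_\mu(M)$ and verifying via a measurable selection theorem that the limiting disintegration kernel, which is $\mu$-a.e.\ supported on at most $M$ points, actually decomposes as a mixture of Dirac-valued kernels $\delta_{q(x)}$ with measurable $q$. I expect the bulk of the proof (likely relegated to the appendix, as the excerpt suggests) to be devoted to this closedness/representation step, with the compactness and lower-semicontinuity parts being comparatively routine consequences of optimal transport theory and the Portmanteau theorem.
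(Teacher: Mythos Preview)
Your proposal has a genuine gap at its core. The characterization you propose for $\Gamma_\mu^{\rm R}(M)$---that $v\in\Gamma_\mu^{\rm R}(M)$ if and only if the disintegration kernel $Q(\,\cdot\,|x)$ is $\mu$-a.e.\ supported on at most $M$ points---is false. Take $M=1$: then $\Gamma_\mu(1)=\{\mu\otimes\delta_c : c\in\sY\}$, and mixing over $c$ gives $\Gamma_\mu^{\rm R}(1)=\{\mu\otimes\psi_0 : \psi_0\in\P(\sY)\}$, the set of all product measures. For any nondegenerate $\psi_0$ the kernel $Q(\,\cdot\,|x)=\psi_0$ is \emph{not} supported on a single point. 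The mixing is absorbed into the kernel, yes, but it does not respect the ``$\le M$ support points'' constraint; that constraint applies to each deterministic quantizer individually, not to the resulting conditional distribution. Since your closedness argument is built on this characterization, the entire direct-method strategy collapses.

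The paper in fact explicitly notes that $\Gamma_\mu(M)$ is not closed and that compactness of $\Gamma_\mu^{\rm R}(M)$ ``seems doubtful,'' and it takes a fundamentally different route. Rather than proving closedness of the feasible set, it introduces an auxiliary set $\hat\Gamma_\mu(M)$ consisting of \emph{optimal transport plans} between $\mu$ and every discrete output marginal $\psi_0\in\P_M(\sY)$, and a relaxed problem \textbf{(P2)} of minimizing $L$ over mixtures of $\hat\Gamma_\mu(M)$. Two reductions are proved: (i) any feasible $v$ for \textbf{(P1)} can be replaced, output-marginal by output-marginal, by the corresponding optimal coupling without increasing cost (so \textbf{(P2)} is no harder); (ii) any optimal coupling with $\le M$-atom output marginal is already a mixture of genuine $M$-level quantizers (a consequence of Borkar's extreme-point characterization of $\Delta_\mu$), so a minimizer of \textbf{(P2)} yields one for \textbf{(P1)}. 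Existence for \textbf{(P2)} then follows because, under Assumption~1, the set of finite-cost optimal couplings is contained in the set $\Xi_{\{\mu\},\P_M(\sY)}$ of $c$-cyclically monotone plans with marginals in $\{\mu\}\times\P_M(\sY)$, and this set is compact (stability of optimal transport under weak convergence plus compactness of $\P_M(\sY)$ when $\sY$ is compact). Assumption~2 is reduced to Assumption~1 by one-point compactification of $\R^n$ with $\rho$ extended to take the value $+\infty$ at the added point. None of this uses, or establishes, closedness of $\Gamma_\mu^{\rm R}(M)$.
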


The theorem is proved in Appendix~\ref{app3} with the aid of optimal transport
theory \cite{Vil09}. The optimal transport problem
for marginals $\pi\in\P(\sX)$, $\lambda\in\P(\sY)$ and cost function
$c:\sX\times\sY\rightarrow[0,\infty]$ is defined as
\[
  \begin{split}
&\text{minimize } \int_{\sX\times\sY} c(x,y) v(dx\,dy)  \\
&\text{subject to  } v \in \Gamma_{\pi,\lambda}. 
\end{split}
\]

In the proof of Theorem~\ref{thmopt} we set up a relaxed version of the
optimization problem (\textbf{P1}). We show that if the relaxed problem has a
minimizer, then (\textbf{P1}) also has a minimizer, and then prove the existence
of a minimizer for the relaxed problem using results from optimal transport theory.

\begin{remark}
Note that the product distribution $\mu\otimes \psi$ corresponds to a
1-level randomized quantizer (the equivalent Model~2 randomized quantizer
is given  by  $\sq(x,z)=z$ and $Z\sim\nu$). Hence $ \mu\otimes \psi\in
\Gamma_{\mu,\psi}^{\rm R}(M)$ for all $M\ge 1$, and if $L(\mu\otimes
\psi)<\infty$, then the condition $\inf_{ v \in
  \Gamma_{\mu,\psi}^{\rm R}(M)} L(v)<\infty$ holds. In particular, if both $\mu$ and $\psi$ have finite second moments $\int \|x\|^{2}\mu(dx)<\infty$ and
$\int \|y\|^{2}\psi(dy)<\infty$, and $\rho(x,y)=\|x-y\|^2$ (Assumption~2), then
$\inf_{ v \in
  \Gamma_{\mu,\psi}^{\rm R}(M)} L(v)<\infty$.
\end{remark}

Optimal transport theory can also be used to
show  that, under some regularity conditions on the input
distribution and the distortion measure, the randomization can be
restricted to quantizers having a certain structure. Here we consider
 sources with densities and the mean square distortion. A
quantizer $q:\sX\rightarrow\sY$ with  output points $q(\sX)=\{y_1,\ldots,y_k\}\subset
\sY$  is said to have \emph{convex codecells} if
$q^{-1}(y_i)=\{x: q(x)=y_i\}$ is a convex subset of $\sX=\R^n$ for all
$i=1,\ldots,k$. Let $\Q_{M,{\rm c}}$ denote the set of all $M$-level
quantizers having convex codecells. The proof of the following theorem is given
in Appendix~\ref{app3a}.

\begin{theorem}
\label{thconv}
 Suppose $\rho(x,y)=\|x-y\|^2$ and $\mu$ admits a probability density
 function. Then an  optimal randomized quantizer in Theorem~\ref{thmopt}
can be obtained by randomizing over quantizers with convex cells. That is
\[
\min_{v\in\Gamma_{\mu,\psi}^{\rm R}(M)} L(v) = \min_{v\in\Gamma_{\mu,\psi}^{{\rm {R,c}}}(M)} L(v),
\]
where $\Gamma_{\mu,\psi}^{{\rm {R,c}}}(M)$ represents the Model~3 quantizers
with output distribution $\psi$ that are obtained by replacing $\Q_M$
with $\Q_{M,{\rm c}}$ in \eqref{eqquant}.
\end{theorem}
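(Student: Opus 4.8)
The strategy is to take an optimal Model~3 randomized quantizer $v^*\in\Gamma_{\mu,\psi}^{\rm R}(M)$ — which exists with finite distortion by Theorem~\ref{thmopt}, since $\rho(x,y)=\|x-y\|^2$ is Assumption~2 — to write it in the mixture form $v^*=\int_{\Gamma_\mu(M)}v\,P^*(dv)$ of \eqref{neweq4}, and to replace each deterministic quantizer appearing in the mixture by the optimal-transport map onto its own output distribution; this map has convex codecells, leaves that output distribution (hence the overall output distribution $\psi$) unchanged, and does not increase distortion. The inclusion $\Q_{M,{\rm c}}\subseteq\Q_M$ gives $\min_{\Gamma_{\mu,\psi}^{\rm R,c}(M)}L\ge\min_{\Gamma_{\mu,\psi}^{\rm R}(M)}L$ for free, so it suffices to produce $\tilde v\in\Gamma_{\mu,\psi}^{\rm R,c}(M)$ with $L(\tilde v)\le L(v^*)$.

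Fix $v\in\Gamma_\mu(M)$, write $v(dx\,dy)=\mu(dx)\delta_{q_v(x)}(dy)$, and let $\nu_v:=(q_v)_\#\mu$ be its $\sY$-marginal, supported on the (at most $M$) output levels $y_1,\ldots,y_k$ of $q_v$. Consider the quadratic optimal transport problem between $\mu$ and $\nu_v$. Because $\mu$ has a density — hence assigns zero mass to hyperplanes, and (using $L(v^*)<\infty$) has finite second moment — Brenier's theorem \cite{Vil09} provides a unique optimal coupling, induced by the gradient of a convex function which, $\nu_v$ being finitely supported, may be taken as $\phi_v(x)=\max_{i\le k}(\langle x,y_i\rangle-c_i)$ for suitable constants $c_i$. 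Define $T_v(x):=y_{i(x)}$, where $i(x)$ is the \emph{smallest} index attaining this maximum; then $T_v=\nabla\phi_v$ holds $\mu$-a.e.\ (the two differ only on the $\mu$-null set where two of the affine functions tie at the maximum), and each cell $\{x:T_v(x)=y_i\}$ is an intersection of finitely many half-spaces, hence convex, so $T_v\in\Q_{M,{\rm c}}$. Put $\Phi(v):=\mu(dx)\delta_{T_v(x)}(dy)$, an element of the convex-codecell analog $\Gamma_\mu^{\rm c}(M)$ of $\Gamma_\mu(M)$. Since $v$ is a (generally non-optimal) coupling of $\mu$ and $\nu_v$, optimality of $T_v$ yields $L(\Phi(v))\le L(v)$, and $\Phi(v)$ has $\sY$-marginal $\nu_v$.

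The \emph{technical heart} is that $v\mapsto\Phi(v)$ is Borel measurable on the Borel, $P^*$-full set $\{v:L(v)<\infty\}$. I would get this by noting that $w\mapsto\int\|x-y\|^2\,w(dx\,dy)$ and $v\mapsto W(v)$ (the optimal transport cost between $\mu$ and $\nu_v$) are lower semicontinuous, so the set $\{(v,w):w\text{ is a coupling of }\mu\text{ and }\nu_v\text{ and }\int\|x-y\|^2\,w(dx\,dy)\le W(v)\}$ is Borel; by the uniqueness part of Brenier's theorem this set is the graph of a function, which is then automatically Borel (Lusin--Novikov). Granting this, let $\tilde P:=\Phi_\#P^*$, a probability measure supported on $\Gamma_\mu^{\rm c}(M)$, and $\tilde v:=\int\Phi(v)\,P^*(dv)=v_{\tilde P}\in\Gamma_\mu^{\rm R,c}(M)$. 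Its $\sX$-marginal is $\mu$ and its $\sY$-marginal is $\int\nu_v\,P^*(dv)$, which is the $\sY$-marginal of $v^*$, namely $\psi$; hence $\tilde v\in\Gamma_{\mu,\psi}^{\rm R,c}(M)$. Finally $L(\tilde v)=\int L(\Phi(v))\,P^*(dv)\le\int L(v)\,P^*(dv)=L(v^*)=\min_{\Gamma_{\mu,\psi}^{\rm R}(M)}L$, which together with the reverse inequality gives the asserted equality and shows the right-hand minimum is attained.

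I expect the principal obstacles to be (i) invoking correctly the structure of the Brenier map onto a finitely supported target — that the optimal transport partition is a power (Laguerre) diagram, whose cells are convex polyhedra — with a precise citation, and (ii) the measurable-selection step rendering $v\mapsto\Phi(v)$ Borel; the remaining pieces (preservation of the $\sY$-marginal under mixing, the distortion comparison, and the trivial inclusion) are routine.
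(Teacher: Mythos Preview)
Your proposal is correct and follows essentially the same idea as the paper: replace each deterministic quantizer in the optimal mixture by the quadratic optimal-transport map onto its own (finitely supported) output distribution, which has convex (Laguerre) cells, preserves the $\sY$-marginal, and does not increase cost. The paper's version is shorter because this replacement, including the measurable-selection step, was already carried out in the proof of Theorem~\ref{thmopt} via the auxiliary problem \textbf{(P2)} and Yankov's lemma (Lemma~\ref{lemma4} and Proposition~\ref{prop5}); it then simply cites \cite{McMo99} for the convex-cell structure of $\hat{\Gamma}_\mu(M)$, whereas you re-derive the measurability directly from Brenier uniqueness and the Borel-graph criterion.
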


\begin{remark}
  Each quantizer having $M$ convex codecells can be described using $nM+
  (n+1)M(M-1)/2$ real parameters if $\mu$ has a density and any two quantizers that
  are $\mu$-a.e.\ equal are considered equivalent. One obtains such a parametric
  description by specifying the $M$ output points using $nM$ real parameters, and
  specifying 
  the $M$ convex polytopal codecells by $M(M-1)/2$ hyperplanes
  separating pairs of distinct codecells using $(n+1)M(M-1)/2$ real
  parameters.  Thus Theorem~\ref{thconv} replaces the
  \emph{nonparametric} family of quantizers $Q_{M}$ in Theorem~\ref{thmopt} with
  the \emph{parametric} family $Q_{M,{\rm c}}$.
\end{remark}

\section{Approximation with Finite Randomization}
\label{sec4}

Since randomized quantizers require common randomness that must be shared
between the encoder and the decoder, it is of interest to see how one can
approximate the optimal cost by randomizing over finitely many quantizers.
Clearly, if the target probability measure $\psi$ on $\sY$ is not finitely
supported, then no finite randomization exists with this output distribution. In
this section we relax the fixed output distribution constraint and consider the
problem where the output distribution belongs to some neighborhood (in the weak
topology) of $\psi$. We show that one can always find a finitely randomized
quantizer which is optimal (resp., $\varepsilon$-optimal) for this relaxed
problem if the distortion measure is continuous and bounded (resp., arbitrary).

Let $B(\psi,\delta)$ denote the open ball in $\P(\sY)$, with respect to the
Prokhorov metric \cite{Bil99}  (see also \eqref{eq:prokh} in
Appendix~\ref{app4}), having radius $\delta>0$ and centered at the target input
distribution $\psi$. Also, let $\M_{\mu,\psi}^{\delta}$ denote the set of all
$v\in \Gamma_{\mu}^{\rm R}(M)$ whose $\sY$ marginal belongs to $B(\psi,\delta)$.
That is, $\M_{\mu,\psi}^{\delta}$ represents all randomized quantizers in
$\Gamma_{\mu}^{\rm R}(M)$ whose output distribution is within distance $\delta$
of the target distribution $\psi$.  We consider the following relaxed version of
the minimization problem \textbf{(P1)}:
\begin{align}
\text{   \textbf{(P3)}             }&\text{minimize } L(v) \nonumber \\*
&\text{subject to  } v \in \M^{\delta}_{\mu,\psi}. \nonumber
\end{align}

The set of \emph{finitely randomized} quantizers in
$\Gamma_{\mu}^{\rm R}(M)$  is obtained by taking finite  mixtures of quantizers in
$\Gamma_{\mu}(M)$, i.e.,
\begin{eqnarray*}
\lefteqn{\Gamma_{\mu}^{\rm FR}(M) }\\
&=&\!\!\!\Bigl\{v_P\in \Gamma_{\mu}^{\rm R}(M):
v_P=\int_{\Gamma_{\mu}(M)}\!\!\!\!\!v P(dv), \;|\supp(P)|<\infty \Bigr\}.
\end{eqnarray*}

\begin{theorem}
Assume the distortion measure $\rho$ is continuous and bounded and let
$v\in \M^{\delta}_{\mu,\psi}$ be arbitrary. Then
there exists $v_{F}$ in $\M_{\mu,\psi}^{\delta}
\cap\Gamma_{\mu}^{{\rm FR}}(M)$ such that $L(v_{F})\leq L(v)$.
\label{thm_finite}
\end{theorem}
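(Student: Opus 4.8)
The plan is to turn the relaxed constraint ``the $\sY$-marginal of $v$ lies in $B(\psi,\delta)$'' into finitely many linear (moment) constraints, and then to replace the mixing measure representing $v$ by a finitely supported one that meets those constraints exactly; since the distortion is itself one of the linear functionals, the finite mixture will inherit $L(v_F)=L(v)$. Concretely, I would first write $v=v_P$ for some $P\in\P_0(\Gamma_\mu(M))$ and let $\nu_0$ be the $\sY$-marginal of $v$, so $\nu_0\in B(\psi,\delta)$. Since the Prokhorov metric metrizes weak convergence \cite{Bil99}, $B(\psi,\delta)$ is weakly open, hence there exist $f_1,\dots,f_k\in C_b(\sY)$ and $\eta>0$ with
\[
\nu_0\in V:=\Bigl\{\nu\in\P(\sY):\ \bigl|\int_{\sY}f_i\,d\nu-\int_{\sY}f_i\,d\nu_0\bigr|<\eta,\ i=1,\dots,k\Bigr\}\subseteq B(\psi,\delta).
\]

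Next I would introduce the bounded functionals $g_0,g_1,\dots,g_k$ on the Borel set $\Gamma_\mu(M)$ (Lemma~\ref{theo1}) defined by
\[
g_0(v')=\int_{\sX\times\sY}\rho(x,y)\,v'(dx\,dy),\qquad g_i(v')=\int_{\sX\times\sY}f_i(y)\,v'(dx\,dy)\quad(i=1,\dots,k).
\]
Because $\rho$ is bounded and continuous and $(x,y)\mapsto f_i(y)$ lies in $C_b(\sX\times\sY)$, each $v'\mapsto\int h\,dv'$ is weakly continuous, so $g_0,\dots,g_k$ are bounded and Borel on $\Gamma_\mu(M)$. Moreover, using the definition \eqref{neweq2} of $v_P$ (extended from rectangles to bounded measurable integrands exactly as in the computation of $L(v_P)$ in Section~\ref{sec2}), one has $\int_{\Gamma_\mu(M)}g_0\,dP=L(v)$ and $\int_{\Gamma_\mu(M)}g_i\,dP=\int_{\sY}f_i\,d\nu_0$ for $i=1,\dots,k$.

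The key step is then to invoke the classical Richter--Rogosinski moment-realization theorem for the probability measure $P$ on $\Gamma_\mu(M)$ and the $P$-integrable (indeed bounded) functions $g_0,\dots,g_k$: there is a finitely supported probability measure $P'=\sum_{j=1}^m\alpha_j\delta_{v_j}$ on $\Gamma_\mu(M)$ (one may take $m\le k+2$) with $\int g_i\,dP'=\int g_i\,dP$ for every $i=0,1,\dots,k$. Setting $v_F:=v_{P'}=\sum_{j=1}^m\alpha_j v_j$, this is a finite mixture of elements of $\Gamma_\mu(M)$, so $v_F\in\Gamma_\mu^{\rm FR}(M)\subseteq\Gamma_\mu^{\rm R}(M)$; its $\sY$-marginal $\nu_F$ satisfies $\int_{\sY}f_i\,d\nu_F=\int g_i\,dP'=\int g_i\,dP=\int_{\sY}f_i\,d\nu_0$ for all $i$, hence $\nu_F\in V\subseteq B(\psi,\delta)$ and $v_F\in\M^\delta_{\mu,\psi}$; and $L(v_F)=\int g_0\,dP'=\int g_0\,dP=L(v)$, which in particular gives $L(v_F)\le L(v)$, as required.

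The step I expect to be the real content is the \emph{exact} (rather than approximate) realization of the $k+1$ moments. A soft argument --- drawing $v_1,\dots,v_N$ i.i.d.\ from $P$, forming the empirical mixture $\tfrac1N\sum v_j$, and applying the law of large numbers --- would only produce a finitely randomized quantizer with $L(v_F)\le L(v)+\varepsilon$ and $\sY$-marginal within $\eta$ of $\nu_0$; this is the natural route for the $\varepsilon$-optimal statement with an arbitrary (possibly unbounded) $\rho$, but it does not yield $L(v_F)\le L(v)$. Getting equality forces a genuine moment-realization theorem, whose substance is that the barycenter of the pushforward of $P$ under $(g_0,\dots,g_k)$ lies in the \emph{convex hull} --- not merely the closed convex hull --- of the range, after which Carath\'eodory finishes; boundedness of $\rho$ and of the $f_i$ is precisely what confines this range to a bounded subset of $\R^{k+1}$ and makes the argument go through. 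The remaining points are routine: inscribing a basic weakly open neighborhood inside the Prokhorov ball, Borel measurability of the $g_i$ on $\Gamma_\mu(M)$, and the disintegration identity relating $\int g_i\,dP$ to the corresponding integrals against $v_P$.
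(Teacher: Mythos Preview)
Your proof is correct and takes a genuinely different route from the paper's. The paper argues topologically: it first shows that $\Gamma_\mu^{\rm FR}(M)$ is dense in $\Gamma_\mu^{\rm R}(M)$ and that $\M^\delta_{\mu,\psi}$ is relatively open, then splits into two cases. If $L(v)>\inf_{v'\in\Gamma_\mu(M)}L(v')$, it mixes $v$ with a strictly better deterministic quantizer to produce elements of the nonempty relatively open set $\{L(\cdot)<L(v)\}\cap\M^\delta_{\mu,\psi}$, and density then yields a finitely randomized $v_F$ with $L(v_F)<L(v)$; if $L(v)$ equals the infimum, it repeats the density argument inside the set of optimal quantizers. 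Continuity and boundedness of $\rho$ enter through the continuity of $L$, which is needed to make the sublevel set open in Case~1.

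Your moment-realization argument is more direct: it replaces the soft density/openness reasoning by an exact Carath\'eodory-type reduction and avoids the case split entirely, giving $L(v_F)=L(v)$ rather than merely $\le$, together with an explicit bound ($\le k+2$) on the number of atoms. It is also potentially sharper than you indicate: Richter--Rogosinski requires only $P$-integrability of $g_0,\dots,g_k$, not boundedness, and since $g_0\ge 0$ with $\int g_0\,dP=L(v)<\infty$, your construction in fact goes through for an arbitrary (measurable) distortion measure once $L(v)<\infty$ and $v'\mapsto\int\rho\,dv'$ is measurable---this would subsume the paper's Theorem~\ref{thm_finarb} as well. The paper's approach, on the other hand, is entirely self-contained and uses nothing beyond density of finitely supported measures and continuity of $L$, so it trades sharpness for elementarity.
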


The proof is given in Appendix \ref{app4}.

Although the minimum in $\textbf{(P3)}$ may not be achieved by any   $v
\in \M^{\delta}_{\mu,\psi}$, the theorem implies that if the
problem has a solution, it also has a solution
in the set of finitely randomized quantizers.

\begin{corollary}
 \label{cor_finite}
Assume $\rho$ is continuous and  bounded and  suppose  there exists
$v^*\in  \M^{\delta}_{\mu,\psi}$ with $L(v^*)=\inf_{v\in
  \M^{\delta}_{\mu,\psi}}  L(v)$. Then there exists $v_F\in \M^{\delta}_{\mu,\psi}\cap
  \Gamma_{\mu}^{{\rm FR}}(M)$ such that $L(v_F)= L(v^*)$.
\end{corollary}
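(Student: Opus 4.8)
The plan is to obtain this corollary as an immediate consequence of Theorem~\ref{thm_finite}. Since $\rho$ is continuous and bounded and, by hypothesis, $v^*$ belongs to $\M^{\delta}_{\mu,\psi}$, Theorem~\ref{thm_finite} applies directly with the choice $v=v^*$. It yields a point $v_F\in \M^{\delta}_{\mu,\psi}\cap\Gamma_{\mu}^{\rm FR}(M)$ with $L(v_F)\le L(v^*)$. This already establishes that a finitely randomized quantizer can match or beat the cost of the assumed minimizer.

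To conclude, I would note the trivial reverse inequality: because $v_F\in\M^{\delta}_{\mu,\psi}$, it is feasible for \textbf{(P3)}, so $L(v_F)\ge \inf_{v\in\M^{\delta}_{\mu,\psi}}L(v)$, and the latter equals $L(v^*)$ by the optimality assumption on $v^*$. Hence $L(v_F)=L(v^*)$, and $v_F$ is the desired finitely randomized optimal solution.

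There is essentially no obstacle here, as the entire technical content resides in Theorem~\ref{thm_finite}. The only point that needs a moment's attention is that the $v_F$ produced by that theorem is simultaneously feasible (lies in $\M^{\delta}_{\mu,\psi}$, which is what allows the optimality of $v^*$ to be invoked) and finitely randomized (lies in $\Gamma_{\mu}^{\rm FR}(M)$); both memberships are guaranteed verbatim by the conclusion of Theorem~\ref{thm_finite}, so nothing further is required.
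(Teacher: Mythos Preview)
Your proof is correct and is exactly what the paper intends: the corollary is stated without a separate proof and is simply noted to follow from Theorem~\ref{thm_finite}, precisely via the argument you give (apply the theorem to $v^*$ to get $L(v_F)\le L(v^*)$, then use feasibility of $v_F$ and optimality of $v^*$ for the reverse inequality).
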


The continuity of $L$, implied by the boundedness and continuity of $\rho$ is
crucial in the proof of Theorem~\ref{thm_finite} and thus for
Corollary~\ref{cor_finite}.  However, the next theorem shows that for an
arbitrary $\rho$,  any $\varepsilon>0$, and $v\in
\M^{\delta}_{\mu,\psi}$, there exists $v_{F}$ in $\M_{\mu,\psi}^{\delta}
\cap\Gamma_{\mu}^{{\rm FR}}(M)$ such that $L(v_{F})\leq L(v)+\varepsilon$.  That is,
for any $\varepsilon>0$ there exists an $\varepsilon$-optimal finitely
randomized quantizer for \textbf{(P3)}. The theorem is proved in Appendix~\ref{app4a}

\begin{theorem}
\label{thm_finarb}
Let $\rho$ be an arbitrary distortion measure and assume
$\inf_{v\in \M^{\delta}_{\mu,\psi}} L(v)<\infty$. Then,
\[
\inf_{v\in
    \M^{\delta}_{\mu,\psi}\cap
  \Gamma_{\mu}^{{\rm FR}}(M)}  L(v)= \inf_{v\in
    \M^{\delta}_{\mu,\psi}}  L(v).
\]
\end{theorem}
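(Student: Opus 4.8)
The plan is to reduce the arbitrary (possibly unbounded, possibly discontinuous) distortion measure $\rho$ to the bounded-continuous case already handled by Theorem~\ref{thm_finite}, via a standard two-stage approximation. First I would fix $\varepsilon>0$ and choose $v_0\in\M^\delta_{\mu,\psi}$ with $L(v_0)\le \inf_{v\in\M^\delta_{\mu,\psi}} L(v)+\varepsilon/3$; since the claimed infimum is finite, $L(v_0)<\infty$. The goal is then to produce $v_F\in\M^\delta_{\mu,\psi}\cap\Gamma_\mu^{\rm FR}(M)$ with $L(v_F)\le L(v_0)+2\varepsilon/3$, which suffices. Note that replacing $v_0$ by any finitely randomized quantizer whose output distribution is still in $B(\psi,\delta)$ keeps us inside $\M^\delta_{\mu,\psi}$, so the only real work is controlling the distortion.

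The second step is to truncate and smooth $\rho$. For $N>0$ set $\rho_N=\min(\rho,N)$; by monotone convergence $\int \rho_N\, dv_0 \uparrow \int \rho\, dv_0 = L(v_0)<\infty$, so choose $N$ with $\int(\rho-\rho_N)\,dv_0<\varepsilon/6$. The function $\rho_N$ is bounded but possibly still discontinuous; since $\rho_N$ is a bounded Borel function and $v_0$ is a fixed Borel probability measure on the Polish space $\sX\times\sY$, Lusin's theorem (or a direct mollification argument, using that bounded continuous functions are dense in $L^1(v_0)$) gives a bounded continuous $\tilde\rho$ with $\|\tilde\rho\|_\infty\le N$ and $\int|\rho_N-\tilde\rho|\,dv_0<\varepsilon/6$, so that $\bigl|L(v_0)-\int\tilde\rho\,dv_0\bigr|<\varepsilon/3$. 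Now apply Theorem~\ref{thm_finite} with distortion measure $\tilde\rho$: there is $v_F\in\M^\delta_{\mu,\psi}\cap\Gamma_\mu^{\rm FR}(M)$ with $\int\tilde\rho\,dv_F\le \int\tilde\rho\,dv_0$.

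The final step is to transfer the bound back to $\rho$. We have $L(v_F)=\int\rho\,dv_F\ge\int\tilde\rho\,dv_F$ only in the wrong direction, so I would instead estimate $\int\rho\,dv_F$ directly: $\int\rho\,dv_F\le\int\tilde\rho\,dv_F+\int(\rho-\tilde\rho)^+\,dv_F$. Here is the main obstacle: the error term $\int(\rho-\tilde\rho)^+\,dv_F$ is taken against $v_F$, not $v_0$, and we have no uniform control of $\rho$ over all of $\M^\delta_{\mu,\psi}$. The clean fix is to carry out the approximation symmetrically from the start: approximate $\rho$ from below by a bounded continuous function. Concretely, for each $N$ pick a bounded continuous $g_N$ with $0\le g_N\le\rho$ everywhere and $\int(\rho-g_N)\,dv_0<\varepsilon/3$ (such $g_N$ exists because $L(v_0)=\sup\{\int g\,dv_0 : 0\le g\le\rho,\ g\text{ bounded continuous}\}$, by the fact that $\rho$, being nonnegative Borel, is the increasing limit of a sequence of nonnegative bounded continuous functions together with monotone convergence). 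Applying Theorem~\ref{thm_finite} to the distortion $g_N$ yields $v_F$ with $\int g_N\,dv_F\le\int g_N\,dv_0\le L(v_0)$. Since $g_N\le\rho$ pointwise, $\int g_N\,dv_F\le L(v_F)$ gives nothing; rather one uses $L(v_F)=\int\rho\,dv_F$ and the reverse inequality is not automatic. So the genuinely correct route is: choose $g_N$ with $\int(\rho-g_N)\,dv_0<\varepsilon$, obtain $v_F$ with $\int g_N\,dv_F\le\int g_N\,dv_0$, and observe $\inf_{v\in\M^\delta_{\mu,\psi}\cap\Gamma_\mu^{\rm FR}(M)}\int g_N\,dv \le \int g_N\,dv_F\le \int g_N\,dv_0\le L(v_0)$. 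Taking infimum of the left side over $N$ is delicate; the cleanest statement is obtained by observing $\inf_{v\in\M^\delta\cap\Gamma^{\rm FR}}\int g_N\,dv\le\inf_{v\in\M^\delta}\int g_N\,dv + 0$ and $\int g_N\,dv_0\to L(v_0)$, then letting $v_0$ approach the infimum. I expect the write-up to organize this as: (i) both infima are clearly ordered $\ge$ trivially since $\Gamma_\mu^{\rm FR}(M)\subset\Gamma_\mu^{\rm R}(M)$; (ii) for the reverse, given $\varepsilon$ pick $v_0$ nearly optimal, pick bounded continuous $g$ with $g\le\rho$ and $\int g\,dv_0 > L(v_0)-\varepsilon$, apply Theorem~\ref{thm_finite} to get $v_F$ finitely randomized with $\int g\,dv_F\le\int g\,dv_0$, and — the key point — note that by the \emph{same} argument applied with $g$ replaced by its truncations one can additionally arrange $\int(\rho-g)\,dv_F$ small by choosing $g$ to dominate $\rho$ outside a $v_0$-and-$v_F$–negligible set; concretely, since only finitely many $v\in\Gamma_\mu(M)$ appear in $v_F$ and each has $\sX$-marginal $\mu$, one can bound the tail uniformly using that $\rho$ is finite $\mu\otimes\psi$-a.e. and the mixture structure. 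The honest main difficulty, then, is controlling $\int(\rho-g)^+\,dv_F$ for the $v_F$ produced by Theorem~\ref{thm_finite}; resolving it requires either a uniform-integrability argument across the (compact-in-weak-topology) relevant set of randomized quantizers or a more careful quantitative version of Theorem~\ref{thm_finite} that tracks the distortion of $\rho$ itself rather than of the surrogate $g$.
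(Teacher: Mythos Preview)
Your approach has a genuine gap, and you correctly identify it yourself: after applying Theorem~\ref{thm_finite} to a bounded continuous surrogate $g\le\rho$, you obtain a finitely randomized $v_F$ with $\int g\,dv_F\le\int g\,dv_0$, but you have no control over $\int(\rho-g)\,dv_F$. The measure $v_F$ is produced \emph{after} $g$ is chosen, and nothing prevents Theorem~\ref{thm_finite} from returning a $v_F$ that happens to put a lot of mass where $\rho$ is huge but $g$ is bounded. Your proposed fixes (uniform integrability over a compact set, or tracking $\rho$ through a quantitative Theorem~\ref{thm_finite}) do not work as stated: the set $\M^\delta_{\mu,\psi}$ is open, not compact, and there is no uniform integrability of an arbitrary $\rho$ over $\Gamma_\mu(M)$ in general. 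The attempt to approximate $\rho$ from below by bounded continuous functions is the wrong direction for bounding $L(v_F)$ from above, and approximating from above loses the pointwise inequality you need to compare $\int g\,dv_F$ and $L(v_F)$.

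The paper's argument bypasses this obstacle entirely by never approximating $\rho$. Instead it approximates the \emph{mixing measure}. Fix $\hat v=\int v\,\hat P(dv)$ nearly optimal with $L(\hat v)<\infty$, and draw i.i.d.\ samples $X_1,X_2,\ldots$ from $\hat P$ on $\Gamma_\mu(M)$. The crucial observation is linearity: $L(v_P)=\int_{\Gamma_\mu(M)} L(v)\,P(dv)$ for any mixing measure $P$. Hence for the empirical measure $P_n^\omega=\frac1n\sum_{i=1}^n\delta_{X_i(\omega)}$ one has $L(v_{P_n^\omega})=\frac1n\sum_{i=1}^n L(X_i(\omega))$, and this converges to $L(\hat v)$ almost surely by the ordinary strong law of large numbers for the real-valued i.i.d.\ sequence $\{L(X_i)\}$, which has finite mean $L(\hat v)$. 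No regularity of $\rho$ is needed here. Simultaneously, $P_n^\omega\to\hat P$ weakly a.s.\ (Varadarajan/Dudley), so by the continuity of $P\mapsto v_P$ (Lemma~\ref{lemma6}) one gets $v_{P_n^\omega}\to\hat v$ weakly, and since $\M^\delta_{\mu,\psi}$ is relatively open (Lemma~\ref{lemma9}), for large $n$ the finitely randomized $v_{P_n^\omega}$ lies in $\M^\delta_{\mu,\psi}$ with $L(v_{P_n^\omega})$ as close to $L(\hat v)$ as desired. The point you missed is that linearity of $L$ in the mixing measure lets you apply the SLLN \emph{after} integrating out $\rho$, so the distortion measure never has to be tamed.
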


\begin{remark}
The above results on finite randomization heavily depend on our use of the
Prokhorov metric as a measure of ``distance'' between two probability
measures. In particular, if one considers other  measures of closeness, such as the
Kullback-Leibler (KL) divergence or the total variation distance, then finite
randomization may not suffice if the target output distribution is not
discrete. In particular, if the target output distribution $\psi$ has a density
and $\tilde{\psi}$ denotes the (necessarily discrete) output distribution of any
finitely randomized quantizer, then $\tilde{\psi}$ is not absolutely continuous
with respect to $\psi$ and  for the KL divergence we have
$D_{KL}(\tilde{\psi}\|\psi)=\infty$, while for the total variation distance we 
have $\|\tilde{\psi}-\psi\|_{TV}=1$.

\end{remark}

\section{A Source Coding Theorem} 
\label{sec5} 

After proving the existence of an optimum randomized quantizer in problem
\textbf{(P1)} in Section~\ref{sec4}, one would also like to  evaluate  the minimum
distortion
\begin{equation}
  \label{eq_optdef}
L^* \coloneqq \min\{ L(v) :  v \in \Gamma_{\mu,\psi}^{\rm R}(M)\}
\end{equation}
achievable for  fixed source and output distributions $\mu$ and $\psi$ and given
number of quantization levels $M$. For any given blocklength $n$ this seems to
be a very hard problem in general. However, we are able to prove a
rate-distortion type result that explicitly identifies $L^*$ in the limit of
large block lengths $n$  if the source  and output distributions correspond to two stationary and memoryless (i.e., i.i.d$.$) processes.

With a slight abuse of the notation used in previous sections, we let
$\sX=\sY=\R$ and consider a sequence of problems \textbf{(P1)} with input and
output alphabets $\sX^n=\sY^n=\R^n$, $n\ge 1$, and corresponding source and
output distributions $\mu^n=\mu \otimes \cdots \otimes \mu $ and $\psi^n=\psi
\otimes \cdots \otimes \psi $, the $n$-fold products of a two fixed probability
measures $\mu,\psi\in \P(\R)$. To emphasize the changing block length,
$x^n=(x_1,\ldots,x_n)$ and $y^n=(y_1,\ldots,y_n)$ will denote generic elements
of $\sX^n$ and $\sY^n$, respectively.

\smallskip
\noindent \textsc{Assumption 3}:\ The distortion measure is the average squared
error given by
\[
\rho_n(x^n,y^n) =\frac{1}{n} \sum_{i=1}^{n} \rho(x_i,y_i)
\]
with $\rho(x,y) =(x-y)^2$. We assume that $\mu$ and
$\psi$ have finite second moments, i.e.,   $\int x^2
\mu(dx)<\infty$,  $\int y^2
\psi(dy)<\infty$.

\medskip

For $R \ge 0$ let $\Gamma^{\rm R}_{\mu^n,\psi^n}(2^{nR})$ denote the set of
$n$-dimensional Model~3 randomized quantizers defined in \eqref{eq_rqdef} having
input distribution $\mu^n$, output distribution $\psi^n$, and at most $2^{nR}$
levels (i.e., rate $R$). Then
\begin{equation}
  \label{eq_opdef1}
  L_n(\mu,\psi,R) \coloneqq \inf\bigl\{ L(v) :  v \in \Gamma^{\rm R}_{\mu^n,\psi^n}(2^{nR})
  \bigr\} \nonumber 
\end{equation}
is the minimum distortion achievable by such quantizers. 

We also define 
\begin{eqnarray*}
\lefteqn{ D(\mu,\psi,R) } \\
&=& \inf\bigl\{ E[\rho(X,Y)]: X\sim \mu,\; Y\sim \psi,\; I(X;Y)\le R\bigr\},  
\end{eqnarray*}
where the infimum is taken over pairs of all joint distributions of real random
variables $X$ and $Y$ such that $X$ has distribution $\mu$, $Y$ has distribution
$\psi$, and their mutual information $I(X;Y)$ is upper bounded by $R$. 

One can trivially adapt the standard proof from rate-distortion theory to show
that similar to the distortion-rate function, $D(\mu,\psi,R)$ is a convex and
nonincreasing function of $R$. Note that $D(\mu,\psi,R)$ is finite for all $R\ge
0$ by the assumption that $\mu$ and $\psi$ have finite second moments. The
distortion-rate function $D(\mu,R)$ of the i.i.d.\ source $\mu$, is obtained
from $D(\mu,\psi,R)$ as
\[
D(\mu,R)= \inf_{\psi \in \P(\sY)} D(\mu,\psi,R). 
\]

By a standard argument one can easily show that the sequence $\{n
L_n(\mu,\psi,R)\}_{n\ge 1}$ is subadditive and so $\inf_{n\ge 1} L_n(\mu,\psi,R)
= \lim_{n\to \infty} L_n(\mu,\psi,R)$. Thus the limit represents the minimum
distortion achievable with rate-$R$ randomized quantizers for an i.i.d.\ source
with marginal $\mu$ under the constraint that the output is i.i.d.\ with
marginal $\psi$. The next result proves that this limit is equal to
$D(\mu,\psi,R)$, which one could thus call the ``output-constrained
distortion-rate function.'' 

\begin{theorem}
  \label{thm_rd}
We have
\begin{equation}
  \label{eq_thmrd}
  \lim_{n\to \infty }  L_n(\mu,\psi,R) = D(\mu,\psi,R).
\end{equation}
\end{theorem}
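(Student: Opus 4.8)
The plan is to prove the two bounds $L_n(\mu,\psi,R)\ge D(\mu,\psi,R)$ for every $n$ (the converse) and $\limsup_{n\to\infty}L_n(\mu,\psi,R)\le D(\mu,\psi,R)$ (achievability); since the excerpt already notes that $\lim_n L_n(\mu,\psi,R)=\inf_n L_n(\mu,\psi,R)$ exists, these give \eqref{eq_thmrd}. For the converse, fix $n$ and any $v\in\Gamma^{\rm R}_{\mu^n,\psi^n}(2^{nR})$. By Theorem~\ref{newtheo1} realize $v$ as the law of $(X^n,Y^n)$ with $Y^n=\sq(X^n,Z)$, $X^n\sim\mu^n$, $Z$ independent of $X^n$, $Y^n\sim\psi^n$, and $\sq(\,\cdot\,,z)$ taking at most $2^{nR}$ values for each $z$. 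Then $H(Y^n\mid Z)\le nR$ (in bits), so using $Z\perp X^n$,
\[
I(X^n;Y^n)\le I\bigl(X^n;(Y^n,Z)\bigr)=I(X^n;Y^n\mid Z)\le H(Y^n\mid Z)\le nR .
\]
Since the coordinates of $X^n$ are i.i.d.\ $\sim\mu$ and $Y^n\sim\psi^{\otimes n}$, the standard single-letter bound $I(X^n;Y^n)\ge\sum_{i=1}^n I(X_i;Y_i)$ holds, each $(X_i,Y_i)$ has marginals $\mu,\psi$, and $E[\rho(X_i,Y_i)]\ge D(\mu,\psi,I(X_i;Y_i))$ by definition of $D$. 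Using convexity and monotonicity of $r\mapsto D(\mu,\psi,r)$ and Jensen's inequality,
\begin{align*}
L(v)=\frac1n\sum_{i=1}^n E[\rho(X_i,Y_i)]&\ge\frac1n\sum_{i=1}^n D\bigl(\mu,\psi,I(X_i;Y_i)\bigr)\\
&\ge D\bigl(\mu,\psi,\tfrac1n{\textstyle\sum_i}I(X_i;Y_i)\bigr)\ge D(\mu,\psi,R),
\end{align*}
and taking the infimum over $v$ gives $L_n(\mu,\psi,R)\ge D(\mu,\psi,R)$.

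For achievability, fix $\varepsilon>0$. Since $r\mapsto D(\mu,\psi,r)$ is finite, convex and nonincreasing, hence continuous on $(0,\infty)$ (the case $R=0$ being elementary, as a one-level randomized quantizer produces an output independent of the source), choose a joint law $P_{XY}$ with marginals $\mu,\psi$, $I(X;Y)<R$, and $E[\rho(X,Y)]\le D(\mu,\psi,R)+\varepsilon$. For block length $n$, let the common randomness $Z$ consist of a codebook $\{C^n(1),\dots,C^n(K_n)\}$ of i.i.d.\ $\psi^{\otimes n}$ codewords with $K_n=2^{nR}-1$, plus an independent uniform seed; let the encoder apply the likelihood encoder for $P_{XY}^{\otimes n}$ to produce an index $M$, and let the reconstruction be $C^n(M)$. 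This is a valid Model~2 (hence, by Theorem~\ref{newtheo1}, Model~3) randomized quantizer with $K_n$ levels and input marginal $\mu^n$. Since $\frac1n\log_2 K_n\to R>I(X;Y)$, the soft-covering/resolvability analysis of the likelihood encoder shows the induced joint law of $(X^n,C^n(M))$ converges in total variation to $P_{XY}^{\otimes n}$; in particular its output marginal $\tilde\psi_n$ satisfies $\gamma_n:=\|\tilde\psi_n-\psi^{\otimes n}\|_{\mathrm{TV}}\to0$, and, truncating the squared error at a large level $T$ and using the finite second moments of $\mu,\psi$ (which make $\{\tfrac1n\|X^n\|^2\}_n$ and $\{\tfrac1n\|C^n(M)\|^2\}_n$ uniformly integrable) to control the tail uniformly in $n$, $E[\rho_n(X^n,C^n(M))]\to E[\rho(X,Y)]$.

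It remains to repair the output marginal from $\tilde\psi_n$ to $\psi^{\otimes n}$ exactly, at vanishing cost. Write $\psi^{\otimes n}=(\tilde\psi_n\wedge\psi^{\otimes n})+\gamma_n\eta_n$ for the ``deficiency'' probability measure $\eta_n$, and put $r_n:=d(\tilde\psi_n\wedge\psi^{\otimes n})/d\tilde\psi_n\in[0,1]$. Augment $Z$ with a fresh uniform $U\in[0,1]$ and a fresh $\bar Y^n\sim\eta_n$, and change the reconstruction to $C^n(M)$ if $U\le r_n(C^n(M))$ and to $\bar Y^n$ otherwise. For each realization of the augmented common randomness this map takes at most $K_n+1=2^{nR}$ values, so it is a rate-$R$ randomized quantizer; its input marginal is still $\mu^n$, and the maximal-coupling identity shows its output marginal equals $(\tilde\psi_n\wedge\psi^{\otimes n})+\gamma_n\eta_n=\psi^{\otimes n}$ exactly. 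The reconstruction changes only on the event $\mathcal R=\{U>r_n(C^n(M))\}$, which has probability $\gamma_n$ and is independent of $\bar Y^n$, so the additional expected distortion is at most $\tfrac1n\bigl(2E[\|X^n\|^2\mathbf 1_{\mathcal R}]+2E[\|\bar Y^n\|^2\mathbf 1_{\mathcal R}]\bigr)$, and both terms tend to $0$ by the uniform integrability of $\{\tfrac1n\|X^n\|^2\}$ under $\mu^{\otimes n}$ and of $\{\tfrac1n\|Y^n\|^2\}$ under $\psi^{\otimes n}$, together with $\gamma_n\to0$ and $\eta_n\le\gamma_n^{-1}\psi^{\otimes n}$. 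Hence $L_n(\mu,\psi,R)\le D(\mu,\psi,R)+\varepsilon+o(1)$; letting $n\to\infty$ and then $\varepsilon\downarrow0$ gives $\limsup_n L_n(\mu,\psi,R)\le D(\mu,\psi,R)$, which with the converse proves \eqref{eq_thmrd}.

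The converse is a routine single-letterization. The delicate part is the achievability, and within it the genuine difficulty is reconciling the \emph{exact} output constraint $Y^n\sim\psi^{\otimes n}$ with random coding, since a random codebook only yields an output law close to $\psi^{\otimes n}$. This forces two nontrivial ingredients: (i) a soft-covering/resolvability argument to make that closeness quantitative in total variation, and (ii) the correction step above, whose distortion analysis needs the truncation and uniform-integrability bookkeeping required to handle the unbounded squared-error distortion under only the second-moment hypothesis of Assumption~3.
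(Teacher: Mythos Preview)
Your converse is essentially the paper's Lemma~\ref{alemma3}: a standard single-letterization using $H(Y^n|Z)\le nR$, the chain rule, and convexity of $D(\mu,\psi,\cdot)$. No issues there.

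The achievability is a genuinely different route. The paper draws codewords \emph{uniformly from a type class} approximating $\psi$, so that the nearest-neighbor output is exactly uniform on that type class; it then repairs the output to $\psi^{\otimes n}$ by an optimal (Wasserstein) coupling whose cost is controlled by Marton's transportation inequality. All of this is done first for finite alphabets (where the distortion is bounded), and the extension to $\R$ is by a further $W_2$-coupling of $\mu,\psi$ to finitely supported approximants. Your approach---i.i.d.\ $\psi^{\otimes n}$ codebook, likelihood encoder, soft covering, then a maximal-coupling repair---is more direct and avoids the finite-alphabet reduction; the repair step is correct, and your bound on its cost via $\eta_n\le\gamma_n^{-1}\psi^{\otimes n}$ and uniform integrability under $\psi^{\otimes n}$ is valid.

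There is, however, a genuine gap in the pre-repair distortion step. You assert that $\{\tfrac{1}{n}\|C^n(M)\|^2\}_n$ is uniformly integrable ``by the finite second moment of $\psi$,'' but $C^n(M)$ has law $\tilde\psi_n$, not $\psi^{\otimes n}$, and total-variation closeness $\|\tilde\psi_n-\psi^{\otimes n}\|_{\mathrm{TV}}\to 0$ gives \emph{no} control on second moments: the residual part $\tilde\psi_n-\tilde\psi_n\wedge\psi^{\otimes n}$ has mass $\gamma_n$ but could concentrate on $y^n$ with arbitrarily large $\tfrac{1}{n}\|y^n\|^2$. Nor does the crude bound $\tfrac{1}{n}\|C^n(M)\|^2\le\max_m\tfrac{1}{n}\|C^n(m)\|^2$ help, since a useful large-deviations bound for the maximum of $2^{nR}$ i.i.d.\ empirical second moments requires more than a finite second moment on $\psi$. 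Consequently the conclusion $E[\rho_n(X^n,C^n(M))]\to E[\rho(X,Y)]$ is not justified under Assumption~3 alone. The paper avoids this precisely by never converting TV closeness into moment closeness: it works in Wasserstein/$L^2$ throughout, so every coupling cost is directly a squared-error bound and the triangle inequality for $\|\cdot\|_2$ closes the argument. Your scheme can be patched in the same spirit---for instance, restrict the likelihood encoder to codewords with $\tfrac{1}{n}\|C^n(m)\|^2\le T$ (this discards an asymptotically negligible fraction, preserving the soft-covering rate), or first replace $\psi$ by a compactly supported $W_2$-approximation and add one more coupling layer---but as written the uniform-integrability claim is unsupported.
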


\begin{remark} 
\label{rem3}
\begin{itemize}
\item[]
\item[(a)] As usual, the proof of the theorem consists of a converse and an
  achievability part. The converse (Lemma~\ref{alemma3} below) directly follows
  from the usual proof of the converse part of the rate-distortion theorem. In
  fact, this was first noticed in \cite{LiKlKl11} where the special case
  $\psi=\mu$ was considered and (in a different formulation) it was shown that
  for all $n$
\[
 L_n(\mu,\mu,R) \ge  D(\mu,\mu,R). 
\]
Virtually the same argument implies that $L_n(\mu,\psi,R)\ge D(\mu,\psi,R)$ for
all $n$ and $\psi$. Nevertheless, we write out the proof in Appendix~\ref{app5}
since, strictly speaking, the proof in \cite{LiKlKl11} is only valid if $\psi$
is discrete with finite (Shannon) entropy or it has a density and finite
differential entropy.

\item[(b)] The proof of the converse part (Lemma~\ref{alemma3}) is valid for
any randomized quantizer whose output $Y^n$ satisfies  $Y_i\sim \psi$,
$i=1,\ldots,n$. Thus the theorem also holds if in the definition of
$L_n(\mu,\psi,D)$, the randomized quantizers are required to have  outputs  with
identically
distributed (but not 
necessarily independent) components having  common distribution $\psi$.

\item[(c)] In \cite{LiKlKl11} it was left as an open problem if $D(\mu,\mu,R) $
  can be asymptotically achieved by a sequence of distribution-preserving
  randomized quantizers. The authors presented an incomplete achievability proof
  for the special case of Gaussian $\mu$ using dithered lattice quantization.
  We prove the achievability of $D(\mu,\psi,R)$ for arbitrary $\mu$ and $\psi$
  using a fundamentally different (but essentially non-constructive)
  approach. In particular, our proof is based on random coding where the codewords are
  uniformly distributed on the type class of an $n$-type that well approximates
  the target distribution $\psi$, combined with optimal coupling from mass
  transport theory.

\item[(d)] With only minor changes in the proof, the theorem remains valid if
  $\sX=\sY$ are arbitrary Polish spaces with metric $d$ and
  $\rho(x,y)=d(x,y)^p$ for some $p\ge 1$. In this case the finite second moment
  conditions translate into $\int d(x,x_0)^p \,\mu(dx)<\infty$ and $\int d(y,y_0)^p\,
  \psi(dy)<\infty$ for some (and thus all) $x_0,y_0\in \sX$.

\end{itemize}
\end{remark}

\noindent\emph{Proof of Theorem~\ref{thm_rd}.} $\,$ 
In this proof we use Model 2 of randomized quantization which  is
more suitable here than Model~3. Also, it is easier to deal with the rate-distortion
performance that with the distortion-rate performance. Thus, following the notation
in \cite{ZaRo01}, for $D\ge 0$ we define the \emph{minimum mutual information
  with constraint output $\psi$} as
\begin{eqnarray*}
\lefteqn{I_{m}(\mu \| \psi, D)} \\
&=& \inf \bigl\{ I(X;Y): X\sim\mu, Y\sim\psi,
E[\rho(X,Y)]\leq D\bigr\}, \nonumber
\end{eqnarray*}
where the infimum is taken over pairs of all joint distributions of $X$ with
marginal $\mu$ and $Y$ with marginal $\psi$ such that $E[\rho(X,Y)]\le D$.  If
this set of joint distributions is empty, we let $I_{m}(\mu \| \psi, D)=\infty$.
Clearly, the extended real valued functions $I_{m}(\mu \| \psi,\,\cdot\,)$ and
$D(R,\mu,\,\cdot\,)$ are inverses of each other. Hence $I_m(\mu\|\psi,\,D)$ is a nonincreasing, convex function of $D$.

The converse part of the theorem, i.e., the statement $L_{n}(\mu,\psi,R)\geq
D(R,\mu,\psi)$ for all $n\geq 1$, is directly implied by the following lemma.
The proofs of all lemmas in this section are given in Appendix~\ref{app5}.

\begin{lemma}
\label{alemma3}
For all $n\ge 1$ if a randomized quantizer has input distribution $\mu^n$, output
distribution $\psi^n$, and distortion $D$, then its rate is lower
bounded as
\[
R\ge I_{m}(\mu
\| \psi, D).
\]
\end{lemma}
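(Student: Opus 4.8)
The plan is to adapt the classical converse argument of rate--distortion theory, but to carry out every manipulation at the level of mutual information rather than of Shannon or differential entropies, so that no finiteness or absolute-continuity hypothesis on $\mu$ or $\psi$ is needed. Working in Model~2, I would fix a randomized quantizer $(\sq,\nu)$ with $\sq:\sX^n\times\sZ\to\sY^n$ such that $\sq(\,\cdot\,,z)$ is a $2^{nR}$-level quantizer for every $z$, and set $X^n\sim\mu^n$, $Z\sim\nu$ independent of $X^n$, $Y^n=\sq(X^n,Z)$, so that $Y^n\sim\psi^n$ and $E[\rho_n(X^n,Y^n)]=D$. Throughout, $X^{i-1}$ abbreviates $(X_1,\dots,X_{i-1})$.

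The first step is a rate bound. For $\nu$-a.e.\ $z$ the conditional law of $Y^n$ given $Z=z$ is supported on the range of $\sq(\,\cdot\,,z)$, a set of at most $2^{nR}$ points, so $H(Y^n\mid Z)\le nR$. Since $Z$ is independent of $X^n$ and $Y^n$ is a deterministic function of $(X^n,Z)$, the chain rule and monotonicity of mutual information give
\[
I(X^n;Y^n)\le I(X^n;Y^n,Z)=I(X^n;Z)+I(X^n;Y^n\mid Z)=I(X^n;Y^n\mid Z)\le H(Y^n\mid Z)\le nR,
\]
where the penultimate inequality holds because $H(Y^n\mid X^n,Z)=0$. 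Each step is a standard property of mutual information, which is well defined for arbitrary random variables via the supremum over finite partitions, and the only quantity whose finiteness is required, $H(Y^n\mid Z)$, is finite by construction.

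The second step is single-letterization followed by convexification. Using only the independence of the $X_i$ together with the chain rule and nonnegativity of mutual information,
\[
I(X^n;Y^n)=\sum_{i=1}^n I(X_i;Y^n\mid X^{i-1})\ge\sum_{i=1}^n I(X_i;Y_i\mid X^{i-1})=\sum_{i=1}^n I\bigl(X_i;(Y_i,X^{i-1})\bigr)\ge\sum_{i=1}^n I(X_i;Y_i),
\]
where the middle equality uses the chain rule together with $I(X_i;X^{i-1})=0$, and the inequalities are monotonicity. Because $\mu^n$ and $\psi^n$ are product measures, each pair $(X_i,Y_i)$ has marginals $\mu$ and $\psi$; setting $D_i=E[\rho(X_i,Y_i)]$ one has $\frac1n\sum_i D_i=D$ and, by the definition of $I_m$, $I(X_i;Y_i)\ge I_m(\mu\|\psi,D_i)$. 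Invoking the convexity and monotonicity of $I_m(\mu\|\psi,\,\cdot\,)$ recorded just before the lemma, Jensen's inequality then yields
\[
R\ \ge\ \frac1n\sum_{i=1}^n I(X_i;Y_i)\ \ge\ \frac1n\sum_{i=1}^n I_m(\mu\|\psi,D_i)\ \ge\ I_m\bigl(\mu\|\psi,\tfrac1n\textstyle\sum_{i=1}^n D_i\bigr)\ =\ I_m(\mu\|\psi,D),
\]
which is the assertion of the lemma.

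The main obstacle is precisely to keep the single-letterization valid for arbitrary $\mu,\psi\in\P(\R)$: I must avoid entropy-difference identities such as $H(X^n\mid Y^n)\le\sum_i H(X_i\mid Y_i)$, because these presuppose that $\psi$ has finite Shannon or differential entropy --- which is exactly the gap in the argument of \cite{LiKlKl11}. Every step above is instead justified directly from independence and the elementary properties of mutual information (chain rule, nonnegativity, monotonicity under adjoining variables, and $I\le H$ when the conditioning entropy is finite), so the bound goes through for completely general $\mu$ and $\psi$ with finite second moments.
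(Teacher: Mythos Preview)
Your proof is correct and follows essentially the same route as the paper's: both bound $nR\ge H(Y^n\mid Z)\ge I(X^n;Y^n\mid Z)\ge I(X^n;Y^n)\ge\sum_i I(X_i;Y_i)$ using only the chain rule, independence of $Z$ and $X^n$, and independence of the $X_i$, and then apply the definition and convexity of $I_m(\mu\|\psi,\,\cdot\,)$. Your single-letterization is written out in more detail than the paper's (which simply cites the chain rule for the step $I(X^n;Y^n)\ge\sum_i I(X_i;Y_i)$), and your explicit emphasis on avoiding entropy-difference identities is exactly the point the paper makes in Remark~\ref{rem3}(a) about the gap in \cite{LiKlKl11}.
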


In the rest of the proof we show the achievability of $D(R,\mu,\psi)$. We first
prove this for finite alphabets and then generalize to continuous alphabets.

Let $\sX=\sY$ be finite sets and assume that $\rho(x,y)= d(x,y)^p$, where $d$
is a metric on $\sX$ and $p > 0$. For each $n$ let $\psi_{n}$ be a closest 
$n$-type \cite[Chapter 11]{CoTh06} to $\psi$ in the $l_1$-distance  which is
absolutely continuous with respect to $\psi$, i.e., $\psi_{n}(y)=0$ whenever
$\psi(y)=0$.  Let $D$ be such that $I_m(\mu\|\psi,D)<\infty$, let $\varepsilon >0$
be arbitrary,  and set $R=I_{m}(\mu \| \psi,
D)+\varepsilon$. Assume $X^n\sim \mu^n$ for $n\geq1$. For each $n$ generate
$2^{nR}$ codewords uniformly and independently drawn from $T_{n}(\psi_n)$, the
\emph{type class} of $\psi_n$ \cite{CoTh06}, i.e., independently (of each other and of $X^n$)
generate random codewords $U^{n}(1),\ldots,U^{n}(2^{nR})$ such that
$U^{n}(i)\sim\psi^{(n)}_{n}$, where
\begin{align}
\psi^{(n)}_{n}(y^{n})=\begin{cases}
\frac{1}{|T_n(\psi_n)|},   &\text{if } y^{n}\in T_n(\psi_n)  \\
0,  &\text{otherwise.}
\end{cases}
\nonumber
\end{align}
(As usual, for simplicity we assume that $2^{nR}$ is an integer.) Let
$\hat{X}^{n}$ denote the output of the nearest neighborhood encoder:
$\hat{X}^{n}= \argmin\limits_{1\le i\le 2^{nR}} \rho_n(X^n,U^{n}(i))$. In case
of ties, we choose $U^{n}(i)$ with the smallest index $i$. The next lemma states
the intuitively clear fact that $\hat{X}^{n}$ is uniformly distributed on
$T_{n}(\psi_n)$.

\begin{lemma} 
\label{alemma1}
 $\hat{X}^{n}\sim \psi^{(n)}_{n} $.
\end{lemma}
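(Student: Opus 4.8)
The plan is to exploit the symmetry of the construction under permutations of the codebook together with the fact that the codewords are i.i.d.\ uniform on the type class $T_n(\psi_n)$. First I would fix any $u^n\in T_n(\psi_n)$ and compute $P(\hat{X}^n=u^n)$. Conditioning on $X^n=x^n$, the event $\{\hat{X}^n=u^n\}$ occurs exactly when some codeword equals $u^n$ and, among all codewords achieving the minimum distortion to $x^n$, the one with the smallest index takes the value $u^n$. The key observation is that since $U^n(1),\ldots,U^n(2^{nR})$ are exchangeable and each is uniform on $T_n(\psi_n)$, the conditional law of the \emph{multiset} of codeword values given $X^n=x^n$ is permutation-invariant, and for any two points $u^n,\tilde u^n\in T_n(\psi_n)$ there is a measure-preserving bijection of the codebook space (namely, swap the roles of the symbols so as to map $u^n\mapsto \tilde u^n$ while fixing $x^n$'s distortion structure) — more carefully, I would use that $\rho_n$ depends only on the empirical joint type of $(x^n,u^n)$, and all elements of $T_n(\psi_n)$ play symmetric roles relative to a fixed $x^n$ only after averaging over $x^n$. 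So rather than symmetry in $u^n$ directly, the cleaner route is symmetry in the codebook index.

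Concretely, I would argue as follows. For each $n$, let $W$ denote the (random) codebook $\bigl(U^n(1),\ldots,U^n(2^{nR})\bigr)$, independent of $X^n$. For a fixed $x^n$, let $j^*(w,x^n)$ be the smallest index achieving $\min_i \rho_n(x^n,w_i)$; then $\hat X^n = w_{j^*(W,X^n)}$. For any $u^n\in T_n(\psi_n)$,
\[
P(\hat X^n=u^n)=\sum_{x^n}\mu^n(x^n)\,P\bigl(W_{j^*(W,x^n)}=u^n\bigr).
\]
Now fix $x^n$ and consider the map on codebooks that applies a fixed permutation $\tau$ of $\{1,\ldots,2^{nR}\}$ to the indices; since the $W_i$ are i.i.d., this map is measure-preserving, and it sends $j^*(w,x^n)$ to $\tau^{-1}(j^*(\tau\cdot w,x^n))$ in a way that leaves $w_{j^*(w,x^n)}$ unchanged as a function — so index permutations alone do not move the value. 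Instead I would permute the \emph{alphabet symbols}: let $\sigma$ be any permutation of $\sX$; applying $\sigma$ coordinatewise to every codeword and simultaneously replacing $x^n$ by $\sigma(x^n)$ preserves $\rho_n$ (since $\rho$ is built from the metric $d$... but $\sigma$ need not be an isometry). This is the step that needs care: the symmetry that makes $\hat X^n$ uniform is really the symmetry of the uniform distribution on $T_n(\psi_n)$ under the subgroup of permutations of $\sX$ that fix the type $\psi_n$, combined with averaging over $X^n\sim\mu^n$ — but $\mu^n$ is not invariant under that subgroup.

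Because of this, I expect the cleanest and most robust argument is a direct counting/probabilistic computation rather than a pure symmetry appeal: condition on $X^n=x^n$, and for $u^n\in T_n(\psi_n)$ write
\[
P(\hat X^n=u^n\mid X^n=x^n)=\sum_{k=1}^{2^{nR}} P\bigl(U^n(k)=u^n,\ k=j^*(W,x^n)\bigr).
\]
For the $k$-th term, $U^n(k)=u^n$ has probability $1/|T_n(\psi_n)|$, and conditioned on this, $k=j^*$ requires $U^n(i)$ for $i<k$ to have strictly larger distortion than $\rho_n(x^n,u^n)$ and $U^n(i)$ for $i>k$ to have distortion $\ge \rho_n(x^n,u^n)$; these are independent events whose probabilities depend on $u^n$ \emph{only through} $\rho_n(x^n,u^n)$, and in fact only through the number of points of $T_n(\psi_n)$ at each distortion level from $x^n$. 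Summing over $k$ the geometric-type series gives a closed form that, crucially, depends on $u^n$ only via $\rho_n(x^n,u^n)$ — and then summing that expression over all $u^n$ in a fixed distortion shell shows the conditional distribution is uniform on each shell with total shell mass proportional to shell size, hence uniform on all of $T_n(\psi_n)$. The main obstacle is bookkeeping the tie-breaking rule correctly: I would handle ties by grouping $T_n(\psi_n)$ into distortion shells $S_0(x^n),S_1(x^n),\ldots$ (ordered increasingly), noting that $\hat X^n$ always lands in the lowest nonempty shell among those hit by the codebook, and that conditioned on landing in shell $S_\ell$, by exchangeability of the codewords the first (smallest-index) codeword in that shell is uniform over $S_\ell$; averaging $P(\hat X^n\in S_\ell\mid X^n=x^n)\cdot \mathrm{Unif}(S_\ell)$ over $\ell$ and then over $x^n\sim\mu^n$ collapses, by $\sum_\ell P(\hat X^n\in S_\ell)=1$ and $|S_\ell|/|T_n(\psi_n)|$ weighting, to the uniform distribution $\psi_n^{(n)}$ — completing the proof.
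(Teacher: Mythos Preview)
Your shell decomposition is correct as far as it goes: conditionally on $X^n=x^n$ and on the event $\{\hat X^n\in S_\ell(x^n)\}$, the output $\hat X^n$ is indeed uniform on $S_\ell(x^n)$. But the next step fails. You assert that the ``total shell mass [is] proportional to shell size,'' which would make the conditional law of $\hat X^n$ given $X^n=x^n$ uniform on all of $T_n(\psi_n)$; this is false for a fixed $x^n$. Nearer shells are hit with probability far exceeding their relative size---that is the whole point of nearest-neighbor encoding. Concretely, $P(\hat X^n\in S_\ell\mid X^n=x^n)=(1-p_{\ell-1})^{M}-(1-p_\ell)^{M}$ with $p_\ell=\sum_{j\le\ell}|S_j|/|T_n(\psi_n)|$ and $M=2^{nR}$, which is not proportional to $|S_\ell|$. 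Your fallback---that averaging over $x^n\sim\mu^n$ ``collapses'' everything to uniform ``by $\sum_\ell P(\hat X^n\in S_\ell)=1$ and $|S_\ell|/|T_n(\psi_n)|$ weighting''---is precisely the content of the lemma, and nothing in your computation explains why the $x^n$-average should equalize the masses at two different points $u^n,v^n\in T_n(\psi_n)$.

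The symmetry you need, and did not try, is permutation of the \emph{coordinates} $\{1,\ldots,n\}$ (not codebook indices, not alphabet letters). For any such permutation $\sigma$, the map $y^n\mapsto y^n_\sigma$ acts transitively on $T_n(\psi_n)$, it preserves $\rho_n$ jointly in the sense that $\rho_n(x^n_\sigma,y^n_\sigma)=\rho_n(x^n,y^n)$, and---this is what your alphabet-permutation attempt lacked---$\mu^n$ is invariant under it because $\mu^n$ is a product measure. From these three facts one gets $(X^n_\sigma,U^{2^{nR}}_\sigma)\stackrel{d}{=}(X^n,U^{2^{nR}})$ and $g(x^n_\sigma,w_\sigma)=g(x^n,w)_\sigma$, hence $\hat X^n_\sigma\stackrel{d}{=}\hat X^n$; since any two elements of $T_n(\psi_n)$ differ by some coordinate permutation, uniformity follows. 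This is exactly the paper's proof. Your shell computation can be completed by this same symmetry (change variables $x^n\mapsto x^n_\sigma$ in the outer sum and observe that $\sigma$ carries $S_\ell(x^n)$ bijectively onto $S_\ell(x^n_\sigma)$ with $u^n\in S_\ell(x^n)$ iff $u^n_\sigma\in S_\ell(x^n_\sigma)$), but without it the argument does not close.
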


The idea for this random coding scheme comes from \cite{ZaRo01} where an
infinite i.i.d.\ codebook $\{U^n(i)\}_{i=1}^{\infty}$ was considered and the
coding rate was defined as $(1/n)\log N_n$, where $N_n$ is the smallest index $i$
such that $\rho_n(X^n,U^n(i))\le D$. If the $U^n(i)$ are uniformly chosen from
the type class $T_n(\psi_n)$, then by Theorem~1 and Appendix~A and B of
\cite{ZaRo01}, $(1/n) \log N_n - I_m(\mu\|\psi_n,D)\to 0 $ in probability. 

Our scheme converts this variable-length random coding scheme into a fixed-rate
scheme by considering, for each blocklength $n$,  the finite codebook
\{$U^{n}(i)\}_{i=1}^{2^{nR}}$.  Letting $\rho_{\rm max} = \max_{x,y}
\rho(x,y)$, the expected distortion of our scheme is bounded as
\[
 E[\rho_n(X^{n},\hat{X}^{n})] \le D + \rho_{\rm max} \Pr\Bigl\{\,  \frac{1}{n}
 \log N_n > R \Bigr\} .
 \]
 Since $I_{m}(\mu \| \psi_{n}, D)\rightarrow I_{m}(\mu \| \psi, D)$ by the
 continuity of $I_m(\mu\|\psi,D)$ in $\psi$ (see \cite[Appendix A]{ZaRo01}), we
 have $R\ge
 I_m(\mu\|\psi_n,D)+\delta $ for some $\delta>0$ if $n$ is large enough. Thus the
 above bound  implies 
\begin{align}
\limsup_{n\to \infty} E[\rho_n(X^{n},\hat{X}^{n})]\leq D.
\label{aeq4}
\end{align}

Hence our random coding scheme has the desired rate and distortion as $n\to
\infty$. However, its output $\hat{X}^n$ has distribution $\psi^{(n)}_{n}$ instead of
the required $\psi^n$. The next lemma shows that the normalized
Kullback-Leibler divergence (relative entropy, \cite{CoTh06}) between
$\psi^{(n)}_{n} $ and $\psi^{n}$ asymptotically vanishes.

\begin{lemma}
$\dfrac{1}{n} \mathcal{D}(\psi^{(n)}_{n}\|\psi^{n})\rightarrow 0$ as $n\rightarrow\infty$.
\label{alemma2}
\end{lemma}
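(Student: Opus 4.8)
The plan is to express $\frac{1}{n}\D(\psi_n^{(n)}\|\psi^n)$ in closed form using the fact that $\psi_n^{(n)}$ is the uniform distribution on the type class $T_n(\psi_n)$, while $\psi^n$ is an i.i.d.\ product, and then invoke the standard type-counting estimates from \cite[Chapter~11]{CoTh06}. First I would recall that for any $y^n \in T_n(\psi_n)$ the product probability $\psi^n(y^n)$ depends only on the type $\psi_n$ (since $\mu$ and $\psi$ play no role here beyond $\psi$); writing $\psi^n(y^n) = \prod_{a} \psi(a)^{n\psi_n(a)} = 2^{-n(H(\psi_n) + \D(\psi_n\|\psi))}$, which is constant over the type class. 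Since $\psi_n^{(n)}$ is supported on $T_n(\psi_n)$ with $\psi_n^{(n)}(y^n) = 1/|T_n(\psi_n)|$ there, a direct computation gives
\[
\D(\psi_n^{(n)}\|\psi^n) = \log\frac{1}{|T_n(\psi_n)|} + n\bigl(H(\psi_n) + \D(\psi_n\|\psi)\bigr).
\]
Here one must be slightly careful about the absolute-continuity condition $\psi_n \ll \psi$: this is exactly why $\psi_n$ was chosen to vanish wherever $\psi$ does, so that $\D(\psi_n\|\psi)<\infty$ and all the terms above are finite.

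Next I would apply the bounds on the size of a type class, namely $(n+1)^{-|\sY|} 2^{nH(\psi_n)} \le |T_n(\psi_n)| \le 2^{nH(\psi_n)}$ from \cite[Chapter~11]{CoTh06}. Substituting the upper bound on $|T_n(\psi_n)|$ into the displayed identity shows $\D(\psi_n^{(n)}\|\psi^n) \ge n\D(\psi_n\|\psi) \ge 0$, and substituting the lower bound gives
\[
\D(\psi_n^{(n)}\|\psi^n) \le |\sY|\log(n+1) + n\D(\psi_n\|\psi).
\]
Dividing by $n$, it remains to show that $\D(\psi_n\|\psi)\to 0$. This follows because $\psi_n$ is chosen as a closest $n$-type to $\psi$ in $l_1$ distance (among those absolutely continuous with respect to $\psi$), so $\|\psi_n-\psi\|_1 \to 0$; then $\psi_n \to \psi$ entrywise, and since $\psi$ is fixed with finitely many atoms, continuity of $t\mapsto t\log(t/\psi(a))$ on the relevant support gives $\D(\psi_n\|\psi)\to 0$. (One should note that an $n$-type absolutely continuous with respect to $\psi$ and within $O(1/n)$ of $\psi$ in $l_1$ always exists for $n$ large, e.g.\ by rounding $n\psi(a)$ to nearby integers on the support of $\psi$.)

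The main obstacle — though a mild one — is handling the absolute-continuity / support bookkeeping correctly: one needs $\psi_n\ll\psi$ to keep $\D(\psi_n\|\psi)$ finite, one needs the type class $T_n(\psi_n)$ to have positive $\psi^n$-probability (guaranteed by the same condition), and one needs to know that such a good approximating type exists for all large $n$. Once these finiteness and existence issues are dispatched, the rest is the routine type-size sandwich and a continuity argument, and the $\frac{1}{n}|\sY|\log(n+1)$ term vanishes in the limit. \hfill\IEEEQED
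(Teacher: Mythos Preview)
Your proof is correct and follows essentially the same route as the paper: compute $\D(\psi_n^{(n)}\|\psi^n)$ explicitly using the constant value of $\psi^n$ on the type class, sandwich via the standard bounds $(n+1)^{-|\sY|}2^{nH(\psi_n)}\le |T_n(\psi_n)|\le 2^{nH(\psi_n)}$, and conclude from $\D(\psi_n\|\psi)\to 0$ since $\psi_n\to\psi$ in $l_1$ on a finite alphabet. The paper's argument is identical up to notation (it writes $|\sX|$ rather than $|\sY|$, which is harmless here since $\sX=\sY$), and your extra remarks on absolute continuity simply make explicit what the paper stipulated when choosing $\psi_n$.
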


Let $\pi,\lambda \in \P(\sX)$. The optimal transportation cost $\hat{T}_n(\pi,\lambda)$
between $\pi $ and $\lambda$ (see, e.g., \cite{Vil09}) with cost function
$\rho_n$ is defined by
\begin{equation}
  \label{eq_optcoupl}
  \hat{T}_n(\pi,\lambda)=  \inf\bigl\{ E[\rho_n(U^n,V^n)]: \,
    U^n\sim \pi, \; V^n \sim \lambda \bigr\},
\end{equation}
where the infimum is taken over all joint distribution of pairs of random
vectors $(U^n,V^n)$ satisfying the given marginal distribution constraints. The
joint distribution achieving $\hat{T}_n(\pi,\lambda) $ as well as the resulting
pair $(U^n,V^n)$ are both called an optimal coupling of $\pi$ and $\lambda$.
Optimal couplings exist when $\sX$ is finite or $\sX=\R^n$, 
$\rho(x,y)=(x-y)^2$, and both $\pi$ and $\lambda$ both have finite second
moments \cite{Vil09}.

Now consider an optimal coupling $(\hat{X}^n,Y^n)$ of $\psi_n^{(n)}$ and $\psi^n$.  If
$Z_1$ and $Z_2$ are uniform random variables on $[0,1]$ such that 
$Z=(Z_1,Z_2)$ is independent of $X^n$, then the random code and optimal coupling
can be ``realized'' as $(U^n(1),\ldots,U^n(2^{nR}))= f_n(Z_1)$, $\hat{X}^n =
\hat{f}_n(X^n,Z_1)$,  and
$Y^n=g_n(\hat{X}^n,Z_2)$, where $f_n$, $\hat{f}_n$, and $g_n$ are suitable (measurable)
functions.  Combining random coding with optimal coupling this way gives rise
to a randomized
quantizer of type Model~2 whose output has the desired distribution
$\psi^n$ (see Fig.~\ref{afig1}).

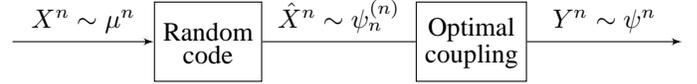
\begin{figure}[h]
\centering
\tikzstyle{int}=[draw, fill=white!20, minimum size=3em]
\tikzstyle{init} = [pin edge={to-,thin,black}]
\begin{tikzpicture}[node distance=3.5cm,auto,>=latex']
    \node [int] (a) {$\overset{\text{\normalsize Random}}{\text{code}}$};
    \node (b) [left of=a,node distance=2.6cm, coordinate] {a};
    \node [int] (c) [right of=a] {$\overset{\text{\normalsize Optimal}}{\text{\normalsize coupling}}$};
    \node [coordinate] (end) [right of=c, node distance=2.8cm]{};
    \path[->] (b) edge node {$X^n\sim\mu^n$} (a);
    \path[-] (a) edge node {$\hat{X}^n\sim\psi^{(n)}_{n}$} (c);
    \draw[->] (c) edge node {$Y^n\sim\psi^{n}$} (end) ;
\end{tikzpicture}
\caption{$D(R,\mu,\psi)$ achieving randomized quantizer scheme.}
\label{afig1}
\end{figure}

The next lemma uses Marton's inequality \cite{Mar96} to show that the
extra distortion introduced by the coupling step asymptotically vanishes.

\begin{lemma}
 \label{lem_coupling}
We have 
\[
\lim_{n\to \infty}   \hat{T}_n(\psi_n^{(n)},\psi^n) = 0
\]
and consequently
\[
\limsup_{n\to \infty} E\bigl[\rho_n(X^n,Y^n)\bigr] \le D.
\]
\end{lemma}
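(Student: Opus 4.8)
The plan is to establish the two assertions in turn, the first being the core estimate and the second an easy consequence.

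First I would prove that $\hat{T}_n(\psi_n^{(n)},\psi^n)\to 0$. By the definition \eqref{eq_optcoupl}, $\hat{T}_n(\psi_n^{(n)},\psi^n)$ is the optimal transportation cost for the average squared-error cost $\rho_n$ between the uniform distribution $\psi_n^{(n)}$ on the type class $T_n(\psi_n)$ and the product measure $\psi^n$. The key tool is Marton's transportation-information inequality \cite{Mar96}, which bounds the quadratic Wasserstein-type distance between a product measure and any other measure by (a constant times) the square root of their normalized relative entropy. Concretely, Marton's inequality gives a bound of the form $\hat{T}_n(\psi_n^{(n)},\psi^n)\le C\bigl(\tfrac1n\mathcal{D}(\psi_n^{(n)}\|\psi^n)\bigr)^{1/2}$ for a constant $C$ depending only on $\psi$ (through, e.g., the diameter of the finite alphabet $\sX$, or, in the real case, through the finite second moments of $\mu$ and $\psi$). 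Then Lemma~\ref{alemma2}, which states that $\tfrac1n\mathcal{D}(\psi_n^{(n)}\|\psi^n)\to 0$, immediately yields $\hat{T}_n(\psi_n^{(n)},\psi^n)\to 0$. I would be careful to apply Marton's inequality in the correct direction (the product measure must be the reference measure, which is $\psi^n$ here) and to normalize by $n$ consistently with the definition of $\rho_n$.

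Next I would deduce the distortion bound. Let $(\hat{X}^n,Y^n)$ be an optimal coupling of $\psi_n^{(n)}$ and $\psi^n$, coupled with $X^n$ as described just before the lemma (so that $\hat X^n$ is the random-coding output and $Y^n$ is obtained from $\hat X^n$ via the coupling, with all auxiliary randomness independent of $X^n$). By the triangle inequality for the $L^2(\rho)$ seminorm, or equivalently for the metric $d$ with the exponent $p=2$ handled through $(a+b)^2\le (1+\eta)a^2 + (1+\eta^{-1})b^2$, one gets
\[
E\bigl[\rho_n(X^n,Y^n)\bigr]^{1/2} \le E\bigl[\rho_n(X^n,\hat X^n)\bigr]^{1/2} + E\bigl[\rho_n(\hat X^n,Y^n)\bigr]^{1/2}.
\]
The second term on the right equals $\hat{T}_n(\psi_n^{(n)},\psi^n)^{1/2}\to 0$ by the first part of the lemma. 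The first term has $\limsup$ at most $D^{1/2}$ by \eqref{aeq4} (Lemma~\ref{alemma1} and the preceding estimates showing $\limsup_n E[\rho_n(X^n,\hat X^n)]\le D$). Taking $\limsup$ and squaring gives $\limsup_n E[\rho_n(X^n,Y^n)]\le D$, as claimed.

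The main obstacle is the first part: invoking Marton's inequality correctly and verifying that its hypotheses hold in the present setting. In the finite-alphabet case this is clean, since all cost functions are bounded and Marton's original statement applies directly with a constant proportional to the squared diameter. The subtlety is the later generalization (Remark~\ref{rem3}(d), and already the real-valued case of Assumption~3) where $\sX=\R$ is unbounded; there one needs a version of the transportation inequality valid for product measures with finite second moments, and one must check that the relevant constant stays bounded as $n$ grows — which it does because the inequality tensorizes and the per-coordinate constant depends only on $\psi$. A secondary technical point is the measurable realization of the random code and the optimal coupling as functions $f_n,\hat f_n,g_n$ of $(X^n,Z_1,Z_2)$; existence of optimal couplings (needed to even define this realization) was already recorded in the excerpt for the two relevant cases, so this is routine.
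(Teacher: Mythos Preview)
Your proposal is correct and follows essentially the same line as the paper. The only notable difference is presentational: the paper makes the intermediate step through Hamming distance explicit---bounding $\rho_n\le \rho_{\max}\,\rho^H_n$ on the finite alphabet and then applying Marton's inequality in its original form $T^H_n(\psi_n^{(n)},\psi^n)\le\sqrt{\tfrac{1}{2n}\mathcal{D}(\psi_n^{(n)}\|\psi^n)}$---whereas you fold this into a single statement with a diameter-dependent constant $C$. Since Marton's inequality in \cite{Mar96} is stated for the $\bar d$ (Hamming) distance, the paper's two-step version is the cleaner citation; your parenthetical about the real-valued case and second moments is unnecessary here, as Lemma~\ref{lem_coupling} is invoked only in the finite-alphabet stage of the proof (the continuous case is handled separately via Lemma~\ref{alemma4}).
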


In summary, we have shown that there exists a sequence of Model~2 randomized
quantizers having rate $R=I_m(\mu\|\psi,D)+\varepsilon$ and asymptotic distortion
upper bounded by $D$ which satisfy the output distribution constraint $Y^n\sim
\psi^n$.  Since $\varepsilon>0$ is arbitrary, this completes the proof of the
achievability of $I_m(\mu\|\psi,D)$ (and the achievability of $D(\mu,\psi,R)$) for
finite source and reproduction alphabets.

\begin{remark}
We note that an obvious approach to achievability would be to generate a
codebook where the codewords have i.i.d.\ components drawn according to
$\psi$. However, the output distribution of the resulting the scheme would be
\emph{too far} from the desired $\psi^n$. In particular, such a scheme produces
output $\hat{X}^n$ whose empirical distribution (type) converges to a ``favorite
type'' which is typically different from $\psi$ \cite[Theorem~4]{ZaRo01}. As
well, the rate achievable with this scheme at distortion level $D$ is
\cite[Theorem~2]{YaKi98} 
\[
R= \min_{\psi' \in \P(\sY)} \bigl( I_m(\mu\|\psi', D) + \mathcal{D}(\psi'\|\psi)
\bigr) 
\]
which is typically strictly less than $I_m(\mu\|\psi,D)$. 
\end{remark}

Now let  $\sX=\sY=\R$, $\rho(x,y)=(x-y)^2$,  and assume that $\mu$ and
$\psi$ have finite second moments. We make use of the final alphabet case to
prove achievability for this continuous case. The following lemma provides the
necessary link between the two cases.

\begin{lemma}
\label{alemma4}
There exist a sequence $\{A_k\}$ of finite subsets of $\R$ and sequences
of probability measures $\{\mu_k\}$ and $\{\psi_k\}$, both supported on $A_k$,
such that 
\begin{itemize}
\item [(i)] $\hat{T}_1(\mu,\mu_k) \rightarrow 0$, $\hat{T}_1(\psi,\psi_k) \rightarrow 0$
  as $k\rightarrow\infty$;
\item [(ii)] For any $\varepsilon >0$ and 
  $D\geq0$ such that  $I_m(\mu\|\psi,D)<\infty$, we have 
 $I_m(\mu_k\|\psi_k,D+\varepsilon) \leq
  I_m(\mu\|\psi,D)$ for all $k$ large enough. 
\end{itemize}
\end{lemma}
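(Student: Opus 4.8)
The plan is to approximate $\mu$ and $\psi$ by pushing both of them through a single, increasingly fine finite quantizer of $\R$, and then to obtain part (ii) by pushing an \emph{optimal} coupling for $I_m(\mu\|\psi,D)$ through the same quantizer, controlling the mutual information by the data‑processing inequality and the distortion by dominated convergence. Concretely, for each $k$ I would let $Q_k\colon\R\to\R$ send $x$ to the nearest point of the grid $\{j/k:j\in\Z,\ |j|\le k^2\}$ when $|x|\le k$ and set $Q_k(x)=0$ when $|x|>k$; then $A_k=Q_k(\R)$ is a finite set and $\mu_k=Q_k\#\mu$, $\psi_k=Q_k\#\psi$ are supported on $A_k$. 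Here $|Q_k(x)-x|\le 1/(2k)$ for $|x|\le k$ and $|Q_k(x)-x|=|x|$ otherwise, so $|Q_k(x)|\le |x|+1$ for all $x$ and $Q_k(x)\to x$ pointwise. For (i), the pair $(X,Q_k(X))$ with $X\sim\mu$ couples $\mu$ and $\mu_k$, so $\hat T_1(\mu,\mu_k)\le E[(X-Q_k(X))^2]\le (2k)^{-2}+\int_{\{|x|>k\}}x^2\,\mu(dx)\to 0$ by the finite‑second‑moment hypothesis; the same estimate for $\psi$ gives $\hat T_1(\psi,\psi_k)\to 0$.

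Next I would establish that whenever $I_m(\mu\|\psi,D)<\infty$ the infimum in its definition is attained. The coupling set $\Gamma_{\mu,\psi}$ is tight (the two marginals are fixed) and weakly closed, hence weakly compact; the map $v\mapsto\int(x-y)^2\,v(dx\,dy)$ is weakly lower semicontinuous, being a supremum over $M$ of the weakly continuous maps $v\mapsto\int(x-y)^2\wedge M\,dv$, so the sublevel set $\{v\in\Gamma_{\mu,\psi}:\int(x-y)^2\,dv\le D\}$ is weakly compact, and it is nonempty since $I_m(\mu\|\psi,D)<\infty$. Finally $v\mapsto I(X;Y)=\D(v\,\|\,\mu\otimes\psi)$ is weakly lower semicontinuous on $\Gamma_{\mu,\psi}$ because relative entropy is. A lower semicontinuous functional on a nonempty weakly compact set attains its minimum, so there exist $(X^*,Y^*)$ with $X^*\sim\mu$, $Y^*\sim\psi$, $E[(X^*-Y^*)^2]\le D$ and $I(X^*;Y^*)=I_m(\mu\|\psi,D)$.

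For (ii), fix $\varepsilon>0$ and $D$ with $I_m(\mu\|\psi,D)<\infty$ and consider the coupling $(Q_k(X^*),Q_k(Y^*))$ of $\mu_k$ and $\psi_k$. Since $Q_k(X^*)$ is a function of $X^*$ and $Q_k(Y^*)$ of $Y^*$, the chain $Q_k(X^*) - X^* - Y^* - Q_k(Y^*)$ is Markov, so the data‑processing inequality gives $I(Q_k(X^*);Q_k(Y^*))\le I(X^*;Y^*)=I_m(\mu\|\psi,D)$. Moreover $(Q_k(X^*)-Q_k(Y^*))^2\to (X^*-Y^*)^2$ almost surely and is dominated by $4(X^*)^2+4(Y^*)^2+8\in L^1$ (using $|Q_k(x)|\le |x|+1$), so $E[(Q_k(X^*)-Q_k(Y^*))^2]\to E[(X^*-Y^*)^2]\le D$ by dominated convergence. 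Hence for all $k$ large enough this expectation is at most $D+\varepsilon$, which makes $(Q_k(X^*),Q_k(Y^*))$ admissible in the definition of $I_m(\mu_k\|\psi_k,D+\varepsilon)$ and yields $I_m(\mu_k\|\psi_k,D+\varepsilon)\le I_m(\mu\|\psi,D)$, as required.

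The hard part is the attainment step of the second paragraph: the push‑forward argument produces the \emph{exact} inequality in (ii) only because one starts from a coupling that achieves $I_m(\mu\|\psi,D)$, and verifying that such a coupling exists requires combining the weak lower semicontinuity of mutual information with the weak compactness of the distortion‑constrained coupling set. Starting instead from a merely near‑optimal coupling would only give $I_m(\mu_k\|\psi_k,D+\varepsilon)\le I_m(\mu\|\psi,D)+\eta$ for every $\eta>0$, which is strictly weaker than the bound the lemma asserts; the remaining ingredients (the quantizer construction, dominated convergence, and the data‑processing inequality) are routine.
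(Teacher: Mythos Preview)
Your proof is correct and follows essentially the same route as the paper's: push both coordinates of an optimal coupling achieving $I_m(\mu\|\psi,D)$ through a single finite-range quantizer, control mutual information by the data-processing inequality, and show the squared-error distortion converges. The only cosmetic differences are that the paper uses a uniform quantizer on $[-k,k]$ with $2^k$ levels and nearest-neighbor extension, cites \cite{Csi74} for attainment of the infimum rather than arguing it via weak compactness and lower semicontinuity as you do, and obtains (i) by invoking \cite[Theorem~6.9]{Vil09} instead of exhibiting the coupling $(X,Q_k(X))$ directly.
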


Let $\mu_k^n$ and $\psi_k^n$ denote the $n$-fold products of $\mu_k$ and
$\psi_k$, respectively.  Definition \eqref{eq_optcoupl} of optimal coupling implies that
$\hat{T}_n(\mu^n,\mu_k^n)\leq \hat{T}_1(\mu,\mu_{k})$ and $\hat{T}_n(\psi^{n},\psi_{k}^{n})\leq
\hat{T}_1(\psi,\psi_{k})$.  Hence for any given $\varepsilon>0$ by Lemma~\ref{alemma4}
we can choose $k$ large enough such that  for all $n$,
\begin{align}
\hat{T}_n(\mu^n,\mu_{k}^{n}) \le \varepsilon \text{ and } 
\hat{T}_n(\psi^n,\psi_{k}^{n}) \leq \varepsilon,
\label{aeq7}
\end{align}
and also $I_m(\mu_k\|\psi_k,D+\varepsilon) \leq
I_m(\mu\|\psi,D)$.

Now for each $n$ define the following randomized quantizer:
\begin{itemize}
\item [(a)] Realize the optimal coupling between $\mu^{n}$ and $\mu_{k}^{n}$.
\item [(b)] Apply the randomized quantizer scheme for the finite alphabet case
  with common source and output alphabet $A_k$, source distribution
  $\mu_{k}^{n}$,  and output distribution
  $\psi_{k}^{n}$. Set the rate of the quantizer to
  $R=I_m(\mu\|\psi,D)+\varepsilon$. 
\item [(c)] Realize the optimal coupling between $\psi_{k}^{n}$ and $\psi^{n}$.
\end{itemize}

In particular, the optimal couplings are realized as follows: in (a) the source
$X^n\sim \mu^n$ is mapped to $X^n(k)\sim \mu_k^n$, which serves as the source in
(b), via $X^n(k)= \hat{f}_{n,k}(X^n,Z_3)$, and in (c) the output $Y^n(k)\sim
\psi_k^n$ of the scheme in (b) is mapped to $Y^n\sim \psi^n$ via $Y^n=
\hat{g}_{n,k}(Y^n(k),Z_4)$, where $Z_3$ and $Z_4$ are uniform randomization
variables that are independent of $X^n$. Thus the composition of these three
steps is a valid Model~2 randomized quantizer.

Since $R= I_m(\mu\|\psi,D)+\varepsilon$, in step (b) the asymptotic (in $n$)
distortion $D+\varepsilon$ can be achieved by Lemma~\ref{alemma4}(ii).  Using
(\ref{aeq7}) and the triangle inequality for the norm $\|V^n\|_2 \coloneqq
\bigl( \sum_{i=1}^nE[ V_i^2] \bigl)^{1/2} $ on $\R^n$-valued random vectors
having finite second moments, it is straightforward to show that the asymptotic
distortion of the overall scheme is upper bounded by $D + l(\varepsilon)$, where
$l(\varepsilon)\to 0$ as $\varepsilon\to 0$. Since $\varepsilon>0$ can be taken
to be arbitrarily small by choosing $k$ large enough, this completes the
achievability proof for the case $\sX=\sY=\R$ \qed

\section{Conclusion}

We investigated a general abstract model for randomized quantization that
provides a more suitable framework for certain optimal quantization problems
than the ones usually considered in the source coding  literature. In
particular, our model formalizes the notion of randomly picking a quantizer
from the set all $\emph{all}$ quantizers with a given number of output levels. 
Using this model, we proved the existence of an optimal
randomized vector quantizer under the constraint that the quantizer
output has a given distribution.  

Our results are mostly non-constructive and it is an open problem how to find
(or well approximate) such optimal quantizers.  A special case where a scalar
source has a density and the output distribution is constrained to be equal to
the source distribution was considered in \cite{LiKlKl10} and construction based
on dithered uniform quantization followed by a nonlinear mapping was
given. Although this construction is optimal in the limit of high resolution
($M\to \infty$), it is very likely suboptimal for any finite $M$. In general, it
would be interesting to better characterize optimal randomized quantizers in
Theorem~\ref{thmopt}, for example, by finding useful necessary conditions for
optimality. It would also be interesting to characterize the high-resolution
behavior of the distortion, which should be markedly different from the
classical case if the input and output distributions are not equal. Connections
between the output distribution-constrained lossy source codes studied in
Section~\ref{sec5} and the empirical distribution of good rate-distortion codes
(see, e.g., \cite{WeOr05} and references therein) are also worth
studying. Finally, a rigorous theory of randomized quantization paves the way for
interesting applications in signaling games in game theory \cite{SignalingGames}
and in stochastic networked control (see \cite{LiYu14} and \cite{BoMiSaTa05} for
applications of randomized quantization in real-time coding, and  \cite{YuLi12} and
\cite{YukselBasarBook} for quantizers and stochastic kernels viewed as
information structures in networked control).

\section*{APPENDIX}

\appsec

\subsection{\textbf{Proof of Lemma \ref{theo1}}}
\label{app1}
\noindent For  a fixed probability measure   $\mu$ on $\sX$ define 
 \begin{align} \Delta_{\mu}=\bigl\{v \in
  \P(\sX\times \sY):v(\,\cdot\,\times \sY)=\mu\} \nonumber
\end{align}
($\Delta_{\mu}$ is the set of all probability measures in
$\P(\sX\times\sY)$ whose $\sX$-marginal is $\mu$).  The following proposition,
due to Borkar \cite[Lemma 2.2]{Bor93}, gives a characterization of the extreme
points of $\Delta_{\mu}$.

\begin{proposition}
  $\Delta_{\mu}$ is closed and convex, and its set of extreme points
  $\Delta_{\mu,e}$ is a Borel set in $\P(\sX\times\sY)$. Furthermore,
  $v\in \Delta_{\mu,e}$ if and only if $v(dx\,dy)$ can be
  disintegrated as 
\begin{align}
v(dx\,dy)=Q(dy|x)\mu(dx) \nonumber
\end{align}
where $Q(\,\cdot\,|x)$ is a Dirac measure for $\mu$-a.e.\ $x$, i.e., there exists a
measurable function $f:\sX\rightarrow\sY$ such that
$Q(\,\cdot\,|x)=\delta_{f(x)}(\,\cdot\,)$ for $\mu$-a.e.\ $x$.
\label{prop1}
\end{proposition}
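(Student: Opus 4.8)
The plan is to treat the three assertions separately, relying throughout on the disintegration theorem for Borel measures on Polish spaces (available since $\sX=\sY=\R^n$): every $v\in\Delta_\mu$ factors as $v(dx\,dy)=Q_v(dy|x)\,\mu(dx)$ for a $\mu$-a.s.\ unique stochastic kernel $Q_v$.

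Convexity of $\Delta_\mu$ is immediate, since the $\sX$-marginal of a convex combination is the corresponding convex combination of $\sX$-marginals. Closedness follows because the marginalization map $v\mapsto v(\,\cdot\,\times\sY)$ is weakly continuous: for $f\in C_b(\sX)$ the integral of $f$ against the $\sX$-marginal of $v$ equals $\int_{\sX\times\sY}f(x)\,v(dx\,dy)$, and $(x,y)\mapsto f(x)$ lies in $C_b(\sX\times\sY)$; hence $\Delta_\mu$ is the preimage of the closed singleton $\{\mu\}\subset\P(\sX)$.

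For the characterization of the extreme points, suppose first that $Q_v(\,\cdot\,|x)$ is $\mu$-a.e.\ a Dirac mass. If $v=\alpha v_1+(1-\alpha)v_2$ with $v_i\in\Delta_\mu$ and $\alpha\in(0,1)$, then by uniqueness of disintegration $Q_v(\,\cdot\,|x)=\alpha Q_{v_1}(\,\cdot\,|x)+(1-\alpha)Q_{v_2}(\,\cdot\,|x)$ for $\mu$-a.e.\ $x$; since Dirac measures are extreme in $\P(\sY)$ this forces $Q_{v_1}(\,\cdot\,|x)=Q_{v_2}(\,\cdot\,|x)$ $\mu$-a.e., i.e.\ $v_1=v_2$, so $v\in\Delta_{\mu,e}$. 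Conversely, if $Q_v(\,\cdot\,|x)$ is not $\mu$-a.e.\ Dirac, I would fix a countable algebra $\{B_k\}$ generating $\B(\sY)$ and note that then $\mu\bigl(\{x:\varepsilon\le Q_v(B_k|x)\le 1-\varepsilon\}\bigr)>0$ for some $k$ and some $\varepsilon>0$; writing $g=Q_v(B_k|\,\cdot\,)$ and setting, on that set, $\eta(dy|x)=\tfrac{\varepsilon}{2}\bigl(g(x)^{-1}\mathbf 1_{B_k}(y)-(1-g(x))^{-1}\mathbf 1_{B_k^c}(y)\bigr)Q_v(dy|x)$ and $\eta\equiv 0$ off it, one checks $\eta(\sY|x)\equiv 0$ and $Q_v\pm\eta\ge 0$, so $v_\pm:=v\pm\eta(dy|x)\mu(dx)$ are distinct members of $\Delta_\mu$ with $v=\tfrac12(v_++v_-)$; hence $v\notin\Delta_{\mu,e}$. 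Finally, $Q_v(\,\cdot\,|x)$ being a.e.\ Dirac is equivalent to $v$ being carried by the graph of a Borel map $f$, where $f(x)$ is recovered measurably from $Q_v(\,\cdot\,|x)$ through the inverse of the Borel embedding $y\mapsto\delta_y$ of $\sY$ into $\P(\sY)$.

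It remains to show that $\Delta_{\mu,e}$ is Borel in $\P(\sX\times\sY)$, which I expect to be the crux. Using the generating algebra $\{B_k\}$ together with a Dynkin-system argument, $v\in\Delta_{\mu,e}$ iff for every $k$ one has $Q_v(B_k|x)\in\{0,1\}$ for $\mu$-a.e.\ $x$; and since $\int_\sX Q_v(B_k|\,\cdot\,)\,d\mu=v(\sX\times B_k)$ always holds, this is the same as $\Phi_k(v):=\int_\sX Q_v(B_k|x)^2\,\mu(dx)=v(\sX\times B_k)$ for every $k$. The map $v\mapsto v(\sX\times B_k)$ is Borel on $\P(\sX\times\sY)$ (a standard fact, since $\sX\times B_k$ is Borel), so it suffices that each $\Phi_k$ be Borel. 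For this I would fix a refining sequence of finite Borel partitions $\{A^{(n)}_j\}_j$ of $\sX$ with $\sigma\bigl(\bigcup_n\{A^{(n)}_j\}_j\bigr)=\B(\sX)$ and invoke $L^2(\mu)$ martingale convergence of $E[\mathbf 1_{B_k}(Y)\mid\sigma(\{A^{(n)}_j\}_j)]$ to $Q_v(B_k|\,\cdot\,)$, which yields $\Phi_k(v)=\lim_{n}\sum_{j:\,\mu(A^{(n)}_j)>0} v(A^{(n)}_j\times B_k)^2/\mu(A^{(n)}_j)$; each summand, hence each partial sum, is a Borel function of $v$, so $\Phi_k$ is a pointwise limit of Borel functionals and therefore Borel. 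Consequently $\Delta_{\mu,e}=\Delta_\mu\cap\bigcap_k\{v:\Phi_k(v)=v(\sX\times B_k)\}$ is Borel. The main obstacle is exactly this last point: because $Q_v$ depends on $v$ only through an almost-everywhere disintegration, the measurability of $\Phi_k$ cannot be read off directly and must be obtained by realizing $\Phi_k$ as an explicit limit of genuine Borel functionals of $v$; the measurable construction of the perturbation $\eta$ in the converse half of the extreme-point characterization is also somewhat technical, but routine once the countable generating algebra has been fixed.
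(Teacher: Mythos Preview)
Your proof is correct and self-contained. The paper, however, does not actually prove this proposition: it attributes the characterization of the extreme points to Borkar \cite[Lemma~2.2]{Bor93} and remarks that the Borel measurability of $\Delta_{\mu,e}$, while not stated there explicitly, is implicit in the proof of \cite[Lemma~2.3]{Bor93}. So there is no in-paper argument to compare against; what you have supplied is a genuine proof where the paper simply cites.

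A brief assessment of your argument: the closedness/convexity step and the two directions of the extreme-point characterization are standard and carried out cleanly; your perturbation $\eta$ is exactly the right object, and the bounds $\varepsilon\le g(x)\le 1-\varepsilon$ guarantee $Q_v\pm\eta\ge 0$. The part you flag as the crux---Borel measurability of $\Delta_{\mu,e}$---is handled by an elegant and correct device: expressing the ``$Q_v(B_k|\,\cdot\,)$ is $\{0,1\}$-valued'' condition as $\Phi_k(v)=v(\sX\times B_k)$ and realizing $\Phi_k$ as a pointwise limit of explicit Borel functionals via $L^2$ martingale convergence along a refining sequence of finite partitions generating $\B(\sX)$ (which exists since $\sX=\R^n$). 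Each approximant $\sum_{j:\mu(A_j^{(n)})>0} v(A_j^{(n)}\times B_k)^2/\mu(A_j^{(n)})$ is Borel in $v$ because $v\mapsto v(A)$ is Borel for every Borel $A$. This route is arguably more transparent than invoking Borkar's stochastic-realization machinery, and it makes the measurability claim entirely explicit.
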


In fact, Borkar did not explicitly state Borel measurability of $\Delta_{\mu,e}$
in \cite{Bor93}, but the proof of \cite[Lemma 2.3]{Bor93}  clearly implies
this.

By Proposition \ref{prop1} it is clear that $v\in \Gamma_{\mu}(M)$ if and only if $v \in \Delta_{\mu,e}$ and its marginal on $\sY$ is supported on a set having at most $M$ elements, i.e., for some $L\leq M$ and $\{y_{1},\ldots,y_{L}\}\subset \sY$,
\begin{align}
v(\sX\times \{y_{1},\ldots,y_{L}\})=1. \nonumber
\end{align}
Let $\{y_{n}\}_{n\geq 1}$ be a countable dense subset of $\sY$ and define following subsets of $\Delta_{\mu,e}$:
\begin{align*}
  \Omega_{k}&=\bigcup_{n_{1}\geq 1,\ldots,n_{M}\geq 1}\biggl\{  v \in
  \Delta_{\mu,e}:v\Big(\sX\times
  \bigcup_{i=1}^{M}B(y_{n_{i}},1/k)\Big)=1\biggr\} \\
\intertext{and}
  \Sigma&=\bigcap_{k=1}^{\infty}\Omega_{k} \nonumber
\end{align*}
where $B(y,r)$ denotes the open ball in $\sY$ centered at $y$ having radius
$r$. Sets of the form
\begin{align}
\biggl\{v\in \P(\sX\times \sY):v\Big(\sX\times \bigcup_{i=1}^{M}B(y_{n_{i}},1/k)\Big)=1\biggr\}  \nonumber
\end{align}
are Borel sets by \cite[Proposition 7.25]{BeSh78}. Since $\Delta_{\mu,e}$
is a Borel set, $\Omega_k$ is a  Borel set for all $k$. Thus  $\Sigma$ is a Borel set in
$\P(\sX\times \sY)$. We will prove that  $\Sigma=\Gamma_{\mu}(M)$.

Since $\{y_{n}\}_{n\geq 1}$ is dense in $\sY$, for any $v\in\Gamma_{\mu}(M)$ and
$k\geq1$ there exist $\tilde{n}_{1},\ldots,\tilde{n}_{M}$ such that
$\supp(v(\sX\times\, \cdot\, ))\subset \bigcup
_{i=1}^{M}B(y_{\tilde{n}_{i}},1/k)$. Thus  $\Gamma_{\mu}(M) \subset \Omega_{k}$
for all $k$, implying $ \Gamma_{\mu}(M) \subset \Sigma$.

To prove the inclusion $\Sigma \subset \Gamma_{\mu}(M)$, let 
$v\in \Sigma$ and notice that for all $k$ there exist $n_{1}^{k},
n_{2}^{k}, \ldots, n_{M}^{k}$ such that
\begin{align}
v\Bigl(\sX\times \bigcup_{i=1}^{M}B(y_{n_{i}^{k}},1/k)\Bigr)=1. \nonumber
\end{align}
Let us define $K_{n}=\sX\times
\bigcap_{k=1}^{n}\bigcup_{i=1}^{M}B(y_{n_{i}^{k}},1/k)$. Clearly, $K_{n+1}
\subset K_{n}
$ and $v(K_{n})=1$, for all $n$. Letting 
\begin{align}
G=\bigcap_{k=1}^{\infty}\bigcup_{i=1}^{M} B(y_{n_{i}^{k}},1/k), \nonumber
\end{align}
we have $v(\sX\times G)=1$. If we can prove that $G$ has at most $M$ distinct
elements, then $v\in\Gamma_{\mu}(M)$. Assuming the contrary, there must exist
distinct $\{\hat{y}_{1}, \hat{y}_{2},\ldots,\hat{y}_{M},\hat{y}_{M+1}\} \subset
G$. Let $\varepsilon=\min \{\|\hat{y}_{i}-\hat{y}_{j}\|: i, j=1,\ldots, M+1,
i\neq j\}$. Clearly, for $\frac{1}{k}< \frac{\varepsilon}{4}$,
$\bigcup_{i=1}^{M}B(y_{n_{i}^{k}},1/k)$ cannot contain $\{\hat{y}_{1},
\hat{y}_{2},\ldots,\hat{y}_{M},\hat{y}_{M+1}\}$, a contradiction. Thus $G$ has
at most $M$ elements and we obtain  $\Sigma = \Gamma_{\mu}(M)$. \qed

\subsection{\textbf{Proof of Theorem \ref{newtheo1}}}
\label{app2}

We will need the following result which gives a necessary and sufficient condition for the
measurability of a mapping from a measurable space to $\P(\sE)$, where  $\sE$ is a Polish
space. It is proved for compact $\sE$ in \cite[Theorem 2.1]{DuFr64} and for
noncompact $\sE$ it is the corollary of \cite[Proposition 7.25]{BeSh78}.

\begin{theorem}
  Let $(\Omega,\mathcal{F})$ be a measurable space and let $\sE$ be a Polish space. A
  mapping $h:\Omega \rightarrow \mathcal{P}(\sE)$ is measurable if and only if
  the real valued functions $\omega \mapsto h(\omega)(A)$  from $\Omega$ to
  $[0,1]$  are measurable
  for all $A \in \B(\sE)$.
\label{theo2}
\end{theorem}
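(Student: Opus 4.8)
The plan is to prove the two implications separately, obtaining the ``if'' direction from an identification of $\B(\P(\sE))$ with the $\sigma$-algebra generated by the evaluation maps $\nu\mapsto\nu(A)$, $A\in\B(\sE)$. For the ``only if'' direction, suppose $h$ is measurable. For every $f$ in the space $C_b(\sE)$ of bounded continuous real functions, the map $\nu\mapsto\int f\,d\nu$ is continuous on $\P(\sE)$ by the very definition of the weak topology, hence Borel, so $\omega\mapsto\int f\,dh(\omega)$ is $\mathcal F$-measurable. If $G\subset\sE$ is open and $d$ is a compatible metric, the functions $f_k(y)=\min\{1,k\,d(y,\sE\setminus G)\}$ lie in $C_b(\sE)$ and increase pointwise to $\mathbf 1_G$, so $h(\omega)(G)=\lim_k\int f_k\,dh(\omega)$ shows $\omega\mapsto h(\omega)(G)$ is $\mathcal F$-measurable. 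Finally, $\D\coloneqq\{A\in\B(\sE):\omega\mapsto h(\omega)(A)\text{ is }\mathcal F\text{-measurable}\}$ is a $\lambda$-system (closed under complements and countable disjoint unions, since each $h(\omega)(\cdot)$ is a probability measure) that contains the $\pi$-system of open sets, so Dynkin's theorem yields $\D=\B(\sE)$.

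For the ``if'' direction the key point is the identity $\B(\P(\sE))=\C$, where $\C\coloneqq\sigma\bigl(\{\nu\mapsto\nu(A):A\in\B(\sE)\}\bigr)$. The inclusion $\C\subseteq\B(\P(\sE))$ is the ``only if'' direction applied with $\Omega=\P(\sE)$ and $h$ the identity, which shows each evaluation $\nu\mapsto\nu(A)$ is Borel. For the reverse inclusion I would use that $\sE$ is Polish: then $(\P(\sE),\text{weak})$ is separable and metrizable (Prokhorov), and there is a \emph{countable} family $\{f_i\}_{i\ge1}\subset C_b(\sE)$ determining weak convergence, so that the weak topology coincides with the initial topology of the maps $\nu\mapsto\int f_i\,d\nu$. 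Each of these maps is $\C$-measurable, being a pointwise (indeed uniform) limit of finite sums $\sum_k c_k\,\nu(A_k)$ of evaluations; hence the countable base of $\P(\sE)$ formed by finite intersections of preimages of open rational intervals under these maps lies in $\C$, so every weakly open subset of $\P(\sE)$, and therefore all of $\B(\P(\sE))$, belongs to $\C$. Granting $\B(\P(\sE))=\C$: if $\omega\mapsto h(\omega)(A)$ is $\mathcal F$-measurable for every $A\in\B(\sE)$, then $h$ is measurable as a map into $(\P(\sE),\C)=(\P(\sE),\B(\P(\sE)))$, which is the assertion.

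The step I expect to be the main obstacle is the reverse inclusion $\B(\P(\sE))\subseteq\C$: this is the only place where Polishness (rather than mere metrizability of $\sE$) is genuinely used, and it rests on producing a countable convergence-determining subset of $C_b(\sE)$ — equivalently, on the second countability of $(\P(\sE),\text{weak})$ combined with the elementary fact that a countable subbase of a topology generates its Borel $\sigma$-algebra. Everything else reduces to a routine monotone-class / $\pi$--$\lambda$ argument.
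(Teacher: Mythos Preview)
Your proof is correct, but there is little to compare it against: the paper does not prove this statement itself. It is quoted as a known result, with the compact case attributed to Dubins and Freedman and the general Polish case to Bertsekas and Shreve (Proposition~7.25). Your self-contained argument --- reducing both directions to the identity $\B(\P(\sE))=\sigma\bigl(\nu\mapsto\nu(A):A\in\B(\sE)\bigr)$, obtaining one inclusion by approximating indicators of open sets with bounded continuous functions followed by a $\pi$--$\lambda$ argument, and the other via a countable convergence-determining family in $C_b(\sE)$ (available because $\P(\sE)$ is second countable when $\sE$ is Polish) --- is the standard proof and is essentially what one recovers by unwinding the cited references. In short, your proposal supplies exactly what the paper deliberately outsources.
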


For any $(\sq,\nu)$ define $f: \R^m \to \Gamma_{\mu}(M)$ by $f(z) =
\delta_{\sq(x,z)}(dy)\mu(dx)$.  By Theorem \ref{theo2}, $f$ is measurable if and
only if the mappings $z\mapsto\int\delta_{\sq(x,z)}(C_{x})\mu(dx)$ are
measurable for all $C\in\B(\sX\times\sY)$, where $C_{x}=\{y:(x,y)\in
C\}$. Observe that $\delta_{\sq(x,z)}(C_{x})$ is a measurable function of
$(x,z)$ because $\{(x,z)\in \sX\times\sZ:
\delta_{\sq(x,z)}(C_x)=1\}=\{(x,z)\in\sX\times\sZ: (x,\sq(x,z))\in C\}$. By
\cite[Theorem 18.3]{Bil95} $\int\delta_{\sq(x,z)}(C_{x})\mu(dx)$ is measurable
as well. Hence $f$ is measurable.

Thus we can define the probability measure $P$ supported on
$\Gamma_{\mu}(M)$ by $P=\nu \circ f^{-1}$ (i.e., $P(B)=\nu(f^{-1}(B))$ for any
Borel set $B\subset \Gamma_{\mu}(M)$).  Then, for the corresponding $v_P$ we have
$(X,Y)\sim v_P$, i.e., for $C\in\B(\sX\times\sY)$,
\begin{align}
\Pr\bigl\{\bigl(X,\sq(X,Z)\bigr)\in C\bigr\}&=\int_{\sZ}\int_{\sX} \delta_{\sq(x,z)}(C_{x}) \mu(dx) \nu(dz)\nonumber \\
&=\int_{\sZ} f(z)(C)\nu(dz) \nonumber \\
&=\int_{\Gamma_{\mu}(M)}v(C)P(dv)\nonumber \\
&=v_{P}(C). \nonumber
\end{align}

Conversely, let $v_P$ be defined as in (\ref{neweq2}) with $P$ supported on
$\Gamma_{\mu}(M)$, i.e., $v_P=\int_{\Gamma_{\mu}(M)} v P(dv)$.  Define the
mapping $ \Gamma_{\mu}(M) \ni v\mapsto q_{v}$, where $q_v$ is the $\mu$-a.e.\
defined quantizer in $\Q_M$, giving
$v(dx\,dy)=\mu(dx)\delta_{q_{v}(x)}(dy)$. Since $\Gamma_{\mu}(M)$ is an
uncountable Borel space, there is a measurable bijection (Borel isomorphism) $g
: \R^m \to \Gamma_{\mu}(M)$ between $\R^m$ and $\Gamma_{\mu}(M)$ \cite{Dud89}.
Now define $\sq$ by $\sq(x,z)= q_{g(z)}(x)$ and let $\nu=P\circ g$.  Then for
all $z$, $\sq(\,\cdot\,,z)$ is a $\mu$-a.e. defined $M$-level
quantizer. However, it is not clear whether $\sq(x,z)$ is measurable. Therefore
we will construct another measurable function $\tilde{\sq}(x,z)$ such that
$\tilde{\sq}(\,\cdot\,,z)$ is an $M$-level quantizer and
$\tilde{\sq}(\,\cdot\,,z)=\sq(\,\cdot\,,z)$ $\mu$-a.e., for all $z$. Then we
will prove that $(X,Y)=(X,\tilde{\sq}(X,Z))\sim v_{p}$ where $Z\sim\nu$.  Define
the stochastic kernel on $\sX\times \sY$ given $\Gamma_{\mu}(M)$ as
\begin{align}
\gamma(dx\,dy|v)=v(dx\,dy). \nonumber
\end{align}
Clearly, $\gamma$ is well defined because $\Gamma_{\mu}(M)$ is a Borel subset of $\P(\sX\times \sY)$. Observe that for each $v\in\Gamma_{\mu}(M)$, we have
\begin{align}
\gamma(C|v)=\int_{\sX}\delta_{q_{v}(x)}(C_{x})\mu(dx)
\label{oldeq1}
\end{align}
for $C\in\B(\sX\times\sY)$. Furthermore, by \cite[Proposition 7.27]{BeSh78}
there exists a stochastic kernel $\eta(dy|x,v)$ on $\sY$ given
$\sX\times\Gamma_{\mu}(M)$ which satisfies for all $C\in\B(\sX\times\sY)$ and
$v\in\Gamma_{\mu}(M)$, 
\begin{align}
\gamma(C|v)=\int_{\sX} \eta(C_{x}|x,v) \mu(dx).
\label{oldeq2}
\end{align}
Since $\B(\sY)$ is countably generated by the separability of $\sY$, for any
$v\in\Gamma_{\mu}(M)$ we have $\eta(\,\cdot\,|x,v)=\delta_{q_{v}(x)}(\,\cdot\,)$
$\mu$-a.e. by (\ref{oldeq1}) and (\ref{oldeq2}). Since $\eta$ is a stochastic
kernel, it can be represented as a measurable function from
$\sX\times\Gamma_{\mu}(M)$ to $\P(\sY)$, i.e.,
\begin{align}
\eta:\sX\times\Gamma_{\mu}(M)\rightarrow\P(\sY). \nonumber
\end{align}
Define $\P_{1}(\sY)=\{\psi\in\P(\sY): \psi(\{y\})=1 \text{ for some }
y\in\sY\}$. $\P_{1}(\sY)$ is a closed (thus measurable) subset of $\P(\sY)$ by
\cite[Lemma 6.2]{Par67}. Hence, $\sM:=\eta^{-1}(\P_{1}(\sY))$ is also
measurable. Observe that for any $v\in\Gamma_{\mu}(M)$ we have
$\sM_{v}:=\{x\in\sX: (x,v)\in \sM\}\supset \{x\in\sX:
\eta(\,\cdot\,|x,v)=\delta_{q_{v}(x)}(\,\cdot\,)\}$. Thus $\mu(\sM_{v})=1$ for all
$v\in\Gamma_{\mu}(M)$, which implies $\mu\otimes P \bigl(\sM\bigr)=1$. Define the
function $\tilde{q}_{v}$ from $\sX\times\Gamma_{\mu}(M)$ to $\sY$ as 
\begin{align}
\tilde{q}_{v}(x)=\begin{cases}
\tilde{y},  &\text{ if } (x,v)\in \sM, \text{ where } \eta(\{\tilde{y}\}|x,v)=1,  \\
y,  &\text{ otherwise, }
\end{cases}
\nonumber
\end{align}
where $y$ is fixed. By construction, $\tilde{q}_{v}(x)=q_{v}(x)$ $\mu$-a.e., for all $v\in\Gamma_{\mu}(M)$. For any $C\in\B(\sY)$ we have
\begin{eqnarray*}
 \lefteqn{\tilde{q}_{v}^{-1}(C)} \\
&=&\{(x,v)\in\sX\times\Gamma_{\mu}(M): \tilde{q}_{v}(x) \in C\} \nonumber \\
&=& \{(x,v)\in \sM: \tilde{q}_{v}(x)\in C\}\cup \{(x,v)\in \sM^{c}: \tilde{q}_{v}(x)\in C\}. \nonumber
\end{eqnarray*}
Clearly $\{(x,v)\in \sM^{c}: \tilde{q}_{v}(x)\in C\}=\sM^{c} \text{ or
}\emptyset$ depending on whether or not $y$ is an element of $C$. Hence,
$\tilde{q}_{v}^{-1}(C)\in\B(\sX\times\Gamma_{\mu}(M))$ if $\{(x,v)\in \sM:
\tilde{q}_{v}(x)\in C\} \in\B(\sX\times\Gamma_{\mu}(M))$. But $\{(x,v)\in \sM:
\tilde{q}_{v}(x)\in C\}=\{(x,v)\in \sM: \eta(C|x,v)=1\}$ which is in
$\B(\sX\times\Gamma_{\mu}(M))$ by the measurability of $
\eta(C|\,\cdot\,,\,\cdot\,)$. Thus, $\tilde{q}$ is a measurable function from
$\sX\times\Gamma_{\mu}(M)$ to $\sY$.

Let us define $\tilde{\sq}$ as $\tilde{\sq}(x,z)=\tilde{q}_{g(z)}(x)$. By the measurability of $g$ it is clear that $\tilde{\sq}$ is measurable. In addition, for any $z\in\sZ$
$\tilde{\sq}(\,\cdot\,,z)$ is an $M$-level quantizer which is $\mu$-a.e. equal to $\sq(\,\cdot\,,z)$. Finally, if $Z\sim \nu$ is independent of $X$ and $Y=\tilde{\sq}(X,Z)$, then $(X,Y)\sim v_P$, i.e.,

\begin{eqnarray*}
\lefteqn{ \Pr\Bigl\{\bigl(X,\tilde{\sq}(X,Z)\bigr)\in C\Bigr\}}\\
&=&\int_{\sZ}\int_{\sX}
\delta_{\tilde{\sq}(x,z)}(C_{x}) \mu(dx)\nu(dz) \nonumber \\* 
&=&\int_{\Gamma_{\mu}(M)}\int_{\sX} \delta_{\tilde{q}_{v}(x)}(C_{x}) \mu(dx)P(dv) \nonumber \\
&=&\int_{\Gamma_{\mu}(M)}\int_{\sM_{v}} \eta(C_{x}|x,v) \mu(dx)P(dv)\nonumber \\
&=&\int_{\Gamma_{\mu}(M)}\gamma(C|v) P(dv)\nonumber\\
&=&\int_{\Gamma_{\mu}(M)}v(C) P(dv)\nonumber\\
&=&v_{p}(C). \qquad \qquad \qquad \qquad \qquad \qquad \qquad  \text{\qed}
\end{eqnarray*}

\subsection{\textbf{Proof of Theorem~\ref{newtheo2}}}
\label{app2a}

If  $(e,d,\nu)$ is a  Model~1 randomized quantizer,  then setting
$\sq(x,z)=d(e(x,z),z)$  defines a 
Model~2 randomized quantizer $(\sq,\nu)$ such that the joint distributions of
their inputs and outputs coincide.

Conversely, let $(\sq,\nu)$ be a Model~2 randomized quantizer. It is obvious
that $\sq$ can be decomposed into an encoder $e:\sX\times \sZ\to \{1,\ldots,M\}$
and decoder $d: \{1,\ldots,M\}\times \sZ \to \sY$ such that $d(e(x,z),
z)=\sq(x,z)$ for all $x$ and $z$. The difficulty lies in showing that this can
be done so that the resulting $e$ and $d$ are measurable. In fact, we instead
construct measurable $e$ and $d$ whose composition  is $\mu\otimes \nu$-a.e.\   equal to
$\sq$, which  is sufficient to imply the theorem.

Let $(\sq,\nu)$ be a Model 2 randomized quantizer. Since $\R^n$ and $[0,1]$ are
both uncountable Borel spaces, there exists  a Borel isomorphism
$f:\R^n\rightarrow[0,1]$ \cite{Dud89}. Define
$\hat{\sq}: \sX\times\sY\to [0,1]$ by $\hat{\sq}= f\circ\sq$. Hence, $\hat{\sq}$
is measurable and, for any
fixed $z$, $\hat{\sq}(\cdot,z)$ is an $M$-level quantizer from $\sX$ to
$[0,1]$. Also note that $\sq=f^{-1}\circ \hat{\sq}$.  

Now for any fixed $z\in \sZ$ consider only those output points of $\hat{\sq}(\cdot,z)$ that
occur with \emph{positive} $\mu$ probability and order these according to their
magnitude from the smallest to the largest.  For $i=1,\ldots,M$ let the function
$f_i(z)$ take the value of the $i$th smallest such output point.  If there is no
such value, let $f_i(z)= 1$.  We first prove that all the  $f_i$ are measurable
and then  define the encoder and the decoder in terms of these
functions.

Observe that for any
$a\in[0,1]$, by definition 
\[
\{z\in\sZ:f_1(z)\leq a\}= \Big\{z\in\sZ:
\int_{\sX} \delta_{\hat{\sq}(x,z)}([0,a])\mu(dx)>0\Big\},
\]
where the set on the right hand side is a Borel set by Fubini's theorem. Hence,
$f_1$ is a measurable function. Define $E_1=\{(x,z)\in\sX\times\sZ:
\hat{\sq}(x,z)-f_1(z)=0\}$, a Borel set. Letting  $E_{1,z}= \{x\in \sX :
(x,z)\in E_1\}$ denote the 
$z$-section of $E_1$, for any $a\in[0,1)$ we have 
\begin{eqnarray*}
\lefteqn{  \{z\in\sZ:f_2(z)\leq a\}}\quad \\
&=&\Big\{z\in\sZ: \int_{\sX\setminus
  E_{1,z}} \delta_{\hat{\sq}(x,z)}([0,a])\mu(dx)>0\Big\},
\end{eqnarray*}
and thus $f_2$ is measurable. Continuing in this fashion, we define 
the Borel sets $E_i=\{(x,z): \hat{\sq}(x,z)-f_i(z)=0\}$ and write,  
for any $a\in[0,1)$,
\begin{eqnarray*}
\lefteqn{ \{z\in\sZ:f_i(z)\leq a\} }\quad  \\
&=&\Big\{z\in\sZ: \int_{\sX\setminus
  \bigcup_{j=1}^{i-1} E_{i,z}} \delta_{\hat{\sq}(x,z)}([0,a])\mu(dx)>0\Big\},
\end{eqnarray*} 
proving that $f_i$ is
measurable  for all  $i=1,\ldots,M$.

Define
\begin{align*}
N&=\big\{(x,z)\in\sX\times\sZ: \hat{\sq}(x,z)\neq f_i(z)\text{ for all
  $i=1,\ldots,M$} \big\} \\
 &= \sX\times\sZ \setminus  \bigcup_{i=1}^M E_i .
\end{align*}
Clearly, $N$ is a Borel set and $\mu\otimes\nu(N)=0$ by Fubini's theorem and the
definition of $f_1,\ldots,f_M$. Now we can define
\begin{align}
e(x,z)&=\sum_{i=1}^{M} i\, 1_{\{\hat{\sq}(x,z)=f_i(z)\}} + M \, 1_{N}(x,z)
\nonumber \\
\intertext{and}
d(i,z)&=\sum_{j=1}^{M} f^{-1}\circ f_j(z) 1_{\{i=j\}}, \nonumber 
\end{align}
where $1_{B}$ denotes the indicator of event (or set) $B$. The measurability of
$\hat{\sq}$ and $f$, $f_1,\ldots,f_M$ implies that $e$ and $d$ are
measurable. Since $d(e(x,z),z)=\hat{\sq}(x,z)$ $\mu\otimes\nu$-a.e. by
construction, this completes the proof. \qed 

\subsection{\textbf{Proof of Theorem \ref{thmopt}}}
\label{app3}

\noindent\textit{I) \textbf{Proof under Assumption~1}}

To simplify the notation we redefine the reconstruction alphabet as
$\sY=B$, so that $\sY$ is a compact subset of $\R^n$.  
It follows from the
continuity of $\rho$ that $L$ is lower semicontinuous on
$\mathcal{P}(\sX\times\sY)$ for the weak topology (see, e.g.,
\cite[Lemma~4.3]{Vil09}). Hence, to show the existence of a minimizer for problem
\textbf{(P1)} it would suffice to prove that
$\Gamma_{\mu,\psi}^{\rm R}(M)=\Gamma_{\mu}^{\rm R}(M)\cap \Gamma_{\mu,\psi}$ is
compact. It is known that $\Gamma_{\mu,\psi}$ is compact
\cite[Chapter 4]{Vil09}, but unfortunately $\Gamma_{\mu}(M)$ is not
closed \cite{YuLi12} and it seems doubtful that $\Gamma_{\mu}^{\rm R}(M)$ is
compact. Hence, we will develop a different argument which is based
on optimal transport theory. We will first give the proof under
Assumption~1; the proof under Assumption~2 then follows via a one-point
compactification argument.

Let $\mathcal{P}_{M}(\sY) =\{\psi_0 \in \mathcal{P}(\sY): |\supp(\psi_0)| \leq
M\}$ be the set of discrete distributions with $M$ atoms or less on $\sY$.

\begin{lemma}
$\P_{M}(\sY)$ is compact in $\P(\sY)$.
\label{lemma2}
\end{lemma}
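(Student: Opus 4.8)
The plan is to show that $\P_M(\sY)$ is closed in $\P(\sY)$ (for the weak topology) and then invoke tightness to get compactness. Actually, since $\sY = B$ is compact under Assumption~1, $\P(\sY)$ itself is compact, so it suffices to prove that $\P_M(\sY)$ is a \emph{closed} subset of $\P(\sY)$. First I would take a sequence $\psi_k \in \P_M(\sY)$ converging weakly to some $\psi_0 \in \P(\sY)$ and write $\psi_k = \sum_{i=1}^M a_i^{(k)} \delta_{y_i^{(k)}}$ with weights $a_i^{(k)} \ge 0$ summing to $1$ and atoms $y_i^{(k)} \in \sY$. Since $\sY$ is compact and the simplex of weight vectors in $\R^M$ is compact, by passing to a subsequence I may assume $a_i^{(k)} \to a_i$ and $y_i^{(k)} \to y_i$ for each $i = 1,\dots,M$.

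The key step is then to identify the weak limit: I would show that $\psi_k \to \sum_{i=1}^M a_i \delta_{y_i}$ weakly, which is immediate because for any bounded continuous $f$ on $\sY$,
\[
\int f\, d\psi_k = \sum_{i=1}^M a_i^{(k)} f(y_i^{(k)}) \to \sum_{i=1}^M a_i f(y_i)
\]
by continuity of $f$ and convergence of the $a_i^{(k)}$ and $y_i^{(k)}$. By uniqueness of weak limits, $\psi_0 = \sum_{i=1}^M a_i \delta_{y_i}$, which clearly has at most $M$ atoms (some of the $y_i$ may coincide or have zero weight, only reducing the support), so $\psi_0 \in \P_M(\sY)$. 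Hence $\P_M(\sY)$ is closed, and being a closed subset of the compact space $\P(\sY)$, it is compact.

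The only mild subtlety — and the step I would be most careful about — is the reduction to convergent subsequences: one must make sure that representing each $\psi_k$ with \emph{exactly} $M$ (weighted, possibly repeated or zero-weight) atoms is legitimate even when $|\supp(\psi_k)| < M$, which is harmless since we can always pad with extra atoms carrying weight zero. Everything else is routine: compactness of $\sY$, compactness of the probability simplex in $\R^M$, and the elementary characterization of weak convergence via bounded continuous test functions. No optimal transport machinery is needed here; this is purely a compactness-of-finitely-supported-measures argument, and it is used later as an ingredient in the proof of Theorem~\ref{thmopt}.
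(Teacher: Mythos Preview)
Your proof is correct and essentially the same as the paper's: both parametrize $\P_M(\sY)$ by $\sY^M \times K_s$ (atoms times the probability simplex), use compactness of that product to extract a convergent subsequence, and identify the weak limit as the discrete measure determined by the limiting atoms and weights. The only cosmetic difference is that you phrase it as ``closed subset of the compact space $\P(\sY)$'' while the paper argues sequential compactness of $\P_M(\sY)$ directly; the underlying argument is identical.
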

\begin{proof}
  Let $\{\psi_{n}\}$ be an arbitrary sequence in $\P_{M}(\sY)$. Each $\psi_{n}$
  can be represented by points $(y_{1}^{n},\ldots,y_{M}^{n})=y^{n} \in \sY^{M}$
  and $(p_{1}^{n},\ldots,p_{M}^{n})=p^{n} \in K_{s}$, where
  $K_{s}=\{(p_1,\ldots,p_M)\in\R^M: \sum_{i=1}^{M} p_i=1, p_i\geq0\}$ is the
  probability simplex in $\R^{M}$. Let $w_n=(y^{n},p^{n})$. Since $\sY^{M}\times
  K_{s}$ is compact, there exists a subsequence $\{w^{n_{k}}\}$ converging to
  some $w$ in $\sY^{M}\times K_{s}$. Let $\psi$ be the probability measure in
  $\P_{M}(\sY)$ which is represented by $w$. It straightforward to show that
  $\psi$ is a weak limit of $\{\psi^{n_{k}}\}$. This completes the proof.
\end{proof}

Define
\[
\hat{\Gamma}_{\mu}(M)=\bigcup_{\psi_0\in \P_M(\sY)} \bigl\{ \hat{v}\in \Gamma_{\mu,\psi_0}:  L(\hat{v})
=\min_{v\in \Gamma_{\mu,\psi_0}} L(v) \bigr\}. 
\]
The elements of $\hat{\Gamma}_{\mu}(M)$ are the probability measures
which solve the optimal transport problem (see, e.g.,
\cite{Vil09}) for fixed input marginal $\mu$ and some output marginal
$\psi_0$ in $\mathcal{P}_{M}(\sY)$. At the end of this proof Lemma~\ref{lem_meas} shows that
$\hat{\Gamma}_{\mu}(M)$ is a Borel set. Let
$\hat{\Gamma}_{\mu}^{\rm R}(M)$ be the randomization of
$\hat{\Gamma}_{\mu}(M)$, obtained by replacing $\Gamma_{\mu}(M)$ with
$\hat{\Gamma}_{\mu}(M)$ in (\ref{neweq4}). Define the optimization
problem \textbf{(P2)} as
\begin{align}
\textbf{(P2)} \text{                         }&\text{minimize } L(v)
\nonumber \\
&\text{subject to  } v \in \hat{\Gamma}_{\mu,\psi}^{\rm R}(M), \nonumber
\end{align}
where $\hat{\Gamma}_{\mu,\psi}^{\rm R}(M)=\hat{\Gamma}_{\mu}^{\rm R}(M) \cap
  \Gamma_{\mu,\psi}$.

\begin{proposition}
  For any $v^{*}\in\Gamma_{\mu,\psi}^{\rm R}(M)$ there exists
  $\hat{v}\in\hat{\Gamma}_{\mu,\psi}^{\rm R}(M)$ such that $L(v^{*})\geq
  L(\hat{v})$. Hence, the distortion of any minimizer in \emph{\textbf{(P2)}} is less
  than or equal to the distortion of a minimizer in \emph{\textbf{(P1)}}.
\label{prop5}
\end{proposition}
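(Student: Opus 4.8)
The plan is to take the mixture representing $v^{*}$ and replace each constituent quantizer by an \emph{optimal} transport plan between $\mu$ and that quantizer's own output distribution; this leaves the output marginal unchanged, cannot increase the distortion, and after re-averaging with the same mixing measure produces the desired $\hat v$. In detail, write $v^{*}=v_{P}=\int_{\Gamma_{\mu}(M)}v\,P(dv)$ for some $P$ supported on $\Gamma_{\mu}(M)$, and for $v\in\Gamma_{\mu}(M)$ let $\psi_{v}$ denote its $\sY$-marginal; then $\psi_{v}\in\P_{M}(\sY)$ and $v\in\Gamma_{\mu,\psi_{v}}$. Under Assumption~1 ($\rho$ continuous, $\sY=B$ compact) the optimal transport problem with marginals $\mu$ and $\psi_{v}$ has a solution, i.e.\ there is $\Phi(v)\in\hat{\Gamma}_{\mu}(M)$ with $\sY$-marginal $\psi_{v}$ and $L(\Phi(v))=\min_{w\in\Gamma_{\mu,\psi_{v}}}L(w)$; since $v$ is itself feasible for this transport problem, $L(\Phi(v))\le L(v)$.

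The only real obstacle is to arrange the choice $v\mapsto\Phi(v)$ to be \emph{measurable}, so that $\hat P:=P\circ\Phi^{-1}$ is a well-defined probability measure on $\P(\sX\times\sY)$ concentrated on $\hat{\Gamma}_{\mu}(M)$. I would proceed as follows. The marginal projection $v\mapsto\psi_{v}$ is weakly continuous, so it is enough to find a measurable map $S$ on $\P_{M}(\sY)$ with $S(\psi_{0})\in\hat{\Gamma}_{\mu}(M)$ an optimal coupling of $\mu$ and $\psi_{0}$, and put $\Phi=S\circ(v\mapsto\psi_{v})$. Under Assumption~1 the set $\Gamma_{\mu,\psi_{0}}$ is compact (Prokhorov) and $L$ is lower semicontinuous on $\P(\sX\times\sY)$, so $\psi_{0}\mapsto\argmin_{w\in\Gamma_{\mu,\psi_{0}}}L(w)$ is a correspondence with nonempty compact values and with graph $\{(\psi_{0},w):w\in\Gamma_{\mu,\psi_{0}},\ L(w)\le\min_{w'\in\Gamma_{\mu,\psi_{0}}}L(w')\}$ that is Borel (equivalently analytic), using the Borel measurability of $\hat{\Gamma}_{\mu}(M)$ from Lemma~\ref{lem_meas} together with standard results on measurable correspondences for the measurability of the value function. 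A measurable selection theorem (Kuratowski--Ryll-Nardzewski, or Jankov--von Neumann in the analytic version) then yields a universally measurable $S$; universal measurability suffices here, since preimages of Borel sets under $\Phi$ are then $P$-measurable, which is all that is needed to define $\hat P$, to apply the change-of-variables formula, and to ensure $P$-integrability of $L\circ\Phi$.

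Granting this, set $\hat v:=v_{\hat P}=\int_{\hat{\Gamma}_{\mu}(M)}w\,\hat P(dw)=\int_{\Gamma_{\mu}(M)}\Phi(v)\,P(dv)$. Since $\Phi$ maps $\Gamma_{\mu}(M)$ into $\hat{\Gamma}_{\mu}(M)$ and $P(\Gamma_{\mu}(M))=1$, we have $\hat P(\hat{\Gamma}_{\mu}(M))=1$, hence $\hat v\in\hat{\Gamma}_{\mu}^{\rm R}(M)$. Every $\Phi(v)$ has $\sX$-marginal $\mu$ and $\sY$-marginal $\psi_{v}$, so for each Borel $B\subset\sY$,
\[
\hat v(\sX\times B)=\int_{\Gamma_{\mu}(M)}\Phi(v)(\sX\times B)\,P(dv)=\int_{\Gamma_{\mu}(M)}v(\sX\times B)\,P(dv)=v^{*}(\sX\times B)=\psi(B),
\]
and likewise the $\sX$-marginal of $\hat v$ is $\mu$; thus $\hat v\in\Gamma_{\mu,\psi}$ and therefore $\hat v\in\hat{\Gamma}_{\mu,\psi}^{\rm R}(M)$. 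Finally, by the cost formula for Model~3 quantizers and the bound $L(\Phi(v))\le L(v)$,
\[
L(\hat v)=\int_{\Gamma_{\mu}(M)}L(\Phi(v))\,P(dv)\le\int_{\Gamma_{\mu}(M)}L(v)\,P(dv)=L(v^{*}),
\]
which is the first assertion. The second is then immediate: if $v^{*}$ minimizes \textbf{(P1)}, the feasible point $\hat v$ of \textbf{(P2)} satisfies $L(\hat v)\le L(v^{*})$, so the optimal value of \textbf{(P2)} is at most that of \textbf{(P1)}.
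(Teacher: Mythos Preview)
Your proposal is correct and follows essentially the same route as the paper: write $v^{*}$ as a mixture over $\Gamma_{\mu}(M)$, factor through the $\sY$-marginal map into $\P_{M}(\sY)$, select for each output marginal an optimal transport plan in $\hat{\Gamma}_{\mu}(M)$, and push the mixing measure forward. The paper packages the measurable-selection step as a separate lemma (Lemma~\ref{lemma4}) and obtains the selector via Yankov's lemma applied to the surjection $f_{2}:\hat{\Gamma}_{\mu}(M)\to\P_{M}(\sY)$, which is exactly your Jankov--von~Neumann route; your alternative phrasing via the $\argmin$ correspondence and Kuratowski--Ryll-Nardzewski is a harmless variant of the same idea.
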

To prove Proposition \ref{prop5} we need the following lemma.

\begin{lemma}
Let $P$ be a probability measure on $\Gamma_{\mu}(M)$. Then there exists a
measurable mapping $f:\Gamma_{\mu}(M)\to  \hat{\Gamma}_{\mu}(M)$ such that $v(\sX\times\,\cdot\,)=f(v)(\sX\times\,\cdot\,)$ and $L(v)\geq L(f(v))$, $P$-a.e.
\label{lemma4}
\end{lemma}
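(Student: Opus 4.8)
The plan is to realize $f$ as a composition $f=\tilde s\circ m$, where $m:\Gamma_{\mu}(M)\to\P_M(\sY)$ is the $\sY$-marginal map $m(v)=v(\sX\times\,\cdot\,)$ and $\tilde s:\P_M(\sY)\to\hat\Gamma_{\mu}(M)$ is a measurable selector that assigns to each output marginal $\psi_0$ an optimal transport plan between $\mu$ and $\psi_0$ for the cost $\rho$. First I would record that $m$ is continuous, since weak convergence of joint laws on $\sX\times\sY$ forces weak convergence of the $\sY$-marginals, and that $m$ does take values in $\P_M(\sY)$: every $v\in\Gamma_{\mu}(M)$ has the form $\mu(dx)\delta_{q(x)}(dy)$ with $q\in\Q_M$, so its $\sY$-marginal $\mu\circ q^{-1}$ is supported on the at most $M$ points $q(\sX)$.

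Next I would construct the selector. For each $\psi_0\in\P_M(\sY)$ the set of couplings $\Gamma_{\mu,\psi_0}$ is nonempty (it contains $\mu\otimes\psi_0$) and compact, as in \cite[Chapter~4]{Vil09}; moreover its graph $\{(\psi_0,v):v\in\Gamma_{\mu,\psi_0}\}$ is closed in $\P_M(\sY)\times\P(\sX\times\sY)$ because both marginal constraints pass to weak limits. Under Assumption~1 the functional $L$ is lower semicontinuous on $\P(\sX\times\sY)$, so the extended function $g(\psi_0,v):=L(v)$ for $v\in\Gamma_{\mu,\psi_0}$ and $g(\psi_0,v):=+\infty$ otherwise is lower semicontinuous, hence lower semianalytic, and the value $V(\psi_0):=\min_{v\in\Gamma_{\mu,\psi_0}}L(v)=\inf_v g(\psi_0,v)$ is attained by compactness. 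A measurable selection theorem for minimizers of lower semianalytic functions (see, e.g., \cite{BeSh78}, or the measurable selection of optimal couplings in \cite{Vil09}) then yields a universally measurable map $s:\P_M(\sY)\to\P(\sX\times\sY)$ with $s(\psi_0)\in\Gamma_{\mu,\psi_0}$ and $L(s(\psi_0))=V(\psi_0)$ for every $\psi_0$; by the definition of $\hat\Gamma_{\mu}(M)$ this says precisely that $s(\psi_0)\in\hat\Gamma_{\mu}(M)$.

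The last step converts universal measurability into Borel measurability at the price of a $P$-null set, which is exactly where the ``$P$-a.e.'' in the statement enters. I would push $P$ forward to the Borel measure $\tilde P:=P\circ m^{-1}$ on $\P_M(\sY)$; since $s$ is universally measurable there is a Borel set $N_0\subseteq\P_M(\sY)$ with $\tilde P(N_0)=0$ together with a Borel map agreeing with $s$ off $N_0$, and redefining this map on $N_0$ to equal a fixed element of $\hat\Gamma_{\mu}(M)$ produces a Borel $\tilde s:\P_M(\sY)\to\hat\Gamma_{\mu}(M)$. Then $f:=\tilde s\circ m$ is Borel measurable with values in $\hat\Gamma_{\mu}(M)$, and for every $v$ with $m(v)\notin N_0$ — a set of full $P$-measure — we have $f(v)=s(m(v))$, so $f(v)$ has $\sY$-marginal $m(v)=v(\sX\times\,\cdot\,)$, while $L(f(v))=V(m(v))\le L(v)$ because $v$ itself lies in $\Gamma_{\mu,m(v)}$.

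I expect the main obstacle to be the middle step: obtaining the optimal transport plan as a measurable function of the target marginal $\psi_0$, and then ensuring the resulting map is genuinely Borel on a $P$-co-null set, so that $f$ can later be used in \textbf{(P2)} to build a Model~3 randomized quantizer. Verifying lower semicontinuity of $g$ and closedness of the constraint correspondence is routine; the delicate points are invoking the selection theorem in a form that guarantees the selected plan actually attains $V(\psi_0)$, and carrying out the Borel/universal-measurability upgrade cleanly.
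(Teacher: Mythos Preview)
Your proposal is correct and follows the same high-level strategy as the paper: factor $f$ as a measurable selector into $\hat{\Gamma}_{\mu}(M)$ composed with the $\sY$-marginal map $v\mapsto v(\sX\times\,\cdot\,)$, with the selector defined only $\tilde P$-a.e.\ for $\tilde P=P\circ m^{-1}$. The paper's execution is shorter: it observes that the marginal map $f_2:\hat{\Gamma}_{\mu}(M)\to\P_M(\sY)$ is continuous and onto, and then invokes Yankov's lemma (von Neumann--Yankov selection) directly to obtain a measurable $g:\P_M(\sY)\to\hat{\Gamma}_{\mu}(M)$ with $f_2(g(\psi))=\psi$ for $\tilde P$-a.e.\ $\psi$, setting $f=g\circ f_1$. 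Your argmin-selection route reaches the same endpoint but asks you to verify the closed-graph and lower-semicontinuity hypotheses and then upgrade from universal to Borel measurability by hand; Yankov's lemma packages the section-plus-a.e.\ statement into a single citation, and the inequality $L(v)\ge L(f(v))$ then follows immediately from $f(v)\in\hat{\Gamma}_{\mu}(M)$ having the same $\sY$-marginal as $v$.
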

\begin{proof}
Define the projections  $f_{1}:\Gamma_{\mu}(M)\rightarrow \P_{M}(\sY)$ and
$f_{2}:\hat{\Gamma}_{\mu}(M)\rightarrow \P_{M}(\sY)$ by
$f_{1}(v)=v(\sX\times\,\cdot\,)$, $f_{2}(v)=v(\sX\times\,\cdot\,)$.
Note that $f_{1}$ is continuous and $f_{2}$ is continuous and onto. Define
$\tilde{P}=P\circ f_1^{-1}$ on $\P_{M}(\sY)$. By Yankov's lemma \cite[Appendix
3]{Dyk79} there exists a mapping $g$ from $\P_{M}(\sY)$ to
$\hat{\Gamma}_{\mu}(M)$ such that $f_{2}(g(\psi))=\psi$ $\tilde{P}$-a.e. Then,
it is straightforward to show that $f=g\circ f_{1}$ satisfies conditions
$v(\sX\times\,\cdot\,)=f(v)(\sX\times\,\cdot\,)$ and $L(v)\geq
L(f(v))$, $P$-a.e.
\end{proof}

\begin{proof}[Proof of Proposition \ref{prop5}]
Let $v^{*}\in\Gamma_{\mu,\psi}^{\rm R}(M)$, i.e.,
\begin{align}
v^{*}=\int_{\Gamma_{\mu}(M)} v P(dv) \text{ and } v^{*}(\sX\times\,\cdot\,)=\psi. \nonumber
\end{align}
By Lemma \ref{lemma4} there exists
$f:\Gamma_{\mu}(M)\rightarrow\hat{\Gamma}_{\mu}(M)$ such that
$v(\sX\times\,\cdot\,)=f(v)(\sX\times\,\cdot\,)$ and $L(v)\geq L(f(v))$,
$P$-a.e. Define $\tilde{P}=P\circ f^{-1}\in \P(\hat{\Gamma}_{\mu}(M))$ and
$\hat{v}=\int_{\hat{\Gamma}_{\mu}(M)} v
\tilde{P}(dv)\in\hat{\Gamma}_{\mu}^{\rm R}(M)$. We have
\begin{align*}
  L(v^{*})&=\int_{\Gamma_{\mu}(M)}L(v) P(dv)\geq \int_{\Gamma_{\mu}(M)}L(f(v))
  P(dv)\\
& =\int_{\hat{\Gamma}_{\mu}(M)} L(v) \tilde{P}(dv)=L(\hat{v}) 
%\end{align*}
\intertext{as well as}
%\begin{align*}
  v^{*}(\sX\times\,\cdot\,)&=\int_{\Gamma_{\mu}(M)}v(\sX\times\,\cdot\,)
  P(dv)\\
&=\int_{\Gamma_{\mu}(M)}f(v)(\sX\times\,\cdot\,)
  P(dv)\\
& =\int_{\hat{\Gamma}_{\mu}(M)} v(\sX\times\,\cdot\,)
  \tilde{P}(dv)=\hat{v}(\sX\times\,\cdot\,).\nonumber
\end{align*}
This completes the proof.
\end{proof}

Recall the set $\Delta_{\mu}$ and its set of its extreme points $\Delta_{\mu,e}$
from Proposition \ref{prop1}. It is proved in \cite{Bor93} and \cite{Bor91} that
any $\tilde{v}\in\Delta_{\mu}$ can be written as
$\tilde{v}=\int_{\Delta_{\mu,e}}v P(dv)$ for some $P\in\P(\Delta_{\mu,e})$. By
Proposition \ref{prop1} we also have
$\Gamma_{\mu}(M)\subset\Delta_{\mu,e}$. The following lemma is based
on these two facts.

\begin{lemma}
Let $\tilde{v}\in\Delta_{\mu}$ which is represented as $\tilde{v}=\int_{\Delta_{\mu,e}}v P(dv)$. If $\tilde{v}(\sX\times\,\cdot\,)\in\P_{M}(\sY)$, then $P(\Gamma_{\mu}(M))=1$.
\label{lemma5}
\end{lemma}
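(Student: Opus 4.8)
The plan is to push the mixture representation of $\tilde v$ forward to its $\sY$-marginal, turning the hypothesis ``the $\sY$-marginal of $\tilde v$ is finitely supported'' into a $P$-almost-sure statement about the components $v$, and then to invoke the extreme-point disintegration of Proposition~\ref{prop1} to upgrade that statement to ``$v$ is an $M$-level quantizer.'' Concretely, I would set $\psi_0 \coloneqq \tilde v(\sX\times\,\cdot\,)$, the $\sY$-marginal of $\tilde v$. Since $\psi_0 \in \P_M(\sY)$ by hypothesis, its support is a finite set $F = \{y_1,\dots,y_L\}$ with $L \le M$, and (as $\sY=\R^n$ is second countable) $\psi_0(F) = 1$. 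Taking $\sY$-marginals in $\tilde v = \int_{\Delta_{\mu,e}} v\, P(dv)$ and evaluating at the Borel set $\sY \setminus F$ gives
\[
\int_{\Delta_{\mu,e}} v\bigl(\sX\times(\sY\setminus F)\bigr)\, P(dv) = \psi_0(\sY\setminus F) = 0 .
\]
The map $v \mapsto v(\sX\times(\sY\setminus F))$ is Borel measurable on $\P(\sX\times\sY)$ (e.g.\ by \cite[Proposition 7.25]{BeSh78}), so the integrand, being nonnegative, must vanish for $P$-a.e.\ $v$; equivalently, $v(\sX\times F) = 1$ for $P$-a.e.\ $v$, i.e.\ the $\sY$-marginal of $v$ is concentrated on the $L$-point set $F$.

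Next I would invoke Proposition~\ref{prop1}: because $v \in \Delta_{\mu,e}$, there is a measurable $f_v : \sX \to \sY$ with $v(dx\,dy) = \mu(dx)\delta_{f_v(x)}(dy)$, so that $v(\sX\times F) = \mu\bigl(f_v^{-1}(F)\bigr)$. Hence $f_v$ takes values in $F$ for $\mu$-a.e.\ $x$; modifying $f_v$ on a $\mu$-null set produces a quantizer $q \in \Q_L \subseteq \Q_M$ with $v = \mu(dx)\delta_{q(x)}(dy)$, i.e.\ $v \in \Gamma_\mu(M)$. Combining this with the previous paragraph yields $P(\Gamma_\mu(M)) = 1$.

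The argument is elementary; the only genuinely structural step is the last one, where the extreme-point disintegration is what lets ``$\sY$-marginal supported on $\le M$ points'' be strengthened to ``$v$ is an $M$-level quantizer'' --- this fails for a general $v \in \Delta_\mu$ (take $v = \mu\otimes\psi_0$ with $\psi_0$ two-valued). So the main thing to watch is using $v \in \Delta_{\mu,e}$ rather than merely $v \in \Delta_\mu$; the only other care needed is the routine measurability justification for the ``nonnegative integrand integrating to zero'' step.
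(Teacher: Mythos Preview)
Your proof is correct and follows essentially the same approach as the paper's: both identify the finite support set of the $\sY$-marginal, push the mixture through to conclude that $P$-a.e.\ component $v$ has its $\sY$-marginal concentrated on that set, and then use the extreme-point disintegration from Proposition~\ref{prop1} to deduce $v\in\Gamma_\mu(M)$. The paper is terser (it phrases the key step as the contrapositive ``$v(\sX\times B)<1$ for all $v\in\Delta_{\mu,e}\setminus\Gamma_\mu(M)$'' without spelling out why), while you make the role of Proposition~\ref{prop1} and the measurability explicit, but the arguments are the same.
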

\begin{proof}
Since $\tilde{v}(\sX\times\,\cdot\,)\in\mathcal{P}_{M}(\sY)$, there exist a
finite set $B\subset Y$ having $M^{'}\leq M$ elements  such that
$\tilde{v}(\sX\times B)=1$. We have 
\begin{align*}
  \tilde{v}(\sX\times B)=&{}\int_{\Delta_{\mu,e}}v(\sX\times B)P(dv)  \\
=  &{}\int_{\Delta_{\mu,e}\setminus
    \Gamma_{\mu}(M)}v(\sX\times B)P(dv) \\
& \mbox{}
  +\int_{\Gamma_{\mu}(M)}v(\sX\times B)P(dv). \nonumber
\end{align*}
Since $v(\sX\times B)<1$ for all $v\in\Delta_{\mu,e}\setminus \Gamma_{\mu}(M)$, we obtain $P(\Gamma_{\mu}(M))=1$.
\end{proof}

Lemma \ref{lemma5} implies $\hat{\Gamma}_{\mu}(M)\subset\Gamma_{\mu}^{\rm R}(M)$ because $v(\sX\times\,\cdot\,)\in\P_{M}(\sY)$ when $v\in\hat{\Gamma}_{\mu}(M)$. Define $h:\P(\Gamma_{\mu}(M))\rightarrow \Delta_{\mu}$ as follows:
\begin{align}
\label{eq_hmap}
h(P)(\,\cdot\,)=\int_{\Gamma_{\mu}(M)} v(\,\cdot\,) P(dv). 
\end{align}
It is clear that the range of $h$ is $\Gamma_{\mu}^{\rm R}(M)\subset \Delta_{\mu}$.

\begin{lemma}
$h$ is continuous.
\label{lemma6}
\end{lemma}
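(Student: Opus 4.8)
The plan is to prove sequential continuity of $h$, which is enough because $\P(\Gamma_{\mu}(M))$ (a Borel subset of the Polish space $\P(\sX\times\sY)$) and $\Delta_{\mu}$ are both metrizable in the weak topology. So I would take an arbitrary sequence $P_k\to P$ weakly in $\P(\Gamma_{\mu}(M))$ and show $h(P_k)\to h(P)$ weakly in $\P(\sX\times\sY)$. By the portmanteau theorem this reduces to showing, for every bounded continuous $\phi:\sX\times\sY\to\R$, that $\int_{\sX\times\sY}\phi\, d\,h(P_k)\to\int_{\sX\times\sY}\phi\, d\,h(P)$.

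The key step is to ``push $\phi$ through the mixture.'' Define $\Phi:\P(\sX\times\sY)\to\R$ by $\Phi(v)=\int_{\sX\times\sY}\phi(x,y)\,v(dx\,dy)$. By the very definition of the weak topology, $\Phi$ is continuous, and $|\Phi(v)|\le\|\phi\|_{\infty}$ for all $v$; hence the restriction of $\Phi$ to $\Gamma_{\mu}(M)$ is bounded and continuous. Next I would establish the identity
\[
\int_{\sX\times\sY}\phi\, d\,h(P)=\int_{\Gamma_{\mu}(M)}\Phi(v)\,P(dv)
\]
for every $P\in\P(\Gamma_{\mu}(M))$: by \eqref{eq_hmap} it holds for $\phi=1_{A\times B}$ with $A,B$ Borel (this is just $h(P)(A\times B)=\int v(A\times B)\,P(dv)$), a monotone class argument extends it to $\phi=1_{C}$ for all Borel $C\subset\sX\times\sY$, linearity gives it for simple $\phi$, monotone convergence for nonnegative measurable $\phi$, and positive/negative parts for bounded measurable $\phi$. (The map $v\mapsto v(C)$ is measurable for every Borel $C$ — this is part of the definition of $\B(\P(\sX\times\sY))$ — so the right-hand side makes sense.) Combining the two displays, since $P_k\to P$ weakly and $\Phi$ is bounded and continuous on $\Gamma_{\mu}(M)$,
\[
\int_{\sX\times\sY}\phi\, d\,h(P_k)=\int_{\Gamma_{\mu}(M)}\Phi\,dP_k\longrightarrow\int_{\Gamma_{\mu}(M)}\Phi\,dP=\int_{\sX\times\sY}\phi\, d\,h(P),
\]
and as $\phi$ is an arbitrary bounded continuous function this yields $h(P_k)\to h(P)$ weakly, proving $h$ continuous.

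I do not expect a genuine obstacle here. The only point requiring a little care is the Tonelli-type interchange giving the mixture identity $\int\phi\, d\,h(P)=\int\Phi\,dP$; everything else follows directly from the definition of weak convergence. If one prefers to avoid the monotone class argument, one can instead note that $h(P)$ is, by construction, precisely the barycenter of $P$ in $\P(\sX\times\sY)$, and the displayed identity is the defining property of the barycenter; continuity of $h$ is then the standard statement that $P\mapsto\operatorname{barycenter}(P)$ is weakly continuous.
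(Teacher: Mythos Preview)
Your proposal is correct and takes essentially the same approach as the paper: both argue sequential continuity by fixing a bounded continuous test function $\phi$ on $\sX\times\sY$, observing that $v\mapsto\int\phi\,dv$ is bounded and continuous on $\Gamma_{\mu}(M)$, and then applying the definition of weak convergence of $P_k\to P$ against this new test function. The only difference is that you spell out the mixture identity $\int\phi\,dh(P)=\int\Phi\,dP$ via a monotone class argument, whereas the paper treats it as immediate from the definition of $h$.
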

\begin{proof}
Assume $\{P_{n}\}$ converges weakly to $P$ in $\P(\Gamma_{\mu}(M))$. Then, for
any  continuous and bounded real function $f$ on $\sX\times \sY$
\begin{eqnarray*}
\lefteqn{\lim_{n\to \infty} \int_{\Gamma_{\mu}(M)}\int_{\sX\times\sY} f(x,y) v(dx\, dy)
P_{n}(dv)}\qquad \\
& = &\int_{\Gamma_{\mu}(M)}\int_{\sX\times\sY} f(x,y) v(dx\, dy) P(dv)
\end{eqnarray*}
if the mapping $v \mapsto \int_{\sX\times\sY} f(x,y) v(dx\, dy)$ is continuous and
bounded on $\Gamma_{\mu}(M)$. Clearly this mapping is  continuous by the definition of
weak convergence and bounded by the boundedness of $f$. Thus
\begin{align}
 \int_{\Gamma_{\mu}(M)}v P_{n}(dv) \to  \int_{\Gamma_{\mu}(M)} v P(dv) \nonumber
\end{align}
weakly, completing the proof.
\end{proof}

Since $\hat{\Gamma}_{\mu}(M)\subset\Gamma_{\mu}^{\rm R}(M)$, we have
$\P^{\rm opt}(\Gamma_{\mu}(M)):=h^{-1}(\hat{\Gamma}_{\mu}(M))\subset
\P(\Gamma_{\mu}(M))$, which is measurable by the measurability of
$\hat{\Gamma}_{\mu}(M)$ and $h$. Let $g:\P^{\rm opt}(\Gamma_{\mu}(M))\rightarrow
\hat{\Gamma}_{\mu}(M)$ be the restriction of $h$ to
$\P^{\rm opt}(\Gamma_{\mu}(M))$. Clearly $g$ is measurable and onto. By Yankov's
lemma \cite{Dyk79} for any probability measure $P$ on $\hat{\Gamma}_{\mu}(M)$
there exists a measurable mapping $\varphi:\hat{\Gamma}_{\mu}(M)\rightarrow
\P^{\rm opt}(\Gamma_{\mu}(M))$ such that $g(\varphi(\hat{v}))=\hat{v}$
$P$-a.e. In addition, since $\varphi(\hat{v})\in g^{-1}(\hat{v})$
$P$-a.e., we have
\begin{align}
L(\hat{v})&=\int_{\Gamma_{\mu}(M)}L(v)\varphi(\hat{v})(dv) \label{eq2}
\end{align}
and
\begin{align}
\hat{v}(\sX\times\,\cdot\,)&=\int_{\Gamma_{\mu}(M)}v(\sX\times\,\cdot\,)\varphi(\hat{v})(dv)
\label{eq3}
\end{align}
$P$-a.e. Define the stochastic kernel $\Pi(dv|\hat{v})$ on $\Gamma_{\mu}(M)$ given $\hat{\Gamma}_{\mu}(M)$ as
\begin{align}
\Pi(dv|\hat{v})=\varphi(\hat{v})(dv).
\label{eq1}
\end{align}
Since $\varphi$ is measurable, $\Pi(dv|\hat{v})$ is well defined. Observe that
both $\varphi$ and $\Pi(dv|\hat{v})$ depend on the probability measure $P\in \hat{\Gamma}_{\mu}(M)$.

\begin{proposition}
If \emph{\textbf{(P2)}} has a minimizer $v^{*}$, then we can find $\bar{v} \in \Gamma_{\mu,\psi}^{\rm R}(M)$
such that $L(\bar{v})=L(v^{*})$, implying that $\bar{v}$ is a minimizer for
\emph{\textbf{(P1)}}.\label{theo6}
\end{proposition}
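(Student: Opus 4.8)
The plan is to \emph{unfold} a minimizer $v^{*}$ of \textbf{(P2)} --- which is a mixture of optimal transport plans lying in $\hat{\Gamma}_{\mu}(M)$ --- into an equivalent mixture of genuine quantizers in $\Gamma_{\mu}(M)$ via the stochastic kernel $\Pi$ produced by Yankov's lemma, and then to check that this unfolded mixture still has $\sY$-marginal $\psi$ and the same distortion as $v^{*}$.

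First I would fix a representing measure for $v^{*}$: since $v^{*}\in\hat{\Gamma}_{\mu,\psi}^{\rm R}(M)=\hat{\Gamma}_{\mu}^{\rm R}(M)\cap\Gamma_{\mu,\psi}$, write $v^{*}=\int_{\hat{\Gamma}_{\mu}(M)}\hat v\,P(d\hat v)$ for some $P\in\P(\hat{\Gamma}_{\mu}(M))$, and recall $v^{*}(\sX\times\,\cdot\,)=\psi$. Now apply the construction carried out just before this proposition to \emph{this particular} $P$: this yields the measurable map $\varphi:\hat{\Gamma}_{\mu}(M)\to\P^{\rm opt}(\Gamma_{\mu}(M))$ and the stochastic kernel $\Pi(dv\,|\,\hat v)=\varphi(\hat v)(dv)$ on $\Gamma_{\mu}(M)$ given $\hat{\Gamma}_{\mu}(M)$ for which \eqref{eq2} and \eqref{eq3} hold $P$-a.e. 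Define $\bar P\in\P(\Gamma_{\mu}(M))$ by $\bar P(\,\cdot\,)=\int_{\hat{\Gamma}_{\mu}(M)}\Pi(\,\cdot\,|\,\hat v)\,P(d\hat v)$ (well-defined because $\Pi$ is a genuine stochastic kernel) and set $\bar v=h(\bar P)=\int_{\Gamma_{\mu}(M)}v\,\bar P(dv)$, which belongs to $\Gamma_{\mu}^{\rm R}(M)$ by the description of the range of $h$.

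The verification of the two required properties of $\bar v$ is then a Fubini--Tonelli computation, licit because $\rho\ge0$ and all the measures involved are genuine (stochastic) kernels. For the output marginal, for any Borel $B\subset\sY$,
\begin{align*}
\bar v(\sX\times B)&=\int_{\hat{\Gamma}_{\mu}(M)}\Bigl(\int_{\Gamma_{\mu}(M)}v(\sX\times B)\,\varphi(\hat v)(dv)\Bigr)P(d\hat v)\\
&=\int_{\hat{\Gamma}_{\mu}(M)}\hat v(\sX\times B)\,P(d\hat v)=v^{*}(\sX\times B)=\psi(B),
\end{align*}
where the second equality uses \eqref{eq3} ($P$-a.e.); hence $\bar v\in\Gamma_{\mu}^{\rm R}(M)\cap\Gamma_{\mu,\psi}=\Gamma_{\mu,\psi}^{\rm R}(M)$. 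Running the same computation with $\rho$ in place of the indicator of $\sX\times B$ and invoking \eqref{eq2} instead of \eqref{eq3} gives $L(\bar v)=\int_{\hat{\Gamma}_{\mu}(M)}L(\hat v)\,P(d\hat v)=L(v^{*})$.

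Finally I would close the argument using Proposition~\ref{prop5}. That proposition shows that the infimum of $L$ over \textbf{(P1)} is no smaller than its infimum over \textbf{(P2)}, which equals $L(v^{*})=L(\bar v)$ since $v^{*}$ is a \textbf{(P2)}-minimizer; on the other hand $\bar v$ is feasible for \textbf{(P1)}, so $L(\bar v)$ is no smaller than the \textbf{(P1)}-infimum. Hence $L(\bar v)$ equals the infimum in \textbf{(P1)}, i.e., $\bar v$ is a minimizer for \textbf{(P1)}. I expect the only delicate points to be bookkeeping ones: confirming that the ``mixture of mixtures'' $\bar P$ is a bona fide element of $\P(\Gamma_{\mu}(M))$ and that the iterated integrals may be interchanged --- both resting on the measurability of $\varphi$ (hence on $\Pi$ being a stochastic kernel) established before the proposition --- together with keeping track of the fact that \eqref{eq2}--\eqref{eq3} hold only $P$-almost everywhere, so the displayed equalities are equalities of $P$-integrals.
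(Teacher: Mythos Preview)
Your proposal is correct and follows essentially the same approach as the paper's proof: both represent $v^{*}$ by a $P\in\P(\hat{\Gamma}_{\mu}(M))$, push $P$ through the kernel $\Pi(dv\,|\,\hat v)=\varphi(\hat v)(dv)$ to obtain a measure on $\Gamma_{\mu}(M)$ (the paper calls it $\tilde P$, you call it $\bar P$), define $\bar v$ as the corresponding mixture, and then verify $L(\bar v)=L(v^{*})$ and $\bar v(\sX\times\cdot)=\psi$ via \eqref{eq2} and \eqref{eq3} before invoking Proposition~\ref{prop5}. The only cosmetic difference is that the paper routes the construction through the joint measure $\Lambda(d\hat v\,dv)=P(d\hat v)\Pi(dv\,|\,\hat v)$ and takes its $\Gamma_{\mu}(M)$-marginal, whereas you define $\bar P$ directly as $\int \Pi(\cdot\,|\,\hat v)\,P(d\hat v)$; these are the same object.
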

\begin{proof}
  $v^{*}$ can be written as
  $v^{*}=\int_{\hat{\Gamma}_{\mu}(M)}\hat{v}P(d\hat{v})$. Consider the
  stochastic kernel $\Pi(dv|\hat{v})$ defined in (\ref{eq1}). Composing  $P$
  with  $\Pi$ we obtain a probability measure $\Lambda$ on
  $\hat{\Gamma}_{\mu}(M)\times \Gamma_{\mu}(M)$ given by
\begin{align}
\Lambda(d\hat{v}\, dv)=P(d\hat{v})\Pi(dv|\hat{v}).
\label{eq4}
\end{align}
Let $\tilde{P} = \Lambda(\hat{\Gamma}_{\mu}(M)\times \,\cdot\,)\in \P(
\Gamma_{\mu}(M))$. Define the randomized quantizer
$\bar{v}\in\Gamma_{\mu}^{\rm R}(M)$ as $\bar{v}=\int_{\Gamma_{\mu}(M)}
v\tilde{P}(dv)$. We show that $L(v^{*})=L(\bar{v})$ and $v^{*}(\sX \times
\,\cdot\,)=\bar{v}(\sX\times \,\cdot\,)$ which will complete the proof. We have
\begin{align}
L(v^{*})&=\int_{\hat{\Gamma}_{\mu}(M)}
L(\hat{v})P(d\hat{v}) \nonumber \\
&=\int_{\hat{\Gamma}_{\mu}(M)} \int_{\Gamma_{\mu}(M)}
L(v)\varphi(\hat{v})(dv) P(d\hat{v}) \text{                 \ \  (by
  (\ref{eq2}))} \nonumber \\ 
&=\int_{\hat{\Gamma}_{\mu}(M)\times\Gamma_{\mu}(M)} L(v)\Lambda(d\hat{v}\, dv)
\quad \text{(by \eqref{eq1})} \nonumber \\*
&=\int_{\Gamma_{\mu}(M)} L(v)\tilde{P}(dv)=L(\bar{v}). \nonumber
\end{align}
Similarly, 
\begin{align*}
  v^{*}(\sX\times\,\cdot\,)&=\int_{\hat{\Gamma}_{\mu}(M)}
  \hat{v}(\sX\times\,\cdot\,)P(d\hat{v})\\
& =\int_{\hat{\Gamma}_{\mu}(M)} \int_{\Gamma_{\mu}(M)}\!\!\!\!
v(\sX\times\,\cdot\,)\varphi(\hat{v})(dv) P(d\hat{v}) \text{   (by (\ref{eq3}))} \nonumber \\
  &=\int_{\hat{\Gamma}_{\mu}(M)\times\Gamma_{\mu}(M)}
  v(\sX\times\,\cdot\,)\Lambda(d\hat{v}\,  dv) \quad \text{(by \eqref{eq1})}    \\
  &=\int_{\Gamma_{\mu}(M)}
  v(\sX\times\,\cdot\,)\tilde{P}(dv)=\bar{v}(\sX\times\,\cdot\,). \nonumber
\end{align*}
By Proposition \ref{prop5}, $\bar{v}$ is a minimizer for $\textbf{(P1)}$.
\end{proof}

Hence, to prove the existence of a minimizer for \textbf{(P1)} it is enough
prove the existence of a minimizer for \textbf{(P2)}. Before proceeding to the
proof we need to define the optimal transport problem. Optimal transport problem
for marginals $\pi\in\P(\sX)$, $\lambda\in\P(\sY)$ and cost function
$c:\sX\times\sY\rightarrow[0,\infty]$ is defined as:
\begin{equation}
  \label{opttr}  
\begin{split}
&\text{minimize } \int_{\sX\times\sY} c(x,y) v(dx\,dy)  \\
&\text{subject to  } v \in \Gamma_{\pi,\lambda}. 
\end{split}
\end{equation}

The following result is about the structure of the optimal $v$ in
(\ref{opttr}). It uses the concept of $c$-cyclically monotone sets
\cite[Definition~5.1]{Vil09}. A set $B\subset \sX\times \sY$ is said to be
$c$-cyclically monotone if for any $N\ge 1$ and pairs
$(x_1,y_1),\ldots,(x_N,y_N)$ in $B$, the following inequality holds:
\[
\sum_{i=1}^N c(x_i,y_i) \le \sum_{i=1}^N c(x_i,y_{i+1}),
\]
where $y_{N+1}\coloneqq y_1$.

Informally, when $v\in \Gamma_{\pi,\lambda}$ is concentrated on a $c$-cyclically
monotone set, then its cost cannot be improved by local perturbations; see the
discussion in \cite[Chapter 5]{Vil09}. The following result shows that 
an optimal $v$ must concentrate on a $c$-cyclically monotone set.

\begin{proposition}[\protect{\cite[Theorem 1.2]{Pra08}, \cite[Theorem 5.10]{Vil09}}]
  Let $c:\sX\times\sY\rightarrow[0,\infty]$ be continuous. If $v \in
  \Gamma_{\pi,\lambda}$ is a solution to the optimal transport problem
  (\ref{opttr}) and $\int_{\sX\times\sY} c(x,y) v(dx\, dy)<\infty$, then $v$ is
  concentrated on some $c$-cyclically monotone set.
\label{prop7}
\end{proposition}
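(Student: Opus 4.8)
The plan is to establish the (formally stronger) statement that the set $B\coloneqq\supp(v)\cap\{c<\infty\}$ is $c$-cyclically monotone; this suffices, because $\sX\times\sY$ is second countable, so $v$ is concentrated on the closed set $\supp(v)$, while $\int c\,dv<\infty$ gives $v(\{c<\infty\})=1$, hence $v(B)=1$, and $B$ is Borel (intersection of a closed and an open set). When $c$ is real-valued, as in the two applications of this proposition in the paper, one simply has $B=\supp(v)$. I would prove that $B$ is $c$-cyclically monotone by contradiction, exploiting optimality of $v$ through an explicit cycle-rerouting perturbation — this is the classical argument underlying \cite[Theorem~5.10]{Vil09}.

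Suppose $B$ is not $c$-cyclically monotone, so there are $N\ge1$ and pairs $(x_1,y_1),\dots,(x_N,y_N)\in B$ with $\sum_{i=1}^N c(x_i,y_{i+1})<\sum_{i=1}^N c(x_i,y_i)$, indices read mod $N$ ($y_{N+1}=y_1$). The left-hand side is a finite sum of $[0,\infty]$-valued terms that is strictly less than the right-hand side, so it is finite; thus every $c(x_i,y_{i+1})$ is finite, and every $c(x_i,y_i)$ is finite by the choice of $B$. Hence $c$ is continuous and finite at each of the $2N$ points $(x_i,y_i)$ and $(x_i,y_{i+1})$, so I can pick open neighborhoods $U_i\ni x_i$ and $V_i\ni y_i$ with compact closures, together with a constant $\eps>0$, such that
\[
\sum_{i=1}^N c(x_i',y_{i+1}')\;\le\;\sum_{i=1}^N c(x_i',y_i')-\eps
\]
for all $x_i'\in U_i$ and $y_i'\in V_i$, $i=1,\dots,N$; this is possible because the inequality is strict at the chosen points and $c$ is locally bounded there.

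Set $W_i=U_i\times V_i$; since $(x_i,y_i)\in\supp(v)$ we have $v(W_i)>0$, so $m\coloneqq\min_{1\le i\le N}v(W_i)>0$. Put $v_i\coloneqq v(W_i)^{-1}\,v|_{W_i}\in\P(\sX\times\sY)$, let $\omega\coloneqq v_1\otimes\cdots\otimes v_N$ on $(\sX\times\sY)^N$ with coordinates $(X_1,Y_1,\dots,X_N,Y_N)$, and define the competitor
\[
\tilde v\;\coloneqq\;v+\frac{m}{N}\sum_{i=1}^N\Bigl(\mathrm{law}_\omega(X_i,Y_{i+1})-\mathrm{law}_\omega(X_i,Y_i)\Bigr).
\]
Then: $\tilde v\ge0$, since $v|_{W_i}\le v$ and $v(W_i)\ge m$ give $v_i=\mathrm{law}_\omega(X_i,Y_i)\le m^{-1}v$, so the subtracted mass $\frac{m}{N}\sum_i v_i\le v$ while the added mass is nonnegative; $\tilde v$ has total mass $1$; its $\sX$-marginal equals that of $v$ because $\mathrm{law}_\omega(X_i,Y_{i+1})$ and $\mathrm{law}_\omega(X_i,Y_i)$ share the $\sX$-marginal of $v_i$; and its $\sY$-marginal equals that of $v$ because the $\sY$-marginals telescope cyclically (the $\sY$-marginal of $\mathrm{law}_\omega(X_i,Y_{i+1})$ is that of $v_{i+1}$). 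Hence $\tilde v\in\Gamma_{\pi,\lambda}$. Finally, all integrals below are finite ($c$ is bounded on the compact sets $\overline{W_i}$ and $\overline{U_i}\times\overline{V_{i+1}}$, and $\int c\,dv<\infty$), and since $\omega$ is carried by $W_1\times\cdots\times W_N$ the choice of $\eps$ yields
\[
\int c\,d\tilde v=\int c\,dv+\frac{m}{N}\int\sum_{i=1}^N\bigl(c(X_i,Y_{i+1})-c(X_i,Y_i)\bigr)\,d\omega\;\le\;\int c\,dv-\frac{m\eps}{N}\;<\;\int c\,dv,
\]
contradicting the optimality of $v$ in \eqref{opttr}. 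Therefore $B$ is $c$-cyclically monotone, which proves the proposition.

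I expect the only real obstacle to be the bookkeeping in the perturbation step: verifying that $\tilde v$ is an honest probability measure in $\Gamma_{\pi,\lambda}$ (which is what forces the crude domination $v_i\le m^{-1}v$, so that the $W_i$ need not be made disjoint) and that its cost is both finite and strictly smaller, together with the small but necessary observation that any cycle witnessing failure of $c$-cyclic monotonicity automatically involves only finite values of $c$, so that continuity can be invoked even though $c$ is allowed to be $[0,\infty]$-valued. Since $c$ is assumed continuous here, none of the extra machinery needed for merely Borel costs (cf.\ \cite{Pra08}) is required, and in the paper one may alternatively just invoke \cite[Theorem~5.10]{Vil09} or \cite[Theorem~1.2]{Pra08} as stated.
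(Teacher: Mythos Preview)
Your proof is correct and follows the classical cycle-rerouting perturbation argument that underlies \cite[Theorem~5.10]{Vil09}. Note, however, that the paper does not supply its own proof of this proposition: it is stated purely as a citation to \cite[Theorem~1.2]{Pra08} and \cite[Theorem~5.10]{Vil09} and then used as a black box in the proof of Theorem~\ref{thmopt}. So there is nothing in the paper to compare against beyond the references themselves; you have simply written out (correctly, and with the appropriate care about the $[0,\infty]$-valued cost) the standard argument that those references contain, which is more than the paper requires.
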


For any $K\subset\P(\sX)$ and $S\subset\P(\sY)$ define
$\Xi_{K,S}\subset\P(\sX\times\sY)$ as the set of probability measures which are
concentrated on some $c$-cyclically monotone set and solve (\ref{opttr}) for
some $\pi\in K$, $\lambda\in S$. The following result is a slight modification
of \cite[Corollary~5.21]{Vil09}. 

\begin{proposition}
If $K$ and $S$ are compact, then $\Xi_{K,S}$ is compact.
\label{prop6}
\end{proposition}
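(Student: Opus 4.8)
The plan is to show that $\Xi_{K,S}$ is closed and relatively compact in $\P(\sX\times\sY)$ (for the weak topology), which together give compactness.

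For relative compactness, every $v\in\Xi_{K,S}$ has $\sX$-marginal in $K$ and $\sY$-marginal in $S$, so $\Xi_{K,S}$ is contained in the set of joint distributions whose marginals belong to $K\times S$. That set is tight: since $K$ and $S$ are compact they are tight by Prokhorov's theorem, so given $\eps>0$ there are compact sets $C_\sX\subset\sX$ and $C_\sY\subset\sY$ with $\pi(C_\sX)\ge1-\eps/2$ for all $\pi\in K$ and $\lambda(C_\sY)\ge1-\eps/2$ for all $\lambda\in S$, whence $v(C_\sX\times C_\sY)\ge1-\eps$ for every such $v$. It is also closed, since the two marginal maps $\P(\sX\times\sY)\to\P(\sX)$ and $\P(\sX\times\sY)\to\P(\sY)$ are weakly continuous and $K,S$ are closed. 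Hence this set is compact and $\Xi_{K,S}$ is relatively compact.

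For closedness, let $v_n\in\Xi_{K,S}$ with $v_n\to v$ weakly, and put $\pi_n=v_n(\,\cdot\,\times\sY)$, $\lambda_n=v_n(\sX\times\,\cdot\,)$. By continuity of the marginal maps, $\pi_n\to\pi:=v(\,\cdot\,\times\sY)\in K$ and $\lambda_n\to\lambda:=v(\sX\times\,\cdot\,)\in S$. I need $v$ to be concentrated on a $c$-cyclically monotone set and to solve \eqref{opttr} for $(\pi,\lambda)$. For the first, each $v_n$ is concentrated on a $c$-cyclically monotone set which, by continuity of $c$, may be taken closed, so $\supp(v_n)$ is itself $c$-cyclically monotone; by the portmanteau theorem \cite{Bil99}, every $z\in\supp(v)$ satisfies $v_n(B(z,r))>0$ for each $r>0$ and all large $n$, hence is the limit of some sequence $z_n\in\supp(v_n)$; applying this to finitely many points of $\supp(v)$ and letting $n\to\infty$ in the cyclic monotonicity inequalities (continuity of $c$) shows $\supp(v)$ is $c$-cyclically monotone, so $v$ is concentrated on it. For the second, optimality of $v$ for $(\pi,\lambda)$ is exactly the stability of optimal transport under weak convergence of marginals with a lower semicontinuous, bounded-below cost, which I would cite from \cite[Theorem~5.20]{Vil09} (the result underlying \cite[Corollary~5.21]{Vil09}). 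Thus $v\in\Xi_{K,S}$, and being closed and relatively compact, $\Xi_{K,S}$ is compact.

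The hard part is the stability step: unlike the cyclic-monotonicity part it is not a mere passage to the limit, because the cost of $v$ must be compared with that of an arbitrary competitor in $\Gamma_{\pi,\lambda}$, and converting such a competitor into competitors for $\Gamma_{\pi_n,\lambda_n}$ with asymptotically no larger cost requires the gluing construction of \cite[Theorem~5.20]{Vil09}. In practice the whole statement is a short corollary of \cite[Corollary~5.21]{Vil09} — which already gives compactness of the set of optimal plans with marginals in the compact set $K\times S$ — the only new ingredient being that intersecting with the $c$-cyclically monotone plans is, by the argument above, a closed operation. (If one prefers, that closedness can instead be obtained from Proposition~\ref{prop7}, once it is noted that the limiting optimal cost is finite, which holds for instance whenever the costs of the $v_n$ are uniformly bounded, as in our applications.)
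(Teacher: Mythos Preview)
Your argument is correct and tracks the paper's proof closely: both establish relative compactness via tightness of the marginals (the paper cites \cite[Lemma~4.4]{Vil09} for this), and both show that concentration on a $c$-cyclically monotone set passes to weak limits by approximating points of $\supp(v)$ by points of $\supp(v_n)$ and using continuity of $c$.

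The one substantive difference is in how optimality of the limit is obtained. You invoke the stability theorem \cite[Theorem~5.20]{Vil09}, and correctly flag the gluing construction there as the nontrivial step. The paper instead cites \cite[Theorem~B]{Pra08}, which says that for a continuous cost $c:\sX\times\sY\to[0,\infty]$, any plan concentrated on a $c$-cyclically monotone set is automatically optimal. This is more economical here: since cyclic monotonicity of the limit has already been established, optimality follows at once without any stability or gluing argument. Your route has the advantage of not relying on Pratelli's equivalence, but in this setting the paper's shortcut is available and avoids exactly the step you identify as hard.
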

\begin{proof}
  Let $\{v_{n}\}$ be a sequence in $\Xi_{K,S}$. It can be shown that
  there exists a subsequence $\{v_{n_{k}}\}$ converging to $v$
  whose marginals belong to $K$ and $S$ \cite[Lemma~4.4]{Vil09}. Since each
  $v_{n_{k}}$ is concentrated on a $c$-cyclically monotone set by
  assumption, it can be shown by using the continuity of $c$ that
  $v$ is also concentrated on a $c$-cyclically monotone set (see proof of
  Theorem 5.20 in \cite{Vil09}). Then $v$ is also an element
  of $\Xi_{K,S}$ by \cite[Theorem~B]{Pra08}.
\end{proof}

Since $\{\mu\}$ and $\P_{M}(\sY)$ are both compact, we obtain that
$\Xi_{\{\mu\},\P_{M}(\sY)}$ is compact. Thus it follows that
$\P(\Xi_{\{\mu\},\P_{M}(\sY)})$ is also compact. Furthermore, by Proposition
\ref{prop7} we have $\Xi_{\{\mu\},\P_{M}(\sY)} \supset \{v\in
\hat{\Gamma}_{\mu}(M): L(v)<\infty\}$.  Hence the randomization can be
restricted to $\Xi_{\{\mu\},\P_{M}(\sY)}$ when defining
$\hat{\Gamma}_{\mu}^{\rm R}(M)$ for \textbf{(P2)}. Let $\Xi_{\{\mu\},\P_{M}(\sY)}^{\rm R}$
be the randomization of $\Xi_{\{\mu\},\P_{M}(\sY)}$ obtained by replacing
$\Gamma_{\mu}(M)$ with $\Xi_{\{\mu\},\P_{M}(\sY)}$ in (\ref{neweq4}).  One can
show that the mapping $\P(\Xi_{\{\mu\},\P_{M}(\sY)})\ni P\mapsto v_P \in
\Xi_{\{\mu\},\P_{M}(\sY)}^{\rm R}$ is continuous by using the same proof as in Lemma
\ref{lemma6}.  Thus $\Xi_{\{\mu\},\P_{M}(\sY)}^{\rm R}$ is the continuous image of a
compact set, and thus it is also compact.  This, together with the
compactness of $\Gamma_{\mu,\psi}$ and the lower semicontinuity of $L$, implies
the existence of the minimizer for \textbf{(P2)} under Assumption~1. 

To tie up a loose end, we still have to show that $\hat{\Gamma}_{\mu}(M)$ is
measurable, which will complete the proof under Assumption~1.

\begin{lemma}
\label{lem_meas}
 $\hat{\Gamma}_{\mu}(M)$ is a Borel set.
\end{lemma}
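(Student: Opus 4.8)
The plan is to exhibit $\hat{\Gamma}_{\mu}(M)$ as a finite intersection of Borel sets, most of them closed. Write $r\colon \P(\sX\times\sY)\to\P(\sX)$ and $p\colon \P(\sX\times\sY)\to\P(\sY)$ for the $\sX$- and $\sY$-marginal projections; both are continuous, so $\Gamma_{\mu,\lambda}=r^{-1}(\{\mu\})\cap p^{-1}(\{\lambda\})$ is closed, and, having singleton marginal families, it is tight, hence compact by Prokhorov's theorem. I introduce the optimal transport value function $V\colon\P(\sY)\to[0,\infty]$, $V(\lambda)=\inf_{v\in\Gamma_{\mu,\lambda}}L(v)$; this is well defined since $\mu\otimes\lambda\in\Gamma_{\mu,\lambda}$, and since $L$ is lower semicontinuous on $\P(\sX\times\sY)$ under Assumption~1 (\cite[Lemma~4.3]{Vil09}) and $\Gamma_{\mu,\lambda}$ is compact, the infimum is attained. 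Unwinding the definition of $\hat\Gamma_\mu(M)$, a measure $\hat v$ lies in it precisely when $r(\hat v)=\mu$, $p(\hat v)\in\P_M(\sY)$, and $L(\hat v)=V(p(\hat v))$; because $\hat v\in\Gamma_{\mu,p(\hat v)}$ always forces $L(\hat v)\ge V(p(\hat v))$, the equality may be replaced by the inequality $L(\hat v)\le V(p(\hat v))$. Thus
\[
\hat\Gamma_\mu(M)=r^{-1}(\{\mu\})\cap p^{-1}\!\bigl(\P_M(\sY)\bigr)\cap\bigl\{v\in\P(\sX\times\sY):L(v)\le V(p(v))\bigr\},
\]
where the first set is closed and the second is closed because $p$ is continuous and $\P_M(\sY)$ is compact (Lemma~\ref{lemma2}), hence closed.

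It remains to show the third set is Borel, and for this it suffices that $V$ is Borel measurable: then $V\circ p$ is Borel (as $p$ is continuous), $L$ is Borel (being lower semicontinuous), and for any two Borel $[0,\infty]$-valued functions $f,g$ one has the elementary identity $\{f\le g\}=\bigcap_{q\in\mathbb{Q}_{>0}}\bigl(\{f\le q\}\cup\{g>q\}\bigr)$, which is Borel. To obtain measurability of $V$ I would show that it is lower semicontinuous, via the standard stability argument for optimal transport: given $\lambda_n\to\lambda$ weakly, pass to a subsequence along which $V(\lambda_n)\to\liminf_n V(\lambda_n)$, pick optimal couplings $v_n\in\Gamma_{\mu,\lambda_n}$ with $L(v_n)=V(\lambda_n)$, note that $\{v_n\}$ is tight (the $\sX$-marginals are all $\mu$ and the $\sY$-marginals $\lambda_n$ form a weakly convergent, hence tight, family), extract a further subsequence $v_{n_k}$ converging weakly to some $v$, observe that $v\in\Gamma_{\mu,\lambda}$ by continuity of $r$ and $p$, and conclude
\[
V(\lambda)\le L(v)\le\liminf_k L(v_{n_k})=\liminf_k V(\lambda_{n_k})=\liminf_n V(\lambda_n)
\]
using the lower semicontinuity of $L$.

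I expect the lower semicontinuity of $V$ to be the only substantive step; everything else --- closedness of the marginal constraint sets, continuity of the projections, and the identity for $\{f\le g\}$ --- is routine bookkeeping. I also note that in the compact-$\sY$ setting in force throughout the proof of Theorem~\ref{thmopt} under Assumption~1, the space $\P(\sY)$ is itself compact, so the $\sY$-marginals are automatically relatively compact and the Prokhorov extractions above are immediate; the same scheme applies in the one-point compactification argument used for Assumption~2.
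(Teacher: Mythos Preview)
Your proof is correct and takes a genuinely different route from the paper's. The paper splits $\hat{\Gamma}_{\mu}(M)$ into the finite-cost part $\hat{\Gamma}^{\rm f}_{\mu}(M)$ and the infinite-cost part $\hat{\Gamma}^{\infty}_{\mu}(M)$; for the former it invokes the $c$-cyclical monotonicity characterization of optimal plans (Proposition~\ref{prop7}) together with the compactness of $\Xi_{\{\mu\},\P_M(\sY)}$ (Proposition~\ref{prop6}) to identify $\hat{\Gamma}^{\rm f}_{\mu}(M)=\{v\in\Xi_{\{\mu\},\P_M(\sY)}:L(v)<\infty\}$, and for the latter it argues via the $\sigma$-compact image $f_2(\hat{\Gamma}^{\rm f}_{\mu}(M))$ and the saturation property $\hat{\Gamma}^{\infty}_{\mu}(M)=f_2^{-1}(f_2(\hat{\Gamma}^{\infty}_{\mu}(M)))$. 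Your argument bypasses all of this: by introducing the value function $V(\lambda)=\min_{v\in\Gamma_{\mu,\lambda}}L(v)$ and proving it lower semicontinuous through the standard tightness/stability argument, you reduce the claim to the Borel measurability of $\{L\le V\circ p\}$, which follows immediately from the countable-intersection identity. This is cleaner and more self-contained---it needs neither $c$-cyclical monotonicity nor the separate treatment of the infinite-cost case---whereas the paper's approach has the advantage of reusing the machinery (Propositions~\ref{prop7} and \ref{prop6}) already developed for the existence proof, so no new ingredient is introduced.
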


\begin{proof}
  Let us define $\hat{\Gamma}^{{\rm
      f}}_{\mu}(M):=\{v\in\hat{\Gamma}_{\mu}(M):L(v)<\infty\}$ and
  $\hat{\Gamma}^{\infty}_{\mu}(M)=\hat{\Gamma}_{\mu}(M)\setminus
  \hat{\Gamma}^{f}_{\mu}(M)$.  Since solutions to the optimal transport problem
  having finite costs must concentrate on $c$-cyclically monotone sets by
  Proposition~\ref{prop7}, we have $\hat{\Gamma}^{{\rm
      f}}_{\mu}(M)=\{v\in\Xi_{\{\mu\},\P_M(\sY)}:L(v)<\infty\}$. Hence,
  $\hat{\Gamma}^{f}_{\mu}(M)$ is a Borel set since $\Xi_{\{\mu\},\P_M(\sY)}$ is
  compact and $L$ is lower semi-continuous. Recall the continuous mapping $f_2$
  in the proof of Lemma~\ref{lemma4}. Since $\Xi_{\{\mu\},\P_M(\sY)}$ is compact,
  $\{v\in\Xi_{\{\mu\},\P_M(\sY)}:L(v)\leq N\}$ is also compact for all $N\ge
  0$. Hence, $f_2\bigl(\hat{\Gamma}^{{\rm
      f}}_{\mu}(M)\bigr)=\bigcup_{N=0}^{\infty}
  f_2\bigl(\{v\in\Xi_{\{\mu\},\P_M(\sY)}:L(v)\leq N\}\bigr)$ is $\sigma$-compact,
  so a Borel set, in $\P_M(\sY)$. Since
  $f_2\bigl(\hat{\Gamma}^{\infty}_{\mu}(M)\bigr)=\P_M(\sY)\setminus
  f_2\bigl(\hat{\Gamma}^{f}_{\mu}(M)\bigr)$,
  $f_2\bigl(\hat{\Gamma}^{\infty}_{\mu}(M)\bigr)$ is also a Borel set. Note that for
  any $v\in\hat{\Gamma}^{\infty}_{\mu}(M)$ we have $L(v)=\infty$, which means
  that all $\tilde{v}$ with the same marginals as  $v$ are also in $\hat{\Gamma}^{\infty}_{\mu}(M)$. This implies
  $\hat{\Gamma}^{\infty}_{\mu}(M)=f_2^{-1}\bigl(f_2\bigl(\hat{\Gamma}^{\infty}_{\mu}(M)\bigr)\bigr)$. Hence,
  $\hat{\Gamma}^{\infty}_{\mu}(M)$ is a Borel set.
\end{proof}

\hspace{-0.4cm}\textit{II) \textbf{Proof under Assumption~2}}

It is easy to check that the proof under Assumption~1 remains valid if $\sX$ and
$\sY$ are arbitrary uncountable Polish spaces such that $\sY$ is compact, and the
distortion measure $\rho$ is an extended real valued function (no steps
exploited the special structure of $\R^n$). Let $\sY$ be the one-point
compactification of $\R^{n}$ \cite{Dud89}. $\sY$ is clearly an uncountable
Polish space. Define the extended real valued distortion measure
$\rho:\sX\times\sY\rightarrow[0,\infty]$ by
\begin{align}
\rho(x,y)=\begin{cases}
\|x-y\|^{2},   &\text{ if } y\in \R^{n}  \\
\infty,  &\text{ if } y=\infty.
\end{cases}
\label{neweq5}
\end{align}
It is straightforward to check that $\rho$ is continuous. 
Define $L$ on $\P(\sX\times\sY)$ as before, but with this new distortion measure
$\rho$. The proof under Assumption~1 gives a minimizer
$v^{*}=\int_{\Gamma_{\mu}(M)} v P(dv)$ for $\textbf{(P1)}$. Define
$\tilde{\Gamma}_{\mu}(M)=\{v\in\Gamma_{\mu}(M):
v(\sX\times\{\infty\})=0\}$. Since $L(v^{*})<\infty$ by assumption,
$P(\tilde{\Gamma}_{\mu}(M))=1$. This implies that $v^{*}$ is also a minimizer
for the problem $\textbf{(P1)}$ when $\sX=\sY=\R^{n}$ and
$\rho=\|x-y\|^{2}$.\qed

\subsection{\textbf{Proof of Theorem \ref{thconv}}}
\label{app3a}

From the proof of Theorem~\ref{thmopt} recall the set
$\hat{\Gamma}_{\mu}(M)$ of probability measures which solve the
optimal mass transport problem for fixed input marginal $\mu$ and some
output marginal $\psi_0$ in $\mathcal{P}_{M}(\sY)$. It is known that if $\mu$ admits a density and
$\rho(x,y)=\|x-y\|^2$, then each $v\in \hat{\Gamma}_{\mu}(M)$ is in
the form $v(dx\,dy)=\mu(dx)\delta_{q(x)}(dy)$ for some $q\in \Q_{M,{\rm c}}$
(see, e.g. \cite[Theorem~1]{McMo99}). Thus in this case $
\hat{\Gamma}_{\mu}(M)\subset \Gamma_{\mu}(M)$, which implies that
$\hat{\Gamma}_{\mu,\psi}^{\rm R}(M)\subset
\Gamma_{\mu,\psi}^{{\rm {R,c}}}(M)\subset
\Gamma_{\mu,\psi}^{\rm R}(M) $. Recall the problem $\textbf{(P2)}$ in the proof of Theorem \ref{thmopt}. It was shown that $\textbf{(P2)}$ has a minimizer $v^{*}$. It is clear from the previous discussion that  $v^{*}$ is obtained by randomizing over the set of quantizers having
convex codecells represented by $\hat{\Gamma}_{\mu}(M)$. On the other hand, $v^{*}$ is also a minimizer  for the problem $\textbf{(P1)}$ by Proposition \ref{prop5} in the proof of Theorem \ref{thmopt}.
\qed

\subsection{\textbf{Proof of Theorem~\ref{thm_finite} }}
\label{app4}

Recall the continuous mapping $h: \P(\Gamma_{\mu}(M)) \to
\Gamma_{\mu}^{\rm R}(M)$ defined in \eqref{eq_hmap}. Let
$\P_{F}(\Gamma_{\mu}(M))$ denote the set of probability measures on
$\Gamma_{\mu}(M)$ having finite support. Clearly
$h(\P_{F}(\Gamma_{\mu}(M)))=\Gamma_{\mu}^{{\rm FR}}(M)$.

\begin{lemma}
$\Gamma_{\mu}^{{\rm FR}}(M)$ is dense in $\Gamma_{\mu}^{\rm R}(M)$.
\label{lemma7}
\end{lemma}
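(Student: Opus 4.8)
The plan is to reduce the lemma to the classical fact that finitely supported probability measures are weakly dense in the space of all probability measures, and then to transport this density through the continuous map $h$ of \eqref{eq_hmap}. Recall from the discussion preceding this lemma that $h$ maps $\P(\Gamma_{\mu}(M))$ onto $\Gamma_{\mu}^{\rm R}(M)$ and maps the finitely supported measures $\P_{F}(\Gamma_{\mu}(M))$ onto $\Gamma_{\mu}^{\rm FR}(M)$, and that $h$ is continuous by Lemma~\ref{lemma6}. Hence it suffices to prove that $\P_{F}(\Gamma_{\mu}(M))$ is weakly dense in $\P(\Gamma_{\mu}(M))$: given an arbitrary $v_{P}=h(P)\in\Gamma_{\mu}^{\rm R}(M)$, choose $P_{n}\in\P_{F}(\Gamma_{\mu}(M))$ with $P_{n}\to P$ weakly; then $h(P_{n})\in\Gamma_{\mu}^{\rm FR}(M)$ and $h(P_{n})\to h(P)=v_{P}$ weakly by the continuity of $h$, so $v_{P}$ lies in the weak closure of $\Gamma_{\mu}^{\rm FR}(M)$.

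For the density of $\P_{F}(E)$ in $\P(E)$ with $E=\Gamma_{\mu}(M)$, the point to be careful about is that $\Gamma_{\mu}(M)$ need not be closed in $\P(\sX\times\sY)$ (recall that $\Gamma_{\mu}(M)$ is not closed, \cite{YuLi12}), hence need not be Polish; but $\Gamma_{\mu}(M)$ is a subset of the Polish space $\P(\sX\times\sY)$, so in the relative Prokhorov metric it is a separable metric space, and that is all one needs. For any separable metric space $E$, fixing a countable dense set $\{e_{k}\}\subset E$, the probability measures of the form $\sum_{i=1}^{N}q_{i}\,\delta_{e_{k_i}}$ with $N\ge1$ and rational weights $q_{i}\ge 0$, $\sum_{i}q_{i}=1$, constitute a countable weakly dense subset of $\P(E)$ (see, e.g., \cite[Theorem~6.3]{Par67}, or \cite{Bil99}). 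Applying this with $E=\Gamma_{\mu}(M)$ gives the required density and, combined with the reduction above, completes the proof.

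The argument is essentially routine once the reduction via $h$ is made; the only substantive point is the one just noted, namely that the possible non-completeness of $\Gamma_{\mu}(M)$ is harmless, because weak density of finitely supported measures requires only separability of the underlying metric space, which $\Gamma_{\mu}(M)$ inherits from $\P(\sX\times\sY)$.
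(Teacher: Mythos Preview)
Your proof is correct and follows essentially the same approach as the paper: both reduce to the density of $\P_{F}(\Gamma_{\mu}(M))$ in $\P(\Gamma_{\mu}(M))$ via \cite[Theorem~6.3]{Par67} (using only separability of $\Gamma_{\mu}(M)$), and then push this density forward through the continuous surjection $h$ of \eqref{eq_hmap}. Your additional remark that completeness of $\Gamma_{\mu}(M)$ is not needed is a helpful clarification, but the argument is otherwise identical.
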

\begin{proof}
  Since $\Gamma_{\mu}(M)$ is a separable metric space, $\P_{F}(\Gamma_{\mu}(M))$
  is dense in $\P(\Gamma_{\mu}(M))$ by \cite[Theorem 6.3]{Par67}. Since
  $\Gamma_{\mu}^{{\rm FR}}(M)$ is the image of a $\P_{F}(\Gamma_{\mu}(M))$ under the
  continuous function $h$ which maps $\P(\Gamma_{\mu}(M))$ onto 
  $\Gamma_{\mu}^{\rm R}(M)$, it is dense in $\Gamma_{\mu}^{\rm R}(M)$.
\end{proof}

Recall that the Prokhorov metric on $\P(\sE)$, where $(\sE,d)$ is a metric
space, is defined as \cite{Bil99}
\begin{eqnarray}
\label{eq:prokh}
  \!\!\!\!\!   \!\!\!\!d_{P}(v,\nu) \!\!\!\!
&= & \!\!\!\!\inf\big\{\alpha : v(A)\leq \nu(A^{\alpha})+\alpha, \nonumber  \\
  &&\quad   \nu(A)\leq
v(A^{\alpha})+\alpha \text{ for all }A \in \B(\sE)\big \} 
\end{eqnarray}
where 
\[
A^{\alpha}=\Big\{e\in \sE: \inf_{e'\in A}
d(e,e')<\alpha\Big\}.
\]
Hence for $v,\nu\in \P(\sX\times \sY)$, 
\begin{eqnarray*}
  d_{P}(v,\nu) &\geq & \inf\big\{\alpha: v(\sX\times B)\leq \nu((\sX\times
  B)^{\alpha})+\alpha,  \\
& &   \qquad \nu(\sX\times B)\leq v((\sX\times B)^{\alpha})+\alpha, B \in
\B(\sY) \big\}  \\ 
&=&d_{P}\big(v(\sX\times\,\cdot\,),\nu(\sX\times\,\cdot\,)\big) 
\end{eqnarray*}
(note that  $(\sX\times B)^{\alpha}=\sX\times B^{\alpha}$). This implies
\begin{eqnarray}
G^{\alpha}_{\psi} & := & \{v \in \P(\sX\times \sY): v(\sX\times\,\cdot\,)\in
B(\psi,\alpha)\} \nonumber  \\
&\supset & \{v \in \P(\sX\times \sY): d_{P}(\hat{v},v)<\alpha\},
\label{neweq6}
\end{eqnarray}
where $\hat{v}$ is such that $\hat{v}(\sX\times\,\cdot\,)=\psi$ and $\alpha>0$ .
Recall that given a metric space $\sE$ and $A\subset \sE$, a  set $B\subset A$
is relatively open in $A$ if  $B=A\cap U$ for some open set
$U\subset \sE$.  

\begin{lemma}
$\M^{\delta}_{\mu,\psi}$ is relatively open in $\Gamma_{\mu}^{\rm R}(M)$.
\label{lemma9}
\end{lemma}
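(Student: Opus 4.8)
The plan is to exhibit $\M^{\delta}_{\mu,\psi}$ as the intersection of $\Gamma_{\mu}^{\rm R}(M)$ with an open subset of $\P(\sX\times\sY)$, namely the set $G^{\delta}_{\psi}$ appearing in \eqref{neweq6}. By the very definition of $\M^{\delta}_{\mu,\psi}$,
\[
\M^{\delta}_{\mu,\psi}=\bigl\{v\in\Gamma_{\mu}^{\rm R}(M): v(\sX\times\,\cdot\,)\in B(\psi,\delta)\bigr\}=\Gamma_{\mu}^{\rm R}(M)\cap G^{\delta}_{\psi},
\]
so it suffices to prove that $G^{\delta}_{\psi}$ is open in $\P(\sX\times\sY)$.

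To this end I would consider the $\sY$-marginal projection $\Pi_{\sY}:\P(\sX\times\sY)\to\P(\sY)$ given by $\Pi_{\sY}(v)=v(\sX\times\,\cdot\,)$, so that $G^{\delta}_{\psi}=\Pi_{\sY}^{-1}\bigl(B(\psi,\delta)\bigr)$. The first step is to check that $\Pi_{\sY}$ is continuous for the weak topologies: if $v_{n}\to v$ weakly in $\P(\sX\times\sY)$, then for every bounded continuous $f:\sY\to\R$ the function $(x,y)\mapsto f(y)$ is bounded and continuous on $\sX\times\sY$, whence $\int f\,d\Pi_{\sY}(v_{n})=\int f(y)\,v_{n}(dx\,dy)\to\int f(y)\,v(dx\,dy)=\int f\,d\Pi_{\sY}(v)$; thus $\Pi_{\sY}(v_{n})\to\Pi_{\sY}(v)$ weakly. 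Since $B(\psi,\delta)$ is an open ball in the Prokhorov metric, which metrizes the weak topology on $\P(\sY)$ \cite{Bil99}, it is open in $\P(\sY)$, and therefore its preimage $G^{\delta}_{\psi}$ under the continuous map $\Pi_{\sY}$ is open in $\P(\sX\times\sY)$.

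Combining the two observations, $\M^{\delta}_{\mu,\psi}=\Gamma_{\mu}^{\rm R}(M)\cap U$ with $U=G^{\delta}_{\psi}$ open in $\P(\sX\times\sY)$, which is exactly the assertion that $\M^{\delta}_{\mu,\psi}$ is relatively open in $\Gamma_{\mu}^{\rm R}(M)$. There is no substantial obstacle in this argument; the only point requiring care is the weak continuity of the marginal projection, and that follows immediately by testing convergence against functions depending on $y$ alone.
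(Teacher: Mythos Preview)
Your proof is correct and follows the same overall strategy as the paper: write $\M^{\delta}_{\mu,\psi}=\Gamma_{\mu}^{\rm R}(M)\cap G^{\delta}_{\psi}$ and show that $G^{\delta}_{\psi}$ is open in $\P(\sX\times\sY)$. The only difference is in how the openness of $G^{\delta}_{\psi}$ is established. The paper works directly with the Prokhorov metric, using the inequality \eqref{neweq6}, namely $d_{P}(v,\nu)\ge d_{P}\bigl(v(\sX\times\,\cdot\,),\nu(\sX\times\,\cdot\,)\bigr)$, to show that a Prokhorov ball around any $\tilde{v}\in G^{\delta}_{\psi}$ stays inside $G^{\delta}_{\psi}$. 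You instead observe that the $\sY$-marginal map $\Pi_{\sY}$ is weakly continuous and that $G^{\delta}_{\psi}=\Pi_{\sY}^{-1}(B(\psi,\delta))$ is the preimage of an open set. Your argument is slightly more streamlined and makes the role of the weak topology explicit; the paper's argument is a metric version of the same idea (indeed, \eqref{neweq6} says precisely that $\Pi_{\sY}$ is $1$-Lipschitz for the Prokhorov metric, which is stronger than continuity). Either route is entirely adequate here.
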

\begin{proof}
Since $\M^{\delta}_{\mu,\psi}=G^{\delta}_{\psi}\cap \Gamma_{\mu}^{\rm R}(M)$, it is
enough to prove that  $G^{\delta}_{\psi}$ is open in $\P(\sX\times\sY)$. Let
$\tilde{v}\in G^{\delta}_{\psi}$. Then $\tilde{v}(\sX\times\,\cdot\,)\in
B(\psi,\delta)$ by definition, and there exists $\delta_{0}> 0$ such that $B(\tilde{v}(\sX\times\,\cdot\,),\delta_{0}) \subset B(\psi,\delta)$. By (\ref{neweq6}) we have
\begin{align}
\big\{v\in\P(\sX\times\sY): d_{P}(\tilde{v},v)<\delta_{0}\big\}\subset
G^{\delta_{0}}_{v(\sX\times\,\cdot\,)} \, . 
\end{align}
We also have $G^{\delta_{0}}_{v(\sX\times\,\cdot\,)}\subset G^{\delta}_{\psi}$ since $B(\tilde{v}(\sX\times\,\cdot\,),\delta_{0}) \subset B(\psi,\delta)$. This implies that $G^{\delta}_{\psi}$ is open in $\P(\sX\times\sY)$.
\end{proof}

\hspace{-0.4cm}\textit{I) \textbf{Case 1}}

First we treat the case $L(v)>\inf_{v'\in \Gamma_{\mu}(M)} L(v')$. If $\rho$ is continuous
and bounded, then $L$ is continuous.  Hence, $\{v'\in\Gamma_{\mu}^{\rm R}(M):
L(v')<L(v)\}$ is relatively open in $\Gamma_{\mu}^{\rm R}(M)$. Define
$F:=\{v'\in\Gamma_{\mu}^{\rm R}(M): L(v')<L(v)\}$. 

\begin{lemma}
$F\cap \M_{\mu,\psi}^{\delta}$ is \emph{nonempty} and relatively open in $\Gamma_{\mu}^{\rm R}(M)$.
\label{lemma10}
\end{lemma}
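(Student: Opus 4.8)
The plan is to dispose of the two assertions separately, the openness being essentially free and the nonemptiness requiring a one-parameter perturbation argument.

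For relative openness: it was already observed just above the lemma that, since $\rho$ is continuous and bounded, $L$ is continuous on $\P(\sX\times\sY)$ and hence $F=\{v'\in\Gamma_{\mu}^{\rm R}(M):L(v')<L(v)\}$ is relatively open in $\Gamma_{\mu}^{\rm R}(M)$. By Lemma~\ref{lemma9}, $\M^{\delta}_{\mu,\psi}$ is relatively open in $\Gamma_{\mu}^{\rm R}(M)$ as well. An intersection of two relatively open subsets of $\Gamma_{\mu}^{\rm R}(M)$ is relatively open, which gives that half of the claim.

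For nonemptiness, I would perturb the given $v\in\M^{\delta}_{\mu,\psi}$ toward a good deterministic quantizer. Since we are in Case~1, $L(v)>\inf_{v'\in\Gamma_{\mu}(M)}L(v')$, so there is a quantizer $q\in\Q_M$, equivalently a point $v_0\in\Gamma_{\mu}(M)$ with $v_0(dx\,dy)=\mu(dx)\delta_{q(x)}(dy)$, satisfying $L(v_0)<L(v)$; note $L(v_0)<\infty$ automatically because $\rho$ is bounded. For $\lambda\in(0,1]$ put $v_\lambda=(1-\lambda)v+\lambda v_0$. First, $v_\lambda\in\Gamma_{\mu}^{\rm R}(M)$: writing $v=v_P$ for some $P\in\P_{0}(\Gamma_{\mu}(M))$, one has $v_\lambda=v_{P_\lambda}$ with $P_\lambda=(1-\lambda)P+\lambda\delta_{v_0}\in\P_{0}(\Gamma_{\mu}(M))$, so $\Gamma_{\mu}^{\rm R}(M)$ is convex and $v_\lambda$ lies in it. Second, since $v\mapsto L(v)=\int\rho\,dv$ is affine, $L(v_\lambda)=L(v)-\lambda\bigl(L(v)-L(v_0)\bigr)<L(v)$, so $v_\lambda\in F$ for every $\lambda\in(0,1]$. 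Third, testing against bounded continuous functions shows $v_\lambda\to v$ weakly as $\lambda\downarrow 0$, hence the $\sY$-marginal $v_\lambda(\sX\times\,\cdot\,)$ converges in the Prokhorov metric to $v(\sX\times\,\cdot\,)$, which lies in the open ball $B(\psi,\delta)$; therefore there is $\lambda_0\in(0,1]$ such that $v_\lambda(\sX\times\,\cdot\,)\in B(\psi,\delta)$, i.e.\ $v_\lambda\in\M^{\delta}_{\mu,\psi}$, for all $\lambda\in(0,\lambda_0)$. Any such $\lambda$ yields $v_\lambda\in F\cap\M^{\delta}_{\mu,\psi}$, so this set is nonempty.

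I do not anticipate a real obstacle. The only points needing a word of justification are the convexity of $\Gamma_{\mu}^{\rm R}(M)$, which is immediate from the mixture representation \eqref{neweq4}, and the fact that a sufficiently small mixture remains inside $B(\psi,\delta)$, which follows from the openness of the ball together with the continuity of the $\sY$-marginal map for the weak topology.
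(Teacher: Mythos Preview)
Your proposal is correct and follows essentially the same approach as the paper's own proof: the paper likewise takes $\tilde{v}\in\Gamma_{\mu}(M)$ with $L(\tilde{v})<L(v)$, forms the mixtures $v_n=\frac{1}{n}\tilde{v}+(1-\frac{1}{n})v$, uses the affinity of $L$ to get $L(v_n)<L(v)$, and uses $v_n\to v$ weakly together with the relative openness of $\M^{\delta}_{\mu,\psi}$ to conclude $v_n\in\M^{\delta}_{\mu,\psi}$ for large $n$. The only cosmetic difference is that you use a continuous parameter $\lambda$ and spell out the convexity of $\Gamma_{\mu}^{\rm R}(M)$ and the continuity of the $\sY$-marginal map, but the argument is the same.
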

\begin{proof}
  By Lemma \ref{lemma9} and the above discussion the intersection is clearly
  relatively open in $\Gamma_{\mu}^{\rm R}(M)$, so we need to show that it is not
  empty.  Since $L(v)>\inf_{v'\in \Gamma_{\mu}(M)} L(v')$, there exists
  $\tilde{v}\in\Gamma_{\mu}(M)$ such that $L(\tilde{v})<L(v)$. Define the
  sequence of randomized quantizers $\{v_{n}\}\in\Gamma_{\mu}^{\rm R}(M)$ by letting
  $v_{n}=\frac{1}{n}\tilde{v}+(1-\frac{1}{n})v$. Then, $v_{n}\rightarrow v$
  weakly because for any  continuous and bounded real function $f$ on $\sX\times
  \sY$  
\begin{eqnarray*}
\lefteqn{ \lim_{n\rightarrow \infty}\left|\int_{\sX\times \sY}f dv_{n}-\int_{\sX\times
    \sY}f dv\right|  }\\ 
&=&\lim_{n\rightarrow \infty}\frac{1}{n}\left|\int_{\sX\times \sY}f
  d\tilde{v}-\int_{\sX\times \sY}f dv\right| =0. 
\end{eqnarray*}
Hence there exists $n_{0}$ such that $v_{n}\in M^{\delta}_{\mu,\psi}$ for all $n\geq n_{0}$. On the other hand, for any $n$
\begin{align}
L(v_{n})&=L\left(\frac{1}{n}\tilde{v}+\Big(1-\frac{1}{n}\Big)v\right) \nonumber \\
&=\frac{1}{n}L(\tilde{v})+\Big(1-\frac{1}{n}\Big)L(v)\nonumber \\
&<L(v) \nonumber.
\end{align}
This implies $v_{n}\in \M^{\delta}_{\mu,\psi}\cap F$ for all $n\geq n_{0}$, completing the proof.
\end{proof}
Hence, we can conclude that there exists finitely randomized quantizer $v_{F}\in
F\cap M_{\mu,\psi}^{\delta}$ by Lemmas \ref{lemma7} and \ref{lemma10}. By the
definition of $F$ we also have $L(v_{F})<L(v)$. This completes the proof of the
theorem  for this case.

\hspace{-0.4cm}\textit{II) \textbf{Case 2}}

The case   $L(v)=\inf_{v'\in \Gamma_{\mu}(M)} L(v'):=L^*$ is handled
similarly. Define the subset of $\Gamma_{\mu}(M)$ whose elements correspond to optimal quantizers:
\begin{align}
\Gamma_{\mu,{\rm opt}}(M)=\{v^{'}\in\Gamma_{\mu}(M): L(v^{'})=L^{*}\}\nonumber.
\end{align}
Define $\Gamma_{\mu,\rm{ opt}}(M)=L^{-1}(L^{*})\cap\Gamma_{\mu}(M)$ and let
$\Gamma_{\mu,\rm{ opt}}^{\rm R}(M)$ be the randomization of $\Gamma_{\mu,\rm{ opt}}(M)$, obtained
by replacing $\Gamma_{\mu}(M)$ with $\Gamma_{\mu,{\rm opt}}(M)$ in
(\ref{neweq4}). Note that if $L(v)=L^{*}$, then $v$ is obtained by randomizing
over the set $\Gamma_{\mu,{\rm opt}}(M)$, i.e., $v\in\Gamma_{\mu,{\rm opt}}^{\rm R}(M)$. Let
$\Gamma_{\mu,{\rm opt}}^{{\rm FR}}(M)$ denote the set obtained by the finite randomization
of $\Gamma_{\mu,{\rm opt}}(M)$. By using the same proof method as in Lemma
\ref{lemma7} we can prove that $\Gamma_{\mu,{\rm opt}}^{{\rm FR}}(M)$ is dense in
$\Gamma_{\mu,{\rm opt}}^{\rm R}(M)$. In addition, $\M_{\mu,\psi}^{\delta}$ is relatively
open in $\Gamma_{\mu,{\rm opt}}^{\rm R}(M)$ by Lemma \ref{lemma9}. Thus, there exists
finitely randomized quantizer $v_{F}\in
\M_{\mu,\psi}^{\delta}\cap\Gamma_{\mu,{\rm opt}}^{\rm R}(M)$ with
$L(v_{F})=L(v)=L^{*}$. This completes the proof of Theorem~\ref{thm_finite}.\qed

\subsection{\textbf{Proof of Theorem~\ref{thm_finarb}}}
\label{app4a}

Let $\hat{v}\in \M_{\mu,\psi}^{\delta}$ be such that $L(\hat{v})<\inf_{v\in
  \M^{\delta}_{\mu,\psi}}L(v)+\varepsilon/2$. Let $\hat{P}$ be the probability
measure on $\Gamma_{\mu}(M)$ that induces $\hat{v}$,
i.e., $\hat{v}=\int_{\Gamma_{\mu}(M)} v \hat{P}(dv)$. Consider a sequence of
independent and identically distributed (i.i.d$.$) random variables
$X_1,X_1,\ldots,X_n,\ldots$ defined on some probability space
$(\Omega,\mathcal{F},\gamma)$ which take values in
$\bigl(\Gamma_{\mu}(M),\B(\Gamma_{\mu}(M)) \bigr)$ and have common distribution
$\hat{P}$. Then $L(X_{1}), L(X_{2}), \ldots$ are i.i.d.\ $\R$-valued random
variables with distribution $\hat{P}\circ L^{-1}$. Thus we have
\begin{align*}
\int_{\Omega}
L(X_{i}(\omega))\gamma(d\omega)&=\int_{\Gamma_{\mu}(M)}L(v)\hat{P}(dv)=L(\hat{v})\\
& <\inf_{v\in M^{\delta}_{\mu,\psi}}L(v)+\frac{\varepsilon}{2} \nonumber
\end{align*}
by assumption. The empirical measures $P_{n}^{\omega}$ on $\Gamma_{\mu}(M)$
corresponding to $X_1,\ldots,X_n$ are  
\begin{align}
P_{n}^{\omega}(\,\cdot\,):=\frac{1}{n}\sum_{i=1}^{n}\delta_{X_{i}(\omega)}(\,\cdot\,). \nonumber
\end{align}
By the strong law of large numbers
\begin{align}
\frac{1}{n}\sum_{i=1}^{n}L(X_{i})&=\int_{\Gamma_{\mu}(M)}L(v)P_{n}^{\omega}(dv)
  \nonumber \\*
 &\rightarrow\int_{\Gamma_{\mu}(M)}L(v)\hat{P}(dv)=L(\hat{v}) \label{eq6}
\end{align}
$\gamma$-a.s.  As a subset of $\P(\sX\times \sY)$, $\Gamma_{\mu}(M)$ with the
Prokhorov metric is a separable metric space, and thus by \cite[Theorem
11.4.1]{Dud89} we also have the almost sure convergence of empirical measures,
i.e., $P_{n}^{\omega}\rightarrow\hat{P}$ weakly $\gamma$-a.s. Thus there exists
$\hat{\omega}\in \Omega$ for which both convergence results hold. Define the
sequence of finitely randomized quantizers $\{v_{n}\}$ by
$v_{n}=\int_{\Gamma_{\mu}(M)}v P_{n}^{\hat{\omega}}(dv)$. By (\ref{eq6})
$L(v_{n})\rightarrow L(\hat{v})$ and by Lemma \ref{lemma6} in the proof of
Theorem \ref{thmopt} $v_{n}\rightarrow\hat{v}$ weakly. Since
$\M_{\mu,\psi}^{\delta}$ is a relatively open neighborhood of $\hat{v}$ in
$\Gamma_{\mu}^{\rm R}(M)$, we can find sufficiently large $n$ such that
$v_{n}\in \M_{\mu,\psi}^{\delta}$ and
$L(v_{n})<L(\hat{v})+\frac{\varepsilon}{2}$. Hence, for any $\varepsilon>0$
there exists an $\varepsilon$-optimal finitely randomized quantizer for
\textbf{(P3)}.  \qed

\subsection{\textbf{Proofs for Section~\ref{sec5}}}
\label{app5}

\begin{proof}[Proof of Lemma~\ref{alemma3}]
The proof uses standard notation for information quantities \cite{CoTh06}. 
  Let $X^n \sim \mu^n$, $Z \sim \nu$, and $Y^n=\sq(X^n,Z) \sim \psi^n$, where
  $(\sq,\nu)$ is an arbitrary Model~2 randomized quantizer with at most $2^{nR}$
  levels ($Z$ is independent of $X^n$). Let $D_i=E[\rho(X_i,Y_i)]$ and
  $D=\frac{1}{n}\sum_{i=1}^n D_i=E[\rho_n(X^n,Y^n)]$. Since $\sq(\,\cdot\,,z)$ has
  at most $2^{nR}$ levels for each $z$,
  \begin{align}
    nR \geq H(Y^n|Z) &\geq  I(X^n;Y^n|Z) \nonumber \\
    &\geq  I(X^n;Y^n) \label{aeq13} \\
    &\geq \sum_{i=1}^{n} I(X_i;Y_i) \label{aeq13a}  \\
    &\geq  \sum_{i=1}^{n} I_m(\mu\|\psi,D_i)  \nonumber \\
    &\geq n I_m(\mu\|\psi,D)  \nonumber
\end{align}
where in the last two inequalities follow since $Y_i\sim \psi$, $i=1,\ldots,n$
and $I_m(\mu\|\psi,D)$ is convex in $D$ \cite[Appendix A]{ZaRo01}. Inequalities
\eqref{aeq13} and \eqref{aeq13a} follow from the chain rule for mutual
information (Kolmogorov's formula) \cite[Corollary 7.14]{Gra11}, which in
particular implies that $I(U;V|W)\ge I(U;V)$ for general random variables $U$, $V$,
and $W$, defined on the same probability space, such that $U$ and $W$ are
independent.  This proves that  $R\ge I_{m}(\mu
\| \psi, D)$.
\end{proof}

\begin{proof}[Proof of Lemma~\ref{alemma1}]
Let $U^{2^{nR}}=\bigl(U^{n}(1),\ldots,U^{n}(2^{nR})\bigr)$ which is a $n2^{nR}$-vector. Then, we can write
\begin{align}
\hat{X}^{n}=g(X^{n},U^{2^{nR}}) \nonumber
\end{align}
for a function $g$ from $\sY^{n(2^{nR}+1)}$ to $\sY^{n}$. Observe the following:
\begin{itemize}
\item[\emph{(i)}] For any permutation $\sigma$ of $\{1,\ldots,n\}$, $X^{n}$ and
  $X_{\sigma}^{n}=\bigl(X_{\sigma(1)},\ldots,X_{\sigma(n)}\bigr)$ have the same
  distribution. The same issue is true for $U^{n}(i)$ and $U^{n}(i)_{\sigma}$
  for all $i$ because for any $u^n \in T_n(\psi_n)$, $u^{n}_{\sigma} \in
  T_n(\psi_n)$ and this mapping is a bijection on $T_n(\psi_n)$. It follows from
  the independence of $X^{n}$ and $U^{n}(i)$ that $(X^{n}, U^{nR})$ and
  $(X_{\sigma}^{n},U_{\sigma}^{2^{nR}})$ have the same distribution, where
  $U_{\sigma}^{2^{nR}}\coloneqq\bigl(U^{n}(1)_{\sigma},\ldots,U^{n}(2^{nR})_{\sigma}\bigr)$. Thus,
  $g(X^{n},U^{2^{nR}})$ and $g(X^{n}_{\sigma},U^{2^{nR}}_{\sigma})$ have
  the same distribution.
\item[\emph{(ii)}] For any $x^{n}\in\sX^{n}$ and $y^{n}\in\sY^{n}$, $\rho_n(x^{n},y^{n})=\rho_n(x_{\sigma}^{n},y_{\sigma}^{n})$. Thus, if $g$ outputs $u^{n}(i)$ for inputs $x^{n},u^{n}(1),\ldots,u^{n}(2^{nR})$, then $g$ outputs $u^{n}(i)_{\sigma}$ for inputs  $x^{n}_{\sigma},u^{n}(1)_{\sigma},\ldots,u^{n}(2^{nR})_{\sigma}$. It follows that
    \begin{align}
    g(X_{\sigma}^{n},U_{\sigma}^{2^{nR}})=g(X^{n},U^{2^{nR}})_{\sigma} \nonumber.
    \end{align}
    Together with $i)$ this implies that $\hat{X}^{n}$ and $\hat{X}^{n}_{\sigma}$ have the same distribution.
\end{itemize}
Let $u^{n}$ and $v^{n}\in T_{n}(\psi^{(n)}_{n})$ and so $u^{n}=v^{n}_{\sigma}$ for some permutation $\sigma$. Then \emph{(ii)} implies
\begin{align}
\Pr\{\hat{X}^{n}=u^{n}\}=\Pr\{\hat{X}^{n}_{\sigma}=u^{n}\}. \nonumber
\end{align}
Since $\Pr\{\hat{X}^{n}=v^{n}\}=\Pr\{\hat{X}^{n}_{\sigma}=v^{n}_{\sigma}\}$ and
$v^{n}_{\sigma}=u^{n}$, we obtain 
\begin{align}
\Pr\{\hat{X}^{n}=u^{n}\}=\Pr\{\hat{X}^{n}=v^{n}\} \nonumber
\end{align}
proving that $\hat{X}^{n}$ is uniform on $T_{n}(\psi^{(n)}_{n})$.
\end{proof}

\begin{proof}[Proof of Lemma~\ref{alemma2}] 
By \cite[Theorem 11.1.2]{CoTh06} we have
\begin{align}
  \frac{1}{n} \mathcal{D}(\psi^{(n)}_{n}\|\psi^{n})&=\frac{1}{n} \sum_{y^{n}\in T_{n}(\psi_{n})} \psi^{(n)}_{n}(y^{n}) \log\frac{\psi^{(n)}_{n}(y^{n})}{\psi^{n}(y^{n})} \nonumber \\
  &=\frac{1}{n} \log
  \frac{2^{n(H(\psi_{n})+\mathcal{D}(\psi_{n}\|\psi))}}{|T_{n}(\psi_{n})|}.
\end{align}
From  \cite[Theorem 11.1.3]{CoTh06},
\[
\frac{1}{(n+1)^{|\sX|}}2^{nH(\psi_{n})} \le |T_{n}(\psi_{n})| \le 2^{nH(\psi_{n})}
\]
and thus $  \frac{1}{n} \mathcal{D}(\psi^{(n)}_{n}\|\psi^{n}) $ is sandwiched 
between $\mathcal{D}(\psi_{n}\|\psi)$ and $\frac{|\sX|}{n}\log(n+1) +
\mathcal{D}(\psi_{n}\|\psi)$. Thus 
\[
\lim_{n\to \infty} \frac{1}{n} \mathcal{D}(\psi^{(n)}_{n}\|\psi^{n}) =
\lim_{n\to \infty} \mathcal{D}(\psi_{n}\|\psi) =0
\]
where the second limit holds since $\sX$ is a finite set and 
$\psi_n\rightarrow\psi$ in the $l_1$-distance. 
\end{proof}

\begin{proof}[Proof of Lemma~\ref{lem_coupling}]

\noindent Let $\rho^{H}$ denote the Hamming distortion and let
$\rho^H_n(x^n,y^n) = (1/n) \sum_{i=1}^n \rho^H(x_i,y_i)$. Since $\rho(x,x)=0$
for all $x\in \sX$, we  have 
\[
\rho_n(x^n,y^n) \leq \rho_{\rm max} \, \rho^H_n(x^n,y^n). 
\]
Let $T^H_n(\psi^{(n)}_{n},\psi^{n})$ be the distortion of the optimal coupling
between $\psi^{(n)}_{n}$ and $\psi^{n}$ when the cost function is
$\rho^H_n$. Then the above inequality gives
\[
\hat{T}_n(\psi^{(n)}_{n},\psi^{n})\leq  \rho_{\rm max} \,T^H_n(\psi^{(n)}_{n},\psi^{n}).
\]
On the other hand,  by  Marton's inequality \cite[Proposition~1]{Mar96}
\[
T^H_n(\psi^{(n)}_{n},\psi^{n})\leq
  \sqrt{\frac{1}{2n}D(\psi^{(n)}_{n}\|\psi^{n})} .
\]
Combining these bounds with $ \frac{1}{n}D(\psi^{(n)}_{n}\|\psi^{n}) \rightarrow
0$ (Lemma~\ref{alemma2}), we obtain
\begin{align}
\label{eq_tr}
\lim_{n\to \infty} \hat{T}_n(\psi^{(n)}_{n},\psi^{n}) =0
\end{align}
which is the first statement of the lemma.

Recall that $\rho(x,y)=d(x,y)^p$ for some $p>0$, where $d$ is a metric.  Let
$q=\max\{1,p\}$.  If $p\ge 1$, then   $\|V^n\|_p \coloneqq \big( E\big[\, \sum_{i=1}^n
|V_i|^p\, \big] \big)^{1/q}$ is a norm on $\R^n$-valued random vectors whose
components have finite $p$th moments, and if $1<p<0$, we still have 
$\|U^n+V^n\|_p\le \|U^n\|_p+\|V^n\|_p$. Thus we can upper bound $
E[\rho_n(X^{n},Y^{n})]$ as follows:
\begin{eqnarray*}
\lefteqn{ \biggl( E\biggl[ \frac{1}{n} \sum_{i=1}^n \rho(X_i,Y_i)   \biggr]
  \biggr)^{1/q}  } \\
   &=& \biggl( E\biggl[ \frac{1}{n} \sum_{i=1}^n d(X_i,Y_i)^p   \biggr] \biggr)^{1/q}
  \\*
&\le &  \biggl( E\biggl[ \frac{1}{n} \sum_{i=1}^n d(X_i,\hat{X}_i)^p   \biggr]
\biggr)^{1/q} \!\!\! + \biggl( E\biggl[ \frac{1}{n} \sum_{i=1}^n d(\hat{X}_i,Y_i)^p   \biggr] \biggr)^{1/q}
  \\
  &=  &  \Bigl(E[\rho_n(X^{n},\hat{X}^{n})]\Bigr)^{1/q} +
  \hat{T}_n(\psi^{(n)}_{n},\psi^{n})^{1/q}.
\end{eqnarray*}
Hence  \eqref{aeq4} and \eqref{eq_tr} imply 
\[
\limsup_{n\to \infty}  E[\rho_n(X^{n},Y^{n})] \le D
\]
as claimed. \end{proof}

\begin{proof}[Proof of Lemma~\ref{alemma4}] Let $X \sim \mu$ and $Y \sim
  \psi$ such that $I(X;Y)$ achieves $I_m(\mu\|\psi,D)<\infty$ at distortion
  level $D$ (the existence of such pair follows from an analogous statements for
  rate-distortion functions \cite{Csi74}) . Let $q_k$ denote the uniform
  quantizer on the interval $[-k,k]$ having $2^k$ levels, where we extend $q_k$
  to the real line by using the nearest neighborhood encoding rule. Let $X(k)
  =q_k(X)$ and
  $Y(k)=q_k(Y)$.  We clearly have
\begin{align}
E[(X-X(k))^2]\to 0,\;  E[(Y-Y(k))^2]\to 0 \text{ as $k\to \infty$}. 
\label{aeq14}
\end{align}
Let $\mu_k$ and $\psi_k$ denote the distributions of $X(k)$ and $Y(k)$,
respectively. Then by \cite[Theorem 6.9]{Vil09} it follows that
$\hat{T}_1(\mu_k,\mu)\to 0$ and $\hat{T}_1(\psi_k,\psi) \to 0$ as $k\rightarrow\infty$ since
$\mu_k\to \mu$, $\psi_k\to \psi$ weakly,  and
$E[X(k)^2]\to E[X^2]$,  $E[Y(k)^2]\to E[Y^2]$.

By the data processing inequality, we have for all $k$, 
\begin{align}
  I(X(k);Y(k)) \leq I(X;Y).
\label{aeq15}
\end{align}
Also note that  (\ref{aeq14}) implies 
\begin{eqnarray*}
\lefteqn{  \limsup_{k\to \infty} E\bigl[ \rho_1(X(k),Y(k))\bigr] }\qquad \\
& = & \limsup_{k\to \infty}
E\bigl[ \bigl( X(k)-Y(k)\bigr)^2 \bigr]  \le D.
\end{eqnarray*}
Thus, for given  $\varepsilon>0$,  if  $k$ is large we have
$I_m(\mu_k\|\psi_k,D+\varepsilon) \leq I_m(\mu\|\psi,D)$ as claimed. 
\end{proof}

\section*{Acknowledgement} The authors would like to  thank Ram Zamir for helpful
discussions concerning  the proof of Theorem~\ref{thm_rd}.  Naci Saldi would
like to thank Marcos Vasconcelos for discussions on optimal transport theory. 

\bibliographystyle{IEEEtran}
%\bibliography{references_naci}

% Generated by IEEEtran.bst, version: 1.13 (2008/09/30)

\end{document}